\theoremstyle{plain}
\newtheorem{theorem}{Theorem}
\newtheorem{lemma}{Lemma}  
\newtheorem{corollary}[theorem]{Corollary}  
\newtheorem{fact}{Fact}
\theoremstyle{definition}
\newtheorem{definition}{Definition}
\newtheorem{example}{Example}
\newtheorem{remark}[definition]{Remark}
\newtheorem{conjecture}{Conjecture}
\title{Elastic-Degenerate String Matching\\ via Fast Matrix Multiplication}
\author[1]{Giulia Bernardini}
\author[2]{Paweł Gawrychowski}
\author[3]{Nadia Pisanti}
\author[4]{Solon P. Pissis}
\author[5]{Giovanna Rosone}
\affil[1]{Department of Informatics, Systems and Communication (DISCo), University of Milan - Bicocca, Italy\\
\texttt{giulia.bernardini@unimib.it}}
\affil[2]{Institute of Computer Science, University of Wroc\l{}aw, Poland\\
\texttt{gawry@cs.uni.wroc.pl}}
\affil[3]{Department of Computer Science, University of Pisa, Italy and ERABLE Team, INRIA, France\\
\texttt{pisanti@di.unipi.it}}
\affil[4]{CWI, Amsterdam, The Netherlands\\ 
\texttt{solon.pissis@cwi.nl}}
\affil[5]{Department of Computer Science, University of Pisa, Italy\\
\texttt{giovanna.rosone@unipi.it}}
\newcommand{\rvline}{\hspace*{-2\arraycolsep}\vline\hspace*{-\arraycolsep}}
\newcommand*{\xdash}[1][3em]{\rule[0.5ex]{#1}{0.55pt}}
\newcommand{\N}{\mathcal{N}}
\newcommand{\cO}{\mathcal{O}}
\newcommand{\ctO}{\tilde{\mathcal{O}}}
  \def\dd{\mathinner{.\,.}}
   \newcommand{\defproblem}[3]{
  \vspace{2mm}
\noindent\fbox{
  \begin{minipage}{0.96\textwidth}
  #1\\
  {\bf{INPUT:}} #2  \\
  {\bf{OUTPUT:}} #3
  \end{minipage}
  }
  \vspace{2mm}
}
\DeclareMathOperator{\per}{per}
\DeclareMathOperator{\nodestring}{\mathcal{L}}
\DeclareMathOperator{\wordroot}{root}
\begin{document}
\date{}
\maketitle

\begin{abstract}
An elastic-degenerate (ED) string is a sequence of $n$ sets of strings of total length $N$, which was recently proposed to model a set of similar sequences. The ED string matching (EDSM) problem is to find all occurrences of a pattern of length $m$ in an ED text. The EDSM problem has recently received some attention in the combinatorial pattern matching community, and an $\cO(nm^{1.5}\sqrt{\log m} + N)$-time algorithm is known [Aoyama et al., CPM 2018].
The standard assumption in the prior work on this question is that $N$ is substantially larger than both $n$ and $m$, and thus we would like to have a linear dependency on the former. Under this assumption, the natural open problem is whether we can decrease the 1.5 exponent in the time complexity,
similarly as in the related (but, to the best of our knowledge, not equivalent) {\em word break} problem [Backurs and Indyk, FOCS 2016].

Our starting point is a conditional lower bound for the EDSM problem. We use the popular combinatorial Boolean Matrix Multiplication (BMM) conjecture stating that there is no truly subcubic \emph{combinatorial} algorithm for BMM [Abboud and Williams, FOCS 2014].
By designing an appropriate reduction we show that a combinatorial algorithm solving the EDSM problem in $\cO(nm^{1.5-\epsilon} + N)$ time,
for any $\epsilon>0$, refutes this conjecture. 
Our reduction should be understood as an indication that decreasing the exponent requires fast matrix multiplication.

String periodicity and fast Fourier transform are two standard tools in string algorithms. Our main technical contribution\footnote{A preliminary version of this work appeared in ICALP 2019~\cite{bernardini_et_al:LIPIcs:2019:10597}.} is that we successfully combine these tools with fast matrix multiplication to design 
a non-combinatorial $\ctO(nm^{\omega-1}+N)$-time algorithm for EDSM, where $\omega$ denotes the matrix multiplication exponent and the $\ctO(\cdot)$ notation suppresses polylog factors. To the best of our knowledge, we are the first to combine these tools. 
In particular, using the fact that $\omega<2.373$ [Le Gall, ISSAC 2014; Williams, STOC 2012], we obtain an
$\cO(nm^{1.373} + N)$-time algorithm for EDSM. 
An important building block in our solution, that might find applications in other problems, is a method of selecting a small set of length-$\ell$ substrings of the pattern, called anchors, so that any occurrence of a string from an ED text set contains at least one but not too many such anchors inside. 
\end{abstract}

\thispagestyle{empty}
\clearpage
\setcounter{page}{1}

\section{Introduction}\label{sec:intro}

Boolean matrix multiplication (BMM) is one of the most fundamental computational problems. Apart from its theoretical interest, it has a wide range of applications~\cite{Itai:1977:FMC:800105.803390,Valiant:1975:GCR:1739932.1740048,Furman,Fischer:1971:BMM:1446293.1446319,Munro:1971:EDT:2598952.2599354}. BMM is also the core combinatorial part of integer matrix multiplication. In both problems, we are given two $\N \!\times\! \N$ matrices and we are to compute $\N^2$ values. Integer matrix multiplication can be performed in {\em truly subcubic} time, i.e., in $\cO(\N^{3-\epsilon})$ operations over the field, for some $\epsilon\!>\!0$. The fastest known algorithms for this problem run in $\cO(\N^{2.373})$ time~\cite{DBLP:conf/issac/Gall14a,DBLP:conf/stoc/Williams12}. These algorithms are known as algebraic: they rely on the ring structure of matrices over the field.

There also exists a different family of algorithms for the BMM problem known as  combinatorial. Their focus is on unveiling the combinatorial structure in the Boolean matrices to reduce redundant computations. A series of results~\cite{Arlazaroff:Economical,Bansal:2009:RLC:1747597.1748007,Chan:2015:SUF:2722129.2722145} culminating in an 
$\hat{\cO}(\N^3/\log^{4}\N)$-time algorithm~\cite{YU2018240,Yuicalp} (the $\hat{\cO}(\cdot)$ notation suppresses polyloglog factors) has led to the popular combinatorial BMM conjecture stating that there is no combinatorial algorithm for BMM working in time $\cO(\N^{3-\epsilon})$, for any $\epsilon\!>\!0$~\cite{DBLP:conf/focs/AbboudW14}.
There has been ample work on applying this conjecture to obtain BMM hardness results: see , e.g.,~\cite{Lee:2002:FCG:505241.505242,DBLP:conf/focs/AbboudW14,
DBLP:conf/esa/RodittyZ04,
Henzinger:2015,
DBLP:journals/tcs/LarsenMNT15,
kopelowitz_et_al:2016,
Chatterjee:2017}.

String matching is another fundamental problem, asking to
find all fragments of a string text of length $n$ that match a string pattern of length $m$. This problem has several linear-time solutions~\cite{DBLP:books/daglib/0020103}. In many real-world applications, it is often the case that letters at some positions are either unknown or uncertain. A way of representing these positions is with a
subset of the alphabet $\Sigma$. Such a representation is called {\em degenerate string}. A special case of a degenerate string is when at such unknown or uncertain positions the only subset of the alphabet allowed is the whole alphabet. These special degenerate strings are more commonly known as strings with wildcards. The first efficient algorithm for a text and a pattern, where both may contain wildcards, was published by Fischer and Paterson in 1974~\cite{Fischer-Paterson}. It has undergone several improvements since then~\cite{DBLP:conf/focs/Indyk98a,DBLP:conf/soda/Kalai02a,DBLP:conf/stoc/ColeH02,DBLP:journals/ipl/CliffordC07}. The first efficient algorithm for a standard text and a degenerate pattern, which may contain any non-empty subset of the alphabet, was published by Abrahamson in 1987~\cite{Abrahamson:1987:GSM:37185.37191}, followed by several practically efficient algorithms~\cite{agrep,DBLP:journals/spe/Navarro01,DBLP:journals/jda/HolubSW08}.

 Degenerate letters are used in the IUPAC notation~\cite{IUPAC} to represent a position in a DNA sequence that can have multiple possible alternatives. These are used to encode the consensus of a population of sequences~\cite{PanGenomeConsortium18,wabi18,IndetCPM20} in a multiple sequence alignment (MSA). In the presence of insertions or deletions in the MSA, we may need to consider alternative representations. Consider the following MSA of three closely-related sequences (on the left):
 
     \begin{minipage}{0.25\textwidth}
\begin{enumerate}
	\item[] \texttt{GCA{\color{red}A}C{\color{red}G}GG{\color{red}TA{-}{-}}TT}
	\item[] \texttt{GCA{\color{red}A}C{\color{red}G}GG{\color{red}TATA}TT}
	\item[] \texttt{GCA{\color{red}C}C{\color{red}T}GG{\color{red}{-}{-}{-}{-}}TT}
\end{enumerate}
    \end{minipage}%
    \begin{minipage}{0.7\textwidth}
        \centering
\[ 
\tilde{T}= \left  \{ 
  \begin{tabular}{c}
  \texttt{GCA} 
  \end{tabular}
\right \}
\cdot 
\left  \{ {\color{red}
  \begin{tabular}{c}
  \texttt{A} \\
  \texttt{C} 
  \end{tabular}}
\right \}
\cdot 
\left  \{ 
  \begin{tabular}{c}
  \texttt{C} 
  \end{tabular}
\right \}
\cdot 
\left  \{ {\color{red}
  \begin{tabular}{c}
  \texttt{G}\\
  \texttt{T}
  \end{tabular}}
\right \}
\cdot 
\left  \{ 
  \begin{tabular}{c}
  \texttt{GG} 
  \end{tabular}
\right \}
\cdot 
\left  \{ {\color{red}
  \begin{tabular}{c}
  \texttt{TA}\\ 
  \texttt{TATA}\\
  \texttt{$\varepsilon$} 
  \end{tabular}}
\right \}
\cdot 
\left  \{ 
  \begin{tabular}{c}
  \texttt{TT}\\
  \end{tabular}
\right \}
\]
    \end{minipage}

\noindent These sequences can be compacted into a single sequence $\tilde{T}$ of sets of strings (on the right) containing some deterministic and some non-deterministic
segments. 
A non-deterministic segment is a finite set of deterministic strings and may contain the empty string $\varepsilon$ corresponding to a deletion. The total number of segments is the {\em length} of $\tilde{T}$ and the total number of letters is the {\em size} of $\tilde{T}$. We denote the length by $n=|\tilde{T}|$ and the size by $N=||\tilde{T}||$.

This representation has been defined in~\cite{DBLP:conf/lata/IliopoulosKP17} by Iliopoulos et al.~as an {\em elastic-degenerate} (ED) string. Being a sequence of subsets of $\Sigma^*$, it can be seen as a generalization of a degenerate string. The natural problem that arises is finding all matches of a deterministic pattern $P$ in an ED text $\tilde{T}$. This is the {\em elastic-degenerate string matching} (EDSM) problem. Since its introduction in 2017~\cite{DBLP:conf/lata/IliopoulosKP17}, it has attracted some attention in the combinatorial pattern matching community, and a series of results have been published. The simple algorithm by Iliopoulos et al.~\cite{DBLP:conf/lata/IliopoulosKP17} for EDSM was first improved by Grossi et al.~in the same year, who showed that, for a pattern of length $m$, the EDSM problem can be solved {\em on-line} in $\cO(nm^2+N)$ time~\cite{DBLP:conf/cpm/GrossiILPPRRVV17}; on-line means that it reads the text segment-by-segment and reports an occurrence as soon as this is detected.
This result was improved by Aoyama et al.~\cite{DBLP:conf/cpm/AoyamaNIIBT18} who presented an $\cO(nm^{1.5}\sqrt{\log m}+N)$-time algorithm. An important feature of these bounds is their {\em linear dependency} on $N$. A different branch of on-line algorithms waiving the linear-dependency restriction exists~\cite{DBLP:conf/cpm/GrossiILPPRRVV17,DBLP:conf/wea/PissisR18,sopang,sopang2}. Moreover, the EDSM problem has been considered under Hamming and edit distance~\cite{tcs-ed2020}. 
Recent results on founder block graphs~\cite{makinen2020linear} can also be
casted on elastic-degenerate strings.

A question with a somewhat similar flavor is the {\em word break} problem. 
We are given a dictionary ${\cal D}$, $m=||{\cal D}||$, and a string $S$, $n=|S|$, and the question is whether we can split $S$ into fragments that appear in ${\cal D}$ (the same element of ${\cal D}$ can be used multiple times). Backurs and Indyk~\cite{BackursI16} designed an $\tilde\cO(nm^{1/2-1/18}+m)$-time algorithm for this problem\footnote{ The $\tilde{\cO}(\cdot)$ notation suppresses polylog factors.}. 
Bringmann et al.~\cite{BringmannGL17} improved this to $\tilde\cO(nm^{1/3}+m)$ and showed that this is optimal for combinatorial algorithms by a reduction from $k$-Clique. Their algorithm uses fast Fourier transform (FFT),
and so it is not clear whether it should be considered combinatorial. While this problem seems similar to EDSM, there does not seem to be a direct reduction and so their lower bound does not immediately apply.

\medskip
\noindent\textbf{Our Results.} It is known that BMM and triangle detection (TD) in graphs either both have truly subcubic combinatorial algorithms or none of them do~\cite{williams2010subcubic}. Recall also that the currently fastest algorithm with linear dependency on $N$ for the EDSM problem runs in $\cO(nm^{1.5}\sqrt{\log m}+N)$ time~\cite{DBLP:conf/cpm/AoyamaNIIBT18}. In this paper we prove the following two theorems.

\begin{restatable}{theorem}{thmhardness}\label{thm:hardness}
If the EDSM problem can be solved in $\cO(nm^{1.5-\epsilon} + N)$ time, for any $\epsilon>0$, with a combinatorial algorithm, then there exists a truly subcubic combinatorial algorithm for TD.
\end{restatable}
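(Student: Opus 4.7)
I prove the theorem by a reduction from combinatorial triangle detection (TD) to combinatorial EDSM. Starting from a graph $G=(V,E)$ on $\nu=|V|$ vertices, I construct a family of EDSM instances whose combined ``baseline'' cost $\sum (n\,m^{1.5}+N)$ is $O(\nu^{3})$ and whose outputs jointly decide whether $G$ has a triangle. If an $O(nm^{1.5-\epsilon}+N)$-time combinatorial EDSM algorithm exists, summing over these instances yields a truly subcubic combinatorial algorithm for TD, which refutes the combinatorial TD (equivalently, BMM) conjecture.

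I begin with the standard reduction to tripartite TD: look for $(a,b,c)\in A\times B\times C$ with $(a,b),(b,c),(a,c)\in E$, where $A,B,C$ are three disjoint copies of $V$. Each vertex is encoded as a distinct character over an alphabet of size $O(\nu)$. The core gadget enumerates the middle vertex through the ED structure of the text: for every $b\in B$, the text contains a block consisting of two consecutive segments, $N(b)\cap A$ followed by $N(b)\cap C$, separated by a distinguished symbol $\#$. A pattern occurrence of the form ``$a\,\#\,c$'' in such a block witnesses the two-path $a\to b\to c$.

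The remaining task is to simultaneously enforce $(a,c)\in E$ and to avoid the naive $\Theta(\nu^{2})$ overhead of one pattern per candidate edge, which would inflate the total to $\Theta(\nu^{4})$. I address this by \emph{batching}: partition the edge set (for example, by grouping the vertices of $A$ and of $C$ into blocks) and, for each batch, build one EDSM sub-instance whose deterministic pattern jointly encodes every candidate edge in the batch. In the sub-instance's text, each per-$b$ block is augmented with distinct separator symbols --- one ``slot'' per batched edge --- so that each edge-specifier in the pattern is forced to align with its own slot. This slot gadget simultaneously yields soundness (no cross-edge matches from mixing separators) and completeness (every actual triangle is witnessed in some batch). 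The batch size and the number of sub-instances are chosen so that $\sum (n\,m^{1.5}+N)=\Theta(\nu^{3})$, matching the conjectured TD barrier.

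The main obstacle will be engineering the slot gadget cleanly so that the $\nu^{3}$ baseline is actually achieved: the separators must be rich enough to prevent cross-edge leakage, the pattern length must scale with the batch size so that the $m^{1.5-\epsilon}$ improvement activates, and the text blocks must share neighborhood information across batches to avoid redundant copies that would inflate $N$ beyond the $\nu^{3}$ budget. Once this gadget is in place, applying the hypothetical $O(nm^{1.5-\epsilon}+N)$-time algorithm to each sub-instance and summing gives an $O(\nu^{3-\delta})$-time combinatorial algorithm for TD for some $\delta=\delta(\epsilon)>0$, which is the desired contradiction.
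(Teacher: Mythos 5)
Your proposal shares the paper's starting point (a reduction from triangle detection) but then diverges into an unfinished construction, and the part you leave open is precisely the place where the paper's main technical idea lives. Let me be concrete about why the gap is real.

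You propose one per-$b$ block in the ED text, so the text has $\Theta(\nu)$ segments. Even before worrying about soundness, the arithmetic breaks: with $n=\Theta(\nu)$ segments and a pattern of length $m$ roughly proportional to the batch size $b$, a batch costs $\Theta(\nu\cdot b^{1.5-\epsilon}+N_{\text{batch}})$, and summed over $\Theta(\nu^2/b)$ batches the first term is $\Theta(\nu^{3}\,b^{0.5-\epsilon})$, which is already supercubic unless $b$ is vanishingly small, at which point no batching is happening. More fundamentally, you never say how a \emph{single deterministic} pattern that "jointly encodes every candidate edge in the batch" can be designed so that it occurs in the text iff \emph{at least one} (rather than \emph{all}) of the encoded candidates closes a triangle. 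Since in EDSM the pattern must match a contiguous sequence of segments, a naive concatenation of edge-specifiers would require every specifier to match simultaneously, which is the opposite of what you need.

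The paper resolves both issues with one construction you have not reproduced. It builds a \emph{single} EDSM instance whose ED text has only $O(\log\N)$ segments: three middle segments $\mathcal{X}_1,\mathcal{X}_2,\mathcal{X}_3$ encoding the $1$-entries of $A,B,C$ (using a block-decomposition parameter $s$ on $B$), flanked by $O(\log z)$ "ED prefix" and $O(\log z)$ "ED suffix" segments. The pattern is the concatenation $P=P_1\cdots P_z$ of all $z=\N s^2$ building blocks $P(i,x,y)=v(i)\,x\,a^{\N/s}\,x\,\$\$\,y\,a^{\N/s}\,y\,v(i)$. The ED prefix (resp.\ suffix) sets are a dyadic family designed so that for \emph{any} index $t$, the prefix $P_1\cdots P_{t-1}$ can be assembled from one string per prefix segment and the suffix $P_{t+1}\cdots P_z$ from the suffix segments. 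Hence the full pattern $P$ occurs iff \emph{some} $P_t$ matches $\mathcal{X}$, which by construction happens iff a triangle exists. With $n=O(\log\N)$, $m=O(\N^2 s)$ and $N=O(\N^3/s+\N^2 s\log\N)$, plugging $s=\N^{\alpha}$ for a small $\alpha>0$ into the hypothetical $O(nm^{1.5-\epsilon}+N)$ bound gives $O(\N^{3-\delta})$. Your "slot gadget" and "sharing neighborhood information across batches" obstacles are exactly the things this dyadic prefix/suffix padding and the single-instance design eliminate, so the missing ingredient in your proposal is this padding construction together with the decision to use one instance with logarithmically many segments rather than many sub-instances with linearly many segments.
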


Arguably, the notion of combinatorial algorithms is not clearly defined, and Theorem~\ref{thm:hardness} should be understood as an indication
that in order to achieve a better complexity one should use fast matrix multiplication.
Indeed, there are examples where a lower bound conditioned on BMM was helpful in constructing efficient algorithms using fast matrix multiplication~\cite{AbboudBW15a, Chang16, BringmannGSW16, Matousek91, CzumajL09, VassilevskaW06, Zwick02}. We successfully design such a non-combinatorial algorithm by combining three ingredients: a string periodicity argument, FFT, and fast matrix multiplication. While periodicity is the usual tool in combinatorial pattern matching~\cite{DBLP:journals/siamcomp/KnuthMP77,DBLP:journals/jacm/CrochemoreP91,DBLP:conf/soda/KociumakaRRW15} and using FFT is also not unusual (for example, it often shows up in approximate string matching~\cite{Abrahamson:1987:GSM:37185.37191,DBLP:journals/jal/AmirLP04,DBLP:journals/ipl/CliffordC07,GawrychowskiU18}), to the best of our knowledge, we are the first to combine these with fast matrix multiplication. Specifically, we show the following result for the EDSM problem, where $\omega$ denotes the matrix multiplication exponent.

\begin{restatable}{theorem}{thmalgo}\label{thm:algo}
The EDSM problem can be solved on-line in $\ctO(nm^{\omega-1}+N)$ time.
\end{restatable}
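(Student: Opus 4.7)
The plan is to process $\tilde{T}$ on-line, segment by segment. After processing the first $i-1$ segments, I maintain a bit vector $a_{i-1}\in\{0,1\}^{m}$ where $a_{i-1}[j]=1$ iff $P[1\dd j]$ is a suffix of some concatenation choosing one string from each of $\tilde{T}_1,\dots,\tilde{T}_{i-1}$. Processing segment $\tilde{T}_i$ amounts to computing $a_i$ from $a_{i-1}$ and reporting any full occurrence of $P$ that terminates inside $\tilde{T}_i$. If this update can be implemented in $\ctO(m^{\omega-1}+N_i)$ time, where $N_i$ is the total length of strings in $\tilde{T}_i$, summing over the $n$ segments gives the claimed $\ctO(nm^{\omega-1}+N)$ bound.

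For a single segment I would split its strings by length. Strings $S$ with $|S|\geq m$ are scarce (at most $N_i/m$ of them), so for each I run KMP against $P$ in $\cO(|S|)$ time to find internal occurrences of $P$, and use precomputed prefix/suffix tables to see how $S$ extends active prefix lengths and yields new ones; these contribute $\cO(N_i)$ in total. The more delicate case is strings of length less than $m$, where no full internal occurrence is possible; here the core task is to compute the Boolean compatibility relation between the active set $I=\{j: a_{i-1}[j]=1\}$ and the strings $\{S_1,\dots,S_k\}$ of $\tilde{T}_i$, with $(j,S)$ compatible iff $P[j+1\dd j+|S|]=S$.

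The crucial technical step is to batch these compatibility checks via anchors. For a parameter $\ell$ to be balanced at the end, I would select a set of $\cO(m/\ell)$ length-$\ell$ substrings of $P$ (the anchors) such that any alignment of a string of length $\geq \ell$ into $P$ contains at least one, but only $\cO(1)$, anchors as substrings. Occurrences of the anchors inside the segment's strings are located in $\cO(N_i)$ time by a single Aho--Corasick pass; every such occurrence pins the containing $S$ to $\cO(1)$ candidate alignments in $P$, each verifiable with $\cO(1)$ longest-common-extension queries on $P$ after linear preprocessing. The resulting set of candidate pairs is then combined with the active set $I$ through fast (rectangular) Boolean matrix multiplication, which, for an appropriate $\ell$, costs $\ctO(m^{\omega-1})$ per segment. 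Strings of length less than $\ell$ admit no anchor and are handled separately by FFT-based convolutions against $P$ batched across a given length class, contributing only $\ctO(m)$ extra per class.

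The chief obstacle will be the anchor construction itself: proving that for every pattern $P$ one can choose $\cO(m/\ell)$ length-$\ell$ anchors satisfying both the coverage condition (every relevant window of $P$ contains an anchor) and the sparsity condition (no such window contains more than $\cO(1)$ anchors). A uniformly random sample achieves one guarantee but not both, since in periodic regions of $P$ anchors cluster heavily, while in aperiodic regions a sparse deterministic set may miss a window entirely. I therefore expect the proof to decompose $P$ into highly periodic runs and aperiodic parts using Fine--Wilf and the runs theorem, place anchors separately in each piece using the primitive root of the period, and carefully synchronise them across boundaries. Once this combinatorial lemma is in hand, the rest of the algorithm composes cleanly from periodicity, FFT, and fast matrix multiplication.
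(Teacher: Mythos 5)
Your high-level pipeline—segment-by-segment processing of a bit vector of active prefixes, length-class bucketing, anchor selection, and a final fast matrix product—is close in spirit to what the paper does, and you correctly identify anchor construction as the technical crux and periodicity as the relevant tool. However, there are two genuine gaps.

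First, the anchor lemma you are aiming for is unattainable as stated. You want $\cO(m/\ell)$ length-$\ell$ anchors such that \emph{every} aligned window of a string of length $\geq\ell$ contains at least one but only $\cO(1)$ anchors. Consider $P=\texttt{a}^m$ with every string $S$ in the segment equal to $\texttt{a}^{\ell}$: all length-$\ell$ substrings of $P$ are identical, so whatever anchor set you pick either misses all windows or appears $\cO(m)$ times in $P$ and in each window. The paper copes by \emph{partitioning the strings of the segment}, not the positions of $P$, into three types by periodicity of their length-$\ell$ substrings. For the aperiodic strings (type 1), it does not achieve $\cO(1)$ anchors per string at all; it instead solves a hitting-set-like problem (their \textsc{NodeSelection}) yielding an \emph{average} of $\cO(\log m)$ anchor occurrences per string and $\cO(m/\ell\cdot\log m)$ occurrences in $P$, and shows this weaker guarantee still suffices for the final matrix-multiplication bound. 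For strings that straddle a periodic/aperiodic boundary (type 2), it places anchors at the boundary of the unique maximal $\ell$-periodic substring, getting only then the stronger per-string $\cO(1)$ bound. Your proposed decomposition of $P$ into runs with Fine--Wilf would run into exactly the degenerate periodic case above.

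Second, you have no treatment of the fully periodic strings (the paper's type 3), whose period is $\leq\ell/4$. For these, anchors do not exist in any useful sense, and FFT convolution against $P$ in one shot does not give the $\ctO(m^{\omega-1})$ budget. The paper handles this case by reducing to multiplication of $|R|\times|R|$ matrices whose entries are polynomials of degree $\cO(\alpha/|R|)$, where $R$ is the primitive root—i.e., it is here that FFT and fast matrix multiplication are combined (Lemma~\ref{lem:polymatrix}). Your FFT step only addresses strings shorter than $\ell$, which is not the hard subcase. Without a type-3 strategy your algorithm lacks an essential component, and no choice of $\ell$ will repair it.
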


In order to obtain a faster algorithm for the EDSM problem, we focus on the {\em active prefixes} (AP) problem that lies at the heart of all current solutions~\cite{DBLP:conf/cpm/GrossiILPPRRVV17,DBLP:conf/cpm/AoyamaNIIBT18}. In the AP problem, we are given a string $P$ of length $m$ and a set of arbitrary prefixes $P[1\dd i]$ of $P$, called {\em active prefixes}, stored in a bit vector $U$ so that $U[i]=1$ if $P[1\dd i]$ is active.
We are further given a set ${\cal S}$ of strings of total length $N$ and we are asked to compute a bit vector $V$ which stores the new set of active prefixes of $P$. A new active prefix of $P$ is a concatenation of $P[1\dd i]$ (such that $U[i]=1$) and some element of ${\cal S}$.

Using the algorithmic framework introduced in~\cite{DBLP:conf/cpm/GrossiILPPRRVV17}, EDSM is addressed by solving an instance of the AP problem per each segment $i$ of the ED text corresponding to set ${\cal S}$ of the AP problem.
Hence, an $\cO(f(m)+N_i)$ solution for the AP problem (with $N_i$ being the size of a single segment of the ED text) implies an $\cO(nf(m)+N)$ solution of EDSM, as $f(m)$ is repeated $n$ times and $N = \sum_{i=1}^n N_i$. 
The algorithm of~\cite{DBLP:conf/cpm/AoyamaNIIBT18} solves the AP problem in $\cO(m^{1.5}\sqrt{\log m} + N_i)$ time leading to $\cO(nm^{1.5}\sqrt{\log m} + N)$ time for the EDSM problem. 
Our algorithm partitions the strings of each segment $i$ of the ED text into three types according to a periodicity criterion, and then solves a restricted instance of the AP problem for each of the types. In particular, we solve the AP problem in $\ctO(m^{\omega-1}+N_i)$ time leading to $\ctO(nm^{\omega-1}+N)$ time for the EDSM problem.
Given this connection between the two problems and, in particular, between their size parameter $N$, in the rest of the paper we will denote with $N$ also the parameter $N_i$ of the AP problem. 

An important building block in our solution that might find applications in other problems is a method of selecting a small set of length-$\ell$ substrings of the pattern, called \emph{anchors}, so that any relevant occurrence of a string from an ED text set contains at least one but not too many such anchors inside. This is obtained by rephrasing the question in a graph-theoretical language and then
generalizing the well-known fact that an instance of the hitting set problem with $m$ sets over $[n]$, each of size at least $k$, has a solution
of size $\cO(n/k\cdot \log m)$.
While the idea of carefully selecting some substrings of the same length is not new (for example Kociumaka et al.~\cite{DBLP:conf/soda/KociumakaRRW15}
used it to design a data structure for pattern matching queries on a string), our setting is different and hence so is the method
of selecting these substrings. 

In addition to the conditional lower bound for the EDSM problem (Theorem~\ref{thm:hardness}), which also appeared in~\cite{bernardini_et_al:LIPIcs:2019:10597}, we also show here the following conditional lower bound for the AP problem.

\begin{restatable}{theorem}{APthmhardness}\label{the:AP}
If the AP problem can be solved in $\cO(m^{1.5-\epsilon}+N)$ time, for any $\epsilon>0$, with a combinatorial algorithm, then there exists a truly subcubic combinatorial algorithm for the BMM problem.
\end{restatable}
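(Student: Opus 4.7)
The plan is to give a direct reduction from BMM to AP, mirroring the construction in the proof of Theorem~\ref{thm:hardness}. Given two $n \times n$ Boolean matrices $A$ and $B$, we set $P = (c_1 c_2 \cdots c_n)^n$ of length $m = n^2$ over a size-$n$ alphabet, encode $A$ in $U$ via $U[(a-1)n + k - 1] := A[a][k]$, and for each $(k,b)$ with $k \leq b$ and $B[k][b] = 1$ insert the substring $s_{k,b} = c_k c_{k+1}\cdots c_b$ into $\mathcal{S}$. Since $c_1, \ldots, c_n$ are distinct, $s_{k,b}$ occurs in $P$ exactly at positions of the form $(a-1)n + k$ for $a \in [n]$, so the AP output satisfies $V[(a-1)n + b] = \bigvee_{k \leq b} A[a][k] \wedge B[k][b]$. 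A symmetric AP instance using the reversed pattern $(c_n c_{n-1} \cdots c_1)^n$ handles the remaining case $k > b$, and the disjunction of the two outputs encodes $C = AB$ in full.

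Alternatively, and more cleanly, the theorem can be derived by chaining Theorem~\ref{thm:hardness} with the standard EDSM-to-AP decomposition recalled in the introduction: a combinatorial $\cO(m^{1.5-\epsilon}+N)$-time algorithm for AP, invoked once per segment of an ED text, yields a combinatorial EDSM algorithm running in time $\sum_{i=1}^{n}\cO(m^{1.5-\epsilon}+N_i) = \cO(nm^{1.5-\epsilon}+N)$, which by Theorem~\ref{thm:hardness} implies a truly subcubic combinatorial algorithm for TD, and hence, by the folklore combinatorial equivalence of TD and BMM~\cite{williams2010subcubic} already recalled in the introduction, a truly subcubic combinatorial algorithm for BMM.

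The main technical obstacle in the direct route is that a naive encoding yields a string set of total size $N = \Theta(n^3)$ for dense $B$, which would dominate the running time. This is handled by splitting the construction into several AP instances bucketed by the length $b-k+1$ of the encoded substring and balancing the bucket widths against the AP exponent $1.5-\epsilon$, in the same spirit as the bookkeeping in the proof of Theorem~\ref{thm:hardness}. In the chaining argument this balancing is absorbed into the invocation of Theorem~\ref{thm:hardness}, which is why the chained proof is the preferred route.
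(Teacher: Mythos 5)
Your preferred ``chaining'' route is valid and is a genuinely different argument from the paper's. The paper proves Theorem~\ref{the:AP} by a \emph{self-contained, direct} reduction from BMM to AP: $A$ is split into $\N\times L$ blocks and $B$ into $L\times L$ blocks, giving $(\N/L)^2$ AP instances whose pattern is a concatenation of $\N$ gadgets of the form $\texttt{a}^L\texttt{b}\texttt{a}^L$ and whose string set contains $\texttt{a}^{L-k'}\texttt{b}\texttt{a}^{j'}$ of length $\cO(L)$, so each instance has $m=\cO(\N L)$ and $N=\cO(L^3)$, and balancing $L$ yields a truly subcubic BMM algorithm. Your chain AP~$\Rightarrow$~EDSM~$\Rightarrow$~TD~$\Rightarrow$~BMM instead reuses Theorem~\ref{thm:hardness} plus the cited $\cO(nf(m)+N)$ EDSM-from-AP framework; it is logically sound (no circularity, just a forward reference), shorter, and the overhead of the framework is $\cO(nm+N)$ which is dominated for $\epsilon<0.5$. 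What the paper's direct proof buys is independence of Theorem~\ref{thm:hardness} (which the paper proves only in the next section), independence of the correctness of the black-box EDSM-from-AP conversion, and a cleaner narrative in which the two lower bounds are established separately.

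Your first, direct sketch, however, has a real gap that ``bucketing by the length $b-k+1$'' does not close. With $P=(c_1\cdots c_n)^n$ and $s_{k,b}=c_k\cdots c_b$, the bucket of lengths in $[L,2L)$ contains $\Theta(nL)$ pairs $(k,b)$ of total string length $\Theta(nL^2)$; summing over geometric buckets still gives $\Theta(n^3)$, so the additive $N$ term in the assumed AP running time remains cubic and nothing is gained. The obstruction is structural: a string $c_k\cdots c_b$ must physically span the $b-k+1$ positions between its two endpoints, so its length cannot be decoupled from the distance between the $k$-slot and the $b$-slot of $P$. The paper's gadget $\texttt{a}^L\texttt{b}\texttt{a}^L$ avoids exactly this by inserting a marker letter: the string $\texttt{a}^{L-k'}\texttt{b}\texttt{a}^{j'}$ anchors to the unique $\texttt{b}$ inside the gadget and so ``jumps'' from offset $k'$ to offset $j'$ while having length only $\cO(L)$, which together with blocking both indices keeps $N=\cO(L^3)$ per instance. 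Without such a marker, a reduction of your form cannot push $N$ below $\Theta(n^3)$ for dense $B$, so the bucketing-and-balancing step you allude to would have to be replaced by a construction morally equivalent to the paper's.
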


\medskip
\noindent\textbf{Roadmap.}
Section~\ref{sec:prel} provides the necessary definitions and notation as well as the algorithmic toolbox used throughout the paper. In Section~\ref{app:app} we prove our 
lower bound result for the AP problem (Theorem~\ref{the:AP}). The 
lower bound result for the EDSM problem is proved in Section~\ref{sec:lb}  (Theorem~\ref{thm:hardness}). In Section~\ref{sec:algo} we present our algorithm for EDSM (Theorem~\ref{thm:algo}); this is the most technically involved part of the paper.

\section{Preliminaries}\label{sec:prel}

Let $T=T[1]T[2]\ldots T[n]$ be a string of length $|T|=n$ over a finite ordered alphabet $\Sigma$ of size $|\Sigma|=\sigma$. For two positions $i$ and $j$ on $T$, we denote by $T[i\dd j]=T[i]\ldots T[j]$ the substring of $T$ that starts at position $i$ and ends at position $j$ (it is of length $0$ if $j<i$). By $\varepsilon$ we denote the empty string of length 0. A prefix of $T$ is a substring of the form $T[1\dd j]$, and a suffix of $T$ is a substring of the form $T[i\dd n]$. $T^{r}$ denotes the reverse of $T$, that is, $T[n] T[n-1] \ldots T[1]$. We say that a string $X$ is a power of a string $Y$ if there exists an integer $k>1$, such that $X$ is expressed as $k$ consecutive concatenations of $Y$, denoted by $X=Y^k$. A period of a string $X$ is any integer $p\in [1,|X|]$ such that $X[i]=X[i+p]$ for every $i=1,2,\ldots,|X|-p$, and {\em the period}, denoted by $\per(X)$, is the smallest such $p$. We call a string $X$ \emph{strongly periodic} if $\per(X)\leq |X|/4$. 

\begin{lemma}[\cite{PeriodicityLemma}]
\label{lem:periodicity}
If $p$ and $q$ are both periods of the same string $X$, and additionally $p+q\leq |X|+1$, then $\gcd(p,q)$ is also a period of $X$.
\end{lemma}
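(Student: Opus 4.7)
The plan is to prove the lemma by induction on $p + q$, implementing the Euclidean algorithm on the two periods. Without loss of generality assume $p \ge q$; the base case $p = q$ is immediate, since $\gcd(p, q) = p$ is already a period by hypothesis. For the inductive step with $p > q$, the key claim is that $p - q$ is also a period of $X$. Once this is established, the pair $(p - q, q)$ satisfies $(p - q) + q = p \le |X|$, strictly less than the original $p + q$, so the induction hypothesis applied to this smaller pair yields that $\gcd(p - q, q) = \gcd(p, q)$ is a period of $X$.

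To prove the key claim, fix $i$ with $1 \le i \le |X| - (p - q)$ and establish $X[i] = X[i + (p - q)]$ by choosing an auxiliary position from which both a period-$p$ and a period-$q$ equation are valid. If $i > q$, take $j := i - q$: then $j + q = i$ and $j + p = i + (p - q)$ are both in $[1, |X|]$ (the latter by the upper bound on $i$), so combining $X[j] = X[j + q]$ and $X[j] = X[j + p]$ gives the desired equality. If $i \le q$, I would instead use $X[i] = X[i + p]$ via period $p$ and then $X[i + p] = X[i + p - q]$ via period $q$ applied backward; the period-$p$ step requires $i + p \le |X|$, which the hypothesis $p + q \le |X| + 1$ supplies whenever $i \le q - 1$. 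The only remaining case is the tight boundary $i = q$ with $p + q = |X| + 1$, which I would handle by a short ad hoc argument: either $q = 1$, in which case $X$ is constant and there is nothing to prove, or else one exhibits a chain of $\pm q$ and $\pm p$ shifts connecting positions $q$ and $p$ while remaining inside $[1, |X|]$, using the two previous cases along the interior.

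The main obstacle I expect is exactly this boundary management: the hypothesis $p + q \le |X| + 1$ is tight precisely at $i = q$ in the second subcase, and one has to verify carefully that some sequence of valid shifts exists. The combinatorial reason behind the lemma is that $p + q \le |X| + 1$ guarantees enough overlap between the two periodic structures to make such a connecting path possible; once the reduction step $(p, q) \to (p - q, q)$ is justified, the Euclidean descent on $\gcd$ closes the induction.
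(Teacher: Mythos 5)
The paper does not prove this lemma; it cites it (this is the weak Fine--Wilf periodicity lemma), so there is no in-paper proof to compare your argument against. Judged on its own, your Euclidean-descent strategy is the standard and correct one, and the first two subcases of your key claim are argued exactly right. But the boundary case you flag — $i=q$ when $p+q=|X|+1$ — is not a loose end you can wave off with ``a short ad hoc argument''; it is a genuine gap, and the two escape hatches you sketch do not obviously close it. The case $q=1$ does not make $X$ constant (period $1$ is only known to hold on $X[1\dd |X|-1]$ after the fact, and in any event $q>1$ is the generic situation). And exhibiting a chain of $\pm p,\pm q$ shifts from $q$ to $p$ inside $[1,|X|]$ is not a routine verification: following the greedy chain from $q$ ($+q$ repeatedly, then $-p$) lands you at $q-r$ with $r=p\bmod q$, while the greedy chain from $p$ ($-q$ repeatedly) lands you at $r$, and showing $X[q-r]=X[r]$ is essentially as hard as the original question. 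In fact the existence of such a chain for all relevant pairs of positions is precisely the combinatorial content of Fine--Wilf, so your ``interior cases'' do not trivially bridge the gap.

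The clean repair is to change what you recurse on: instead of proving $p-q$ is a period of all of $X$, prove it is a period of the prefix $X'=X[1\dd |X|-q]$. There the required range is $1\le i\le |X|-p$, and for every such $i$ the chain $X[i]=X[i+p]=X[i+p-q]$ is valid with no boundary issue. Since $q$ is clearly also a period of $X'$, and $(p-q)+q=p\le |X'|+1$ (using $p+q\le |X|+1$), the inductive hypothesis gives that $d=\gcd(p,q)$ is a period of $X'$. One then needs a short lifting step to promote $d$ from a period of $X'$ to a period of $X$: writing any $i\le |X|-d$ as $i=aq+b$ with $1\le b\le q$ and using $d\mid q$ together with $|X|\ge 2q$ (which follows from $p\ge q+1$), both $X[i]$ and $X[i+d]$ can be pulled back by period $q$ into $X'$, where $d$-periodicity applies. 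That lifting argument replaces your unproved boundary case and makes the induction go through.
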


A \emph{trie} is a tree in which every edge is labeled with a single letter, and every two edges outgoing from the same node have different labels.
The label of a node $u$ in such a tree $T$, denoted by $\nodestring(u)$, is defined as the concatenation of the labels of all the edges on the path from
the root of $T$ to $u$. Thus, the label of the root of $T$ is $\varepsilon$, and a trie is a representation of a set of strings consisting of the labels
of all its leaves. By replacing each path $p$ consisting of nodes with exactly one child by an edge labeled by the concatenation of the
labels of the edges of $p$ we obtain a \emph{compact trie}. The nodes of the trie that are removed after this transformation are called
\emph{implicit}, while the remaining ones are referred to as \emph{explicit}.
The suffix tree of a string $S$ is the compact trie representing all suffixes
of $S\$$, $\$\notin\Sigma$, where instead of explicitly storing the label $S[i\dd j]$ of an edge we represent it by the pair $(i,j)$.

A \emph{heavy path decomposition} of a tree $T$ is obtained by selecting, for every non-leaf node $u\in T$, its child $v$ such that the subtree rooted at $v$ is the largest. This decomposes the nodes of $T$ into node-disjoint paths, with each such path $p$ (called a heavy path) starting at some node, called the \emph{head} of $p$, and ending at a leaf. An important property of such a decomposition is that the number of distinct heavy paths above any leaf (that is, intersecting the path from a leaf to the root) is only logarithmic in the size of $T$~\cite{DBLP:journals/jcss/SleatorT83}.

Let $\tilde{\Sigma}$ denote the set of all finite non-empty subsets of $\Sigma^{*}$. Previous works (cf.~\cite{DBLP:conf/lata/IliopoulosKP17,DBLP:conf/cpm/GrossiILPPRRVV17, DBLP:conf/cpm/AoyamaNIIBT18,DBLP:conf/wea/PissisR18,tcs-ed2020}) define $\tilde{\Sigma}$ as the set of all finite non-empty subsets of $\Sigma^{*}$ excluding $\{\varepsilon\}$ but we waive here the latter restriction as it has no algorithmic implications. An {\em elastic-degenerate string} $\tilde{T}=\tilde{T}[1]\dots\tilde{T}[n]$, or ED string, over alphabet $\Sigma$, is a string over $\tilde{\Sigma}$, i.e., an ED string is an element of $\tilde{\Sigma}^{*}$, and hence each $\tilde{T}[i]$ is a set of strings.

Let $\tilde{T}$ denote an ED string of length $n$, i.e. $|\tilde{T}| = n$. 
We assume that for any $1 \leq i \leq n$, the set 
$\tilde{T}[i]$ $\in \tilde{\Sigma}$
is implemented as an array and can be accessed by an index, i.e., $\tilde{T}[i] =\{\tilde{T}[i][k]~|~k = 1,\ldots,|\tilde{T}[i]|\}$.
For any $\tilde{\sigma}$ $\in \tilde{\Sigma}$, $||\tilde{\sigma}||$ denotes the
total length of all strings in $\tilde{\sigma}$, and for any ED string $\tilde{T}$,  $||\tilde{T}||$ denotes the total length of all strings in all $\tilde{T}[i]$s. We will denote $N_i = \sum_{k=1}^{|\tilde{T}[i]|}|\tilde{T}[i][k]|$ the
total length of all strings in $\tilde{T}[i]$ and $N = \sum_{i=1}^{n} ||\tilde{T}[i]||$ the \emph{size} of $\tilde{T}$. 
An ED string $\tilde{T}$ can be thought of as a compact representation of the set of strings 
${\cal A}(\tilde{T})$ which is the Cartesian product of all $\tilde{T}[i]$s; that is, 
${\cal A}(\tilde{T}) =\tilde{T}[1]\times \ldots \times \tilde{T}[n]$ where $A \times B = \{xy \mid x \in A, y \in B\}$ for any sets of strings $A$ and $B$.\\ 
For any ED string $\tilde{X}$ and a pattern $P$, we say that $P$ {\em matches} $\tilde{X}$ if:
\begin{enumerate}
\item $|\tilde{X}| = 1$ and $P$ is a substring of some string in $\tilde{X}[1]$, or,
\item $|\tilde{X}| > 1$ and $P = P_{1}\ldots P_{|\tilde{X}|}$, where $P_1$ is a suffix of some string in $\tilde{X}[1]$, $P_{|\tilde{X}|}$ is a prefix of some string in $\tilde{X}[|\tilde{X}|]$, and $P_i \in \tilde{X}[i]$, for all $1 < i < |\tilde{X}|$.
\end{enumerate}

We say that an  occurrence of a string $P$ ends at position $j$ of an ED string $\tilde{T}$ if there exists $i\leq j$ such that $P$ matches $\tilde{T}[i]\ldots\tilde{T}[j]$. 
We will refer to string $P$ as the {\em pattern} and to ED string $\tilde{T}$ as the {\em text}. We define the main problem considered in this paper.

{\defproblem{\textsc{Elastic-Degenerate String Matching} (EDSM)}{A string $P$ of length $m$ and an ED string $\tilde{T}$ of length $n$ and size $N \geq m$.}
{All positions in $\tilde{T}$ where at least one occurrence of $P$ ends.}}

\begin{example}
Pattern $P=\textnormal{\texttt{GTAT}}$ ends at positions $2$, $6$, and $7$ of the following text $\tilde{T}$.
\[ \tilde{T} = 
\left  \{ 
  \begin{tabular}{c}
  \textnormal{\texttt{AT{\color{red}GTA}}} 
  \end{tabular}
\right \}
\cdot 
\left  \{ 
  \begin{tabular}{c}
  \textnormal{\texttt{A}} \\
  \textnormal{\texttt{\color{red}{T}}} 
  \end{tabular}
\right \}
\cdot 
\left  \{ 
  \begin{tabular}{c}
  \textnormal{\texttt{C}} 
  \end{tabular}
\right \}
\cdot 
\left  \{ 
  \begin{tabular}{c}
  \textnormal{\texttt{G}}\\
  \textnormal{\texttt{T}}
  \end{tabular}
\right \}
\cdot 
\left  \{ 
  \begin{tabular}{c}
  \textnormal{\texttt{C{\color{red}{G}}}} 
  \end{tabular}
\right \}
\cdot 
\left  \{ 
  \begin{tabular}{c}
  \textnormal{\texttt{{\color{red}{TA}}}}\\ 
  \textnormal{\texttt{{\color{red}{TAT}}A}}\\
  \textnormal{\texttt{\color{red}{$\varepsilon$}}}
  \end{tabular}
\right \}
\cdot 
\left  \{ 
  \begin{tabular}{c}
  \textnormal{\texttt{{\color{red}{TAT}}GC}}\\
  \textnormal{\texttt{{\color{red}{T}}TTTA}}
  \end{tabular}
\right \}
\]
\end{example}

Aoyama et al.~\cite{DBLP:conf/cpm/AoyamaNIIBT18} obtained an on-line $\cO(nm^{1.5}\sqrt{\log m}+N)$-time algorithm
by designing an efficient solution for the following problem.

{\defproblem{\textsc{Active Prefixes} (AP)}{A string $P$ of length $m$, a bit vector $U$ of size $m$, a set ${\cal S}$ of strings of total length $N$.} {A bit vector $V$ of size $m$ with $V[j]=1$ if and only if there exists $S \in {\cal S}$ and $i \in[1,m], U[i]=1$, such that $P[1 \dd i]\cdot S=P[1\dd i+|S|]$ and $j = i+|S|$.}}

In more detail, given an ED text $\tilde{T}=\tilde{T}[1]\dots \tilde{T}[n]$, one should consider an instance of the AP problem per each $\tilde{T}[i]$. Hence, an $\cO(f(m)+N_i)$ solution for AP ( 
$N_i$ being the size of $\tilde{T}[i]$) 
implies an $\cO(n\cdot f(m)+N)$ solution for EDSM, as $f(m)$ is repeated $n$ times and $N = \sum_{i=1}^n N_i$. We provide an example of the AP problem.

\begin{example}
Let $P=\textnormal{\texttt{ababbababab}}$ of length $m=11$, $U=01000100000$, and ${\cal S}=\{\boldsymbol{\varepsilon}, \textnormal{\texttt{ab}},\textnormal{\texttt{abb}}, \textnormal{\texttt{ba}}, \textnormal{\texttt{baba}}\}$. We have that $V=01011101010$.
\end{example}

For our lower bound results we rely on BMM and the following closely related problem.

{\defproblem{\textsc{Boolean Matrix Multiplication} (BMM)}{ Two $\N \times \N$ Boolean matrices $A$ and $B$.}
{$\N \times \N$ Boolean matrix $C$, where $C[i,j] =  {\bigvee\limits_{k}} (A[i,k] \wedge B[k,j])$.}}

{\defproblem{\textsc{Triangle Detection} (TD)}{Three $\N \times \N$ Boolean matrices $A,B$ and $C$.}
{Are there $i,j,k$ such that $A[i,j]=B[j,k]=C[k,i]=1$?}}

An algorithm is called \emph{truly subcubic} if it runs in
$\cO(\N^{3-\epsilon})$ time, for some $\epsilon > 0$. TD and BMM either
both have truly subcubic combinatorial algorithms, or none of them do~\cite{williams2010subcubic}.

\section{AP Conditional Lower Bound}\label{app:app}

To investigate the hardness of the EDSM problem, we first show that an $\cO(m^{1.5-\epsilon}+N)$-time solution to the active prefixes problem, that constitutes the core of the solutions proposed in~\cite{DBLP:conf/cpm/GrossiILPPRRVV17,DBLP:conf/cpm/AoyamaNIIBT18}, would imply a truly subcubic combinatorial algorithm for Boolean matrix multiplication (BMM). We recall that in the AP problem, we are given a string $P$ of length $m$ and a set of prefixes $P[1\dd i]$ of $P$, called {\em active prefixes}, stored in a bit vector $U$ ($U[i]=1$ if and only if $P[1\dd i]$ is active). We are further given a set ${\cal S}$ of strings of total length $N$ and we are asked to compute a bit vector $V$ storing the new set of active prefixes of $P$: a prefix of $P$ that extends $P[1\dd i]$ (such that $U[i]=1$) with some element of ${\cal S}$. 
Of course, we can solve BMM by working over integers and using one of the
fast matrix multiplication algorithms; plugging in the best known bounds results in an $\cO(\N^{2.373})$-time
algorithm~\cite{DBLP:conf/issac/Gall14a,DBLP:conf/stoc/Williams12}. However, such an algorithm is not
\emph{combinatorial}, i.e., it uses \emph{algebraic} methods. In comparison, the best known combinatorial
algorithm for BMM works in $\hat\cO(\N^3/\log^{4}\N)$ time~\cite{YU2018240}.
This leads to the following popular conjecture.

\begin{conjecture}[\cite{DBLP:conf/focs/AbboudW14}]\label{conj:BMM}
There is no combinatorial algorithm for the BMM problem working in time $\cO(\N^{3-\epsilon})$, for any $\epsilon>0$.
\end{conjecture}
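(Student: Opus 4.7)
This is a conjecture, not a theorem to be established in the paper, so I can only sketch what an attempt at a proof would look like rather than a concrete plan that leads to one. The plan would have to begin by fixing a precise computational model that captures the informal notion of a \emph{combinatorial} algorithm: this is the first and most fundamental obstacle, since in the literature ``combinatorial'' is used in an intuitive sense (no reliance on algebraic cancellation as in Strassen-style schemes), and no universally accepted formal definition exists. Candidates include Boolean circuits built from AND/OR/NOT gates, a ``tensor-free'' model in the spirit of~\cite{DBLP:conf/focs/AbboudW14}, or restricted word-RAM variants that forbid packing tricks able to simulate small matrix products. Without such a model pinned down, the statement is not even formally falsifiable.

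With a model fixed, the natural approach would be to prove an $\Omega(\N^{3-\epsilon})$ lower bound for BMM in that model by some combination of (i) a counting or information-theoretic argument showing that $\Omega(\N^{3})$ restricted operations are necessary to realise the BMM function; (ii) a reduction from a problem already conjectured (or, ideally, proved) to require truly cubic time in the same restricted model, such as triangle detection or all-pairs shortest paths, exploiting the subcubic equivalence class of BMM established in~\cite{williams2010subcubic}; or (iii) a structural bottleneck result bounding the ``combinatorial rank'' of the $\N\times\N\times\N$ Boolean matrix-multiplication tensor. A complementary route would be to show that any truly subcubic combinatorial BMM algorithm would yield a combinatorial algorithm for a problem widely believed to require cubic time, thereby relocating the conjecture rather than resolving it.

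The hard part, and the reason the statement remains open after decades, is that we currently have no technique for proving super-linear, let alone $\Omega(\N^{3-\epsilon})$, unconditional lower bounds for explicit functions in any model expressive enough to host all ``combinatorial'' algorithms one would care about; even in the strongly restricted semiring circuit setting, known matrix-multiplication lower bounds fall far short of cubic. The best combinatorial upper bound stands at $\hat{\cO}(\N^{3}/\log^{4}\N)$~\cite{YU2018240}, and the conjecture rests on the empirical failure of every known technique to shave a polynomial factor. Any honest plan therefore ends at the same wall: we lack the lower-bound machinery, and one expects the statement to stay a working hypothesis used, as in this paper, to derive conditional hardness results rather than a theorem.
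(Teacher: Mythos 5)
You correctly identify that this statement is a conjecture imported from~\cite{DBLP:conf/focs/AbboudW14}, which the paper does not prove but only uses as a hypothesis for its conditional lower bounds (Theorems~\ref{thm:hardness} and~\ref{the:AP}); there is accordingly no proof in the paper to compare against. Your discussion of the definitional and lower-bound obstacles is consistent with how the paper itself treats the notion of ``combinatorial'' as informal, so nothing further is required.
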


Aoyama et al.~\cite{DBLP:conf/cpm/AoyamaNIIBT18} showed that the AP problem
can be solved in $\cO(m^{1.5}\sqrt{\log m}+N)$ time for constant-sized alphabets. Together with some
standard string-processing techniques applied similarly as in~\cite{DBLP:conf/cpm/GrossiILPPRRVV17},
this is then used to solve the EDSM problem by creating an instance of the AP problem for every set $\tilde{T}[i]$ of $\tilde{T}$, i.e., with ${\cal S}=\tilde{T}[i]$. 

We argue that, unless Conjecture~\ref{conj:BMM} is false, the AP problem cannot be solved faster than $\cO(m^{1.5-\epsilon}+N)$, for any $\epsilon>0$, with a combinatorial algorithm (note that the algorithm of Aoyama et al.~\cite{DBLP:conf/cpm/AoyamaNIIBT18} uses FFT, and so it is not completely clear
whether it should be considered to be combinatorial).
We show this by a reduction from combinatorial BMM. Assume that, for the AP problem, we seek combinatorial algorithms with the running time $\cO(m^{1.5-\epsilon}+N)$, i.e., with linear dependency on the total
length of the strings. We need to show that such an algorithm implies that the BMM problem can be solved in $\cO(\N^{3-\epsilon'})$ time, for some $\epsilon'>0$, with a combinatorial algorithm, thus implying that Conjecture~\ref{conj:BMM} is false.

\APthmhardness*

\begin{proof}
Recall that in the BMM problem the matrices are denoted by $A$ and $B$. 
In order to compute $C\!=\!A\! \times\! B$, we need to find, for
every $i,j=1,\ldots,\N$, an index $k$ such that $A[i,k]=1$ and $B[k,j]=1$. To this purpose, we split matrix $A$ into
blocks of size $\N\! \cdot \!L$ and $B$ into blocks of size  $L\!\cdot\! L$. This corresponds to
considering values of $j$ and $k$ in intervals of size $L$, and clearly there are $\N/L$ such
intervals. Matrix $B$ is thus split into $(\N/L)^2$ blocks, giving rise to an equal number of instances of the AP problem, each one corresponding to an interval of $j$ and an interval of $k$.
This creates $(\N/L)^2$ blocks in matrix $B$; we will thus create $(\N/L)^2$ separate instances of the AP problem corresponding to an interval of $j$ and an interval of $k$. 
We will now describe the instance corresponding to the $(K,J)$-th block, where $1 \leq K,J \leq \N/L$.

We build the string $P$ of the AP problem, for any block, as a concatenation of $\N$ gadgets corresponding to $i=1,\ldots,\N$, and we construct
the bit vector $U^{(K,J)}$ of the AP problem as a concatenation of $\N$ bit vectors, one per gadget. Each gadget consists of  the same string $\texttt{a}^L \texttt{b} \texttt{a}^L$; we set to $1$ the $k'$-th bit of the $i$-th gadget bit vector if $A[i,(K-1)L+k']=1$. 
The solution of the AP problem $V^{(K,J)}$
will allow us to recover the solution of BMM, as we will ensure that the bit corresponding to the $j'$-th $\texttt{a}$ in the second half of the gadget is set to $1$ if and only if, for some $k'\in[L]$, $A[i,(K-1)L+k']=1$ and
$B[(K-1)L+k',(J-1)L+j']=1$. 
In order to enforce this, we will include the following strings in set ${\cal S}^{(K,J)}$:
$$\texttt{a}^{L-k'} \texttt{b} \texttt{a}^{j'}, \text{ for every } k',j'\in[L] \text{ such that } B[(K-1)L+k',(J-1)L+j']=1.$$ 
This guarantees that after solving the AP problem we have the required property, and thus, after solving all the instances, we have obtained matrix $C\!=\!A \!\times\! B$.
Indeed, consider values $j$, i.e., the index that runs on the columns of $C$, in intervals of size $L$. By construction and by definition of BMM, the $i$-th line of the $J$-th column interval of $C$ is obtained by taking the disjunction of the second half of the $i$-th interval of each $(K,J)$-th bit vector
for every $K=1,2,\dots,\N/L$.

We have a total of $(\N/L)^2$ instances. In each of them, the total length of all strings is $\cO(L^3)$, and the length of the input string $P$ is $(2L+1)\N=\cO(L\cdot \N)$. Using our assumed algorithm for each instance, we obtain the following total time:
$$\cO((\N/L)^2 \cdot(L^3 + (\N\cdot L)^{1.5-\epsilon})) = \cO(\N^2\cdot L + \N^{3.5-\epsilon}/L^{0.5+\epsilon}).$$
If we set $L=\N^{(1.5-\epsilon)/(1.5+\epsilon)}$, then the total time becomes:
\begin{eqnarray*}
&&\cO(\N^{2+(1.5-\epsilon)/(1.5+\epsilon)} + \N^{3.5-\epsilon-(0.5+\epsilon)(1.5-\epsilon)/(1.5+\epsilon)}) \\
&=& \cO(\N^{2+(1.5-\epsilon)/(1.5+\epsilon)} + \N^{2+(1.5-\epsilon)-(1.5-\epsilon)(0.5+\epsilon)/(1.5+\epsilon)})\\
&=& \cO(\N^{2+(1.5-\epsilon)/(1.5+\epsilon)} + \N^{2+(1.5-\epsilon)(1.5+\epsilon-0.5-\epsilon)/(1.5+\epsilon)})\\
&=& \cO(\N^{2+(1.5-\epsilon)/(1.5+\epsilon)}).
\end{eqnarray*}
Hence we obtain a combinatorial BMM algorithm with complexity $\cO(\N^{3-\epsilon'})$ , where $\epsilon' = 1-(1.5-\epsilon)/(1.5+\epsilon) > 0$. 
\end{proof}

\begin{example}
Consider the following instance of the BMM problem with $\N=6$ and $L=3$. 
\small
\begin{align*}
&\hspace{48pt}A & &\hspace{48pt}B & &\hspace{48pt}C\\
&\begin{bmatrix}
\begin{matrix}
  0 & 1 & 0  \\
  1 & 0 & 1 \\
  0 & 0 & 0
  \end{matrix}
  & \hspace{-8pt} \rvline \hspace{-8pt}&
  \hspace{-8pt}\begin{matrix}
  0 & 1 & 0 \\
  0 & 0 & 0 \\
  0 & 0 & 1
  \end{matrix}\\
   \hspace{-5pt}\xdash[3.5em] & & \hspace{-13pt}\xdash[3.5em]\\
 \begin{matrix}
  1 & 0 & 0  \\
  0 & 0 & 0 \\
  0 & 1 & 0 
  \end{matrix}
  & \hspace{-8pt} \rvline &
  \hspace{-8pt}\begin{matrix}
  0 & 1 & 0 \\
  1 & 0 & 0 \\
  0 & 0 & 0
  \end{matrix}
\end{bmatrix} \hspace{22pt}\times \hspace{-8pt}&
&\begin{bmatrix}
\begin{matrix}
  0 & 0 & 0  \\
  1 & 0 & 0 \\
  0 & 0 & 1
  \end{matrix}
  & \hspace{-8pt} \rvline &
  \hspace{-8pt}\begin{matrix}
  0 & 0 & 1 \\
  0 & 0 & 0 \\
  0 & 1 & 0
  \end{matrix}\\
 \hspace{-5pt}\xdash[3.5em] & & \hspace{-13pt}\xdash[3.5em]\\
 \begin{matrix}
  0 & 1 & 0  \\
  0 & 0 & 0 \\
  1 & 0 & 0 
  \end{matrix}
  & \hspace{-8pt} \rvline &
  \hspace{-8pt}\begin{matrix}
  0 & 0 & 0 \\
  1 & 0 & 0 \\
  0 & 1 & 0
  \end{matrix}
\end{bmatrix} \hspace{22pt}= \hspace{-8pt}&
&\begin{bmatrix}
\begin{matrix}
  \mathbf{1} & \mathbf{0} & \mathbf{0}  \\
  0 & 0 & 1 \\
  1 & 0 & 0
  \end{matrix}
  & \hspace{-8pt} \rvline &
  \hspace{-8pt}\begin{matrix}
  1 & 0 & 0 \\
  0 & 1 & 1 \\
  0 & 1 & 0
  \end{matrix}\\
 \hspace{-5pt}\xdash[3.5em] & & \hspace{-13pt}\xdash[3.5em]\\
 \begin{matrix}
  0 & 0 & 0  \\
  0 & 1 & 0 \\
  1 & 0 & 0 
  \end{matrix}
 & \hspace{-8pt} \rvline &
  \hspace{-8pt}\begin{matrix}
  1 & 0 & 1 \\
  0 & 0 & 0 \\
  0 & 0 & 0
  \end{matrix}
\end{bmatrix}
\end{align*}
\\

\normalsize
From matrices $A$ and $B$, we now show how the resulting matrix $C$ can be found by building and solving $4$ instances of the AP problem constructed as follows. The pattern is 

\[P=\textnormal{\texttt{\normalsize aaabaaa}}\cdot \textnormal{\texttt{\normalsize aaabaaa}}\cdot \textnormal{\texttt{\normalsize aaabaaa}}\cdot \textnormal{\texttt{\normalsize aaabaaa}}\cdot \textnormal{\texttt{\normalsize aaabaaa}}\cdot \textnormal{\texttt{\normalsize aaabaaa}}\]

where the six gadgets are separated by a $'\cdot'$ to be highlighted. For the AP instances, the vectors
$U^{(K,J)}$ shown below are the input bit vectors, and the sets $S^{(K,J)}$ are the input set of strings.

For each instance, the bit vector $V^{(K,J)}$ shown below is the output of the AP problem.

\begingroup 
\small
\setlength\arraycolsep{1.5pt}
\begin{align*}
&i & &\hspace{23pt} 1 & &\hspace{7pt}2 & &\hspace{7pt}3 & &\hspace{7pt}4 & &\hspace{7pt}5 & &\hspace{7pt}6\\
&U^{(1,1)}: \hspace{-5pt}& [&\begin{matrix}
0 & 1 & 0 & 0 & 0 & 0 & 0
\end{matrix}
\hspace{5pt}
|& & \hspace{-19pt}\begin{matrix}
1 & 0 & 1 & 0 & 0 & 0 & 0
\end{matrix}
\hspace{5pt}
|& & \hspace{-19pt}\begin{matrix}
0 & 0 & 0 & 0 & 0 & 0 & 0
\end{matrix}
\hspace{5pt}
|& & \hspace{-19pt}\begin{matrix}
1 & 0 & 0 & 0 & 0 & 0 & 0
\end{matrix}
\hspace{5pt}
|& & \hspace{-19pt}\begin{matrix}
0 & 0 & 0 & 0 & 0 & 0 & 0
\end{matrix}
\hspace{5pt}
|& & \hspace{-19pt}\begin{matrix}
0 & 1 & 0 & 0 & 0 & 0 & 0
\end{matrix}] \\
&S^{(1,1)}: & &\hspace{-4pt}\{\textnormal{\texttt{\normalsize aba,baaa}}\}\\
&V^{(1,1)}: \hspace{-5pt}& [&\begin{matrix}
0 & 0 & 0 & 0 & \mathbf{1} & \mathbf{0} & \mathbf{0}
\end{matrix}
\hspace{3pt}
|& & \hspace{-19pt}\begin{matrix}
0 & 0 & 0 & 0 & 0 & 0 & 1
\end{matrix}
\hspace{5pt}
|& & \hspace{-19pt}\begin{matrix}
0 & 0 & 0 & 0 & 0 & 0 & 0
\end{matrix}
\hspace{5pt}
|& & \hspace{-19pt}\begin{matrix}
0 & 0 & 0 & 0 & 0 & 0 & 0
\end{matrix}
\hspace{5pt}
|& & \hspace{-19pt}\begin{matrix}
0 & 0 & 0 & 0 & 0 & 0 & 0
\end{matrix}
\hspace{5pt}
|& & \hspace{-19pt}\begin{matrix}
0 & 0 & 0 & 0 & 1 & 0 & 0
\end{matrix}]\\
& \\
&U^{(1,2)}: \hspace{-5pt}& [&\begin{matrix}
0 & 1 & 0 & 0 & 0 & 0 & 0
\end{matrix}
\hspace{5pt}
|& & \hspace{-19pt}\begin{matrix}
1 & 0 & 1 & 0 & 0 & 0 & 0
\end{matrix}
\hspace{5pt}
|& & \hspace{-19pt}\begin{matrix}
0 & 0 & 0 & 0 & 0 & 0 & 0
\end{matrix}
\hspace{5pt}
|& & \hspace{-19pt}\begin{matrix}
1 & 0 & 0 & 0 & 0 & 0 & 0
\end{matrix}
\hspace{5pt}
|& & \hspace{-19pt}\begin{matrix}
0 & 0 & 0 & 0 & 0 & 0 & 0
\end{matrix}
\hspace{5pt}
|& & \hspace{-19pt}\begin{matrix}
0 & 1 & 0 & 0 & 0 & 0 & 0
\end{matrix}]\\
&S^{(1,2)}: & &\hspace{-4pt}\{\textnormal{\texttt{\normalsize aabaaa,baa}}\}\\
&V^{(1,2)}: \hspace{-5pt}& [&\begin{matrix}
0 & 0 & 0 & 0 & 0 & 0 & 0
\end{matrix}
\hspace{5pt}
|& & \hspace{-19pt}\begin{matrix}
0 & 0 & 0 & 0 & 0 & 1 & 1
\end{matrix}
\hspace{5pt}
|& & \hspace{-19pt}\begin{matrix}
0 & 0 & 0 & 0 & 0 & 0 & 0
\end{matrix}
\hspace{5pt}
|& & \hspace{-19pt}\begin{matrix}
0 & 0 & 0 & 0 & 0 & 0 & 1
\end{matrix}
\hspace{5pt}
|& & \hspace{-19pt}\begin{matrix}
0 & 0 & 0 & 0 & 0 & 0 & 0
\end{matrix}
\hspace{5pt}
|& & \hspace{-19pt}\begin{matrix}
0 & 0 & 0 & 0 & 0 & 0 & 0
\end{matrix}]\\
& \\
&U^{(2,1)}: \hspace{-5pt}& [&\begin{matrix}
0 & 1 & 0 & 0 & 0 & 0 & 0
\end{matrix}
\hspace{5pt}
\hspace{1pt}|& & \hspace{-19pt}\begin{matrix}
0 & 0 & 0 & 0 & 0 & 0 & 0
\end{matrix}
\hspace{5pt}
|& & \hspace{-19pt}\begin{matrix}
0 & 0 & 1 & 0 & 0 & 0 & 0
\end{matrix}
\hspace{5pt}
|& & \hspace{-19pt}\begin{matrix}
0 & 1 & 0 & 0 & 0 & 0 & 0
\end{matrix}
\hspace{5pt}
|& & \hspace{-19pt}\begin{matrix}
1 & 0 & 0 & 0 & 0 & 0 & 0
\end{matrix}
\hspace{5pt}
|& & \hspace{-19pt}\begin{matrix}
0 & 0 & 0 & 0 & 0 & 0 & 0
\end{matrix}]\\
&S^{(2,1)}: & &\hspace{-4pt}\{\textnormal{\texttt{\normalsize aabaa,ba}}\}\\
&V^{(2,1)}: \hspace{-5pt}& [&\begin{matrix}
0 & 0 & 0 & 0 & \mathbf{0} & \mathbf{0} & \mathbf{0} 
\end{matrix}
\hspace{3pt}
|& & \hspace{-19pt}
\begin{matrix}
0 & 0 & 0 & 0 & 0 & 0 & 0
\end{matrix}
\hspace{5pt}
|& & \hspace{-19pt}\begin{matrix}
0 & 0 & 0 & 0 & 1 & 0 & 0
\end{matrix}
\hspace{5pt}
|& & \hspace{-19pt}\begin{matrix}
0 & 0 & 0 & 0 & 0 & 0 & 0
\end{matrix}
\hspace{5pt}
|& & \hspace{-19pt}\begin{matrix}
0 & 0 & 0 & 0 & 0 & 1 & 0
\end{matrix}
\hspace{5pt}
|& & \hspace{-19pt}\begin{matrix}
0 & 0 & 0 & 0 & 0 & 0 & 0
\end{matrix}]\\
& \\
&U^{(2,2)}: \hspace{-5pt}& [&\begin{matrix}
0 & 1 & 0 & 0 & 0 & 0 & 0
\end{matrix}
\hspace{5pt}
|& & \hspace{-19pt}\begin{matrix}
0 & 0 & 0 & 0 & 0 & 0 & 0
\end{matrix}
\hspace{5pt}
|& & \hspace{-19pt}\begin{matrix}
0 & 0 & 1 & 0 & 0 & 0 & 0
\end{matrix}
\hspace{5pt}
|& & \hspace{-19pt}\begin{matrix}
0 & 1 & 0 & 0 & 0 & 0 & 0
\end{matrix}
\hspace{5pt}
|& & \hspace{-19pt}\begin{matrix}
1 & 0 & 0 & 0 & 0 & 0 & 0
\end{matrix}
\hspace{5pt}
|& & \hspace{-19pt}\begin{matrix}
0 & 0 & 0 & 0 & 0 & 0 & 0
\end{matrix}]\\
&S^{(2,2)}: & &\hspace{-4pt}\{\textnormal{\texttt{\normalsize aba,baa}}\}\\
&V^{(2,2)}: \hspace{-5pt}& [&\begin{matrix}
0 & 0 & 0 & 0 & 1 & 0 & 0
\end{matrix}
\hspace{5pt}
|& & \hspace{-19pt}\begin{matrix}
0 & 0 & 0 & 0 & 0 & 0 & 0
\end{matrix}
\hspace{5pt}
|& & \hspace{-19pt}\begin{matrix}
0 & 0 & 0 & 0 & 0 & 1 & 0
\end{matrix}
\hspace{5pt}
|& & \hspace{-19pt}\begin{matrix}
0 & 0 & 0 & 0 & 1 & 0 & 0
\end{matrix}
\hspace{5pt}
|& & \hspace{-19pt}\begin{matrix}
0 & 0 & 0 & 0 & 0 & 0 & 0
\end{matrix}
\hspace{5pt}
|& & \hspace{-19pt}\begin{matrix}
0 & 0 & 0 & 0 & 0 & 0 & 0
\end{matrix}]
\end{align*}
\endgroup

As an example on how to obtain matrix $C$, consider the bold part of $C$ above (\emph{i.e.,} the first line of block $(1,1)$ of $C$). This is obtained by taking the disjunction of the bold parts of $V^{(1,1)}$ and $V^{(2,1)}$.
\end{example}
\normalsize

\section{EDSM Conditional Lower Bound}\label{sec:lb}
Since the lower bound for the AP problem does not imply {\em per se} a lower bound for the whole EDSM problem, in this section we show a conditional lower bound for the EDSM problem.
Specifically, 
we perform a reduction from Triangle Detection to show that, if the EDSM problem could be solved in $\cO(nm^{1.5-\epsilon}+N)$ time, this would imply the existence of a truly subcubic algorithm for TD.
We show that TD can be reduced to the  decision version of the
EDSM problem: the goal is to detect whether
there exists at least one occurrence of $P$ in $\tilde{T}$.
To this aim, given three matrices $A$, $B$, $C$, we first decompose matrix $B$ into blocks of size $\mathcal{N}/s\times\mathcal{N}/s$, where $s$ is a parameter to be determined later; the pattern $P$ is obtained by concatenating a number (namely $z=\mathcal{N}s^2$) of constituent parts $P_i$ of length $\cO(\mathcal{N}/s)$, each one built with five letters from disjoint subalphabets. The ED text $\tilde{T}$ is composed of three parts: the central part consists of three degenerate segments, the first one encoding the 1s of matrix $A$, the second one those of matrix $B$ and the third one those of matrix $C$. These segments are built in such a way that the concatenation of strings of subsequent segments is of the same form as the pattern's building blocks. This central part is then padded to the left and to the right with sets containing appropriately chosen concatenations of substrings $P_i$ of $P$, so that an occurrence of the pattern in the text implies that one of its building blocks matches the central part of the text,
thus corresponding to a triangle. Formally:

\thmhardness*
\begin{proof}
Consider an instance of TD, where we are given three $\N\times\N$ Boolean matrices
$A,B,C$, and the question is to check if there exist $i,j,k$ such that $A[i,j]=B[j,k]=C[k,i]=1$.
Let $s$ be a parameter, to be determined later, that corresponds to decomposing $B$ into blocks of size $(\N/s)\times(\N/s)$.
We reduce to an instance of EDSM over an alphabet $\Sigma$ of size $\cO(\N)$.

\noindent\textbf{Pattern $P$.}
We construct $P$ by concatenating, in some fixed order, the following strings:
$$P(i,x,y) = v(i) x a^{\N/s} x \$\$ y a^{\N/s} y v(i)$$
for every $i=1,2,\dots,\N$ and $x,y=1,2,\dots,s$,
where $a \in \Sigma_1$, $\$ \in \Sigma_{2}$, $x \in \Sigma_3$, $y \in \Sigma_4$, $v(i) \in \Sigma_5$,
and $\Sigma_{1},\Sigma_{2},\dots,\Sigma_{5}$ are disjoint subsets of $\Sigma$.

\noindent\textbf{ED text $\tilde{T}$.} 
The text $\tilde{T}$ consists of three parts.
Its middle part encodes all the entries equal to $1$ in matrices $A$, $B$ and $C$, and consists of three string sets ${\cal X}=$$\mathcal{X}_1 \cdot \mathcal{X}_2 \cdot  \mathcal{X}_3$, 
where: 
\begin{enumerate}
\item $\mathcal{X}_1$ contains all strings of the form \(v(i) x a^j\),
for some $i\in[\N]$, $x\in[s]$ and $j\in[\N/s]$ such that
$A[i,(x-1)\cdot(\N/s)+j]=1$;
\item $\mathcal{X}_2$ contains all strings of the form
\(a^{\N/s-j}\) \(x \$\$ y a^{\N/s-k}\), for some $x,y\in[s]$ and $j,k=\in[\N/s]$ such that $B[(x-1)\cdot(\N/s)+j,(y-1)\cdot(\N/s)+k]=1$, i.e., if the corresponding entry of $B$ is 1;
\item  $\mathcal{X}_3$ contains all strings of the form
\(a^{k} y v(i)\), for some \(i\in[\N]\), \(y\in[s]\) and \(k\in[\N/s]\) such that \(C[(y-1)\cdot(\N/s)+k,i]=1\).
\end{enumerate}

\noindent It is easy to see that $|P(i,x,y)| = \cO(\N/s)$. This implies the following:
\begin{enumerate}
\item The length of the pattern is \(m=\cO(\N\cdot s^2\cdot \N/s) = \cO(\N^2\cdot s)\);
\item The total length of $\mathcal{X}$ is \(||{\cal X}||=\cO(\N\cdot s\cdot \N/s \cdot \N/s+s^2\cdot(\N/s)^2\cdot \N/s+\N\cdot s \cdot \N/s \cdot \N/s)=\cO(\N^3/s)\).
\end{enumerate}

\noindent By the above construction, we obtain the following fact.

\begin{fact}
\label{fact:centralPart}
$P(i,x,y)$ matches ${\cal X}$
if and only if the following holds for some $j,k=1,2,\ldots,\N/s$: 
$$A[i,(x-1)\cdot (\N/s)+j]=B[(x-1)\cdot(\N/s)+j,(y-1)\cdot(\N/s)+k]=C[(y-1)\cdot(\N/s)+k,i]=1$$
\end{fact}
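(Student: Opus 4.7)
The plan is to exploit the strict partitioning of $\Sigma$ into five disjoint subalphabets $\Sigma_1,\dots,\Sigma_5$ to show that any valid factorization $P(i,x,y)=P_1 P_2 P_3$ witnessing the match, with $P_1$ a suffix of an element of $\mathcal{X}_1$, $P_2\in\mathcal{X}_2$, and $P_3$ a prefix of an element of $\mathcal{X}_3$, is essentially rigid: the alphabet classes of the letters $v(i)$, $x$, $\$$, and $y$ force each boundary into a unique position, so the match becomes equivalent to the existence of indices $j,k\in[1,\N/s]$ with the three claimed matrix entries equal to $1$.

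For the forward direction, I would argue first that $P_1$ must equal an entire element of $\mathcal{X}_1$. Indeed, every element of $\mathcal{X}_1$ has the form $v(i')x'a^{j}$, and its only letter from $\Sigma_5$ is the leading $v(i')$. Since $P(i,x,y)$ begins with $v(i)\in\Sigma_5$, the suffix $P_1$ must contain that first letter, forcing $P_1$ to be the whole element. A letter-by-letter comparison then yields $i'=i$, $x'=x$, and a value $j\in[1,\N/s]$ with $A[i,(x-1)\N/s+j]=1$. By a completely symmetric argument on the tail $v(i)$ of $P(i,x,y)$, the prefix $P_3$ of an element of $\mathcal{X}_3$ must be the whole element $a^{k} y v(i)$, giving $C[(y-1)\N/s+k,i]=1$. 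The middle segment of $P(i,x,y)$ between these two is then exactly $a^{\N/s-j}x\$\$ ya^{\N/s-k}$, and belonging to $\mathcal{X}_2$ is by construction equivalent to $B[(x-1)\N/s+j,(y-1)\N/s+k]=1$.

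The converse is immediate by construction: given $j,k$ satisfying the three matrix conditions, the strings $v(i)xa^{j}$, $a^{\N/s-j}x\$\$ya^{\N/s-k}$, $a^{k}yv(i)$ lie in $\mathcal{X}_1,\mathcal{X}_2,\mathcal{X}_3$ respectively, concatenate to $P(i,x,y)$, and so witness the match. The only step requiring genuine care, and the main obstacle, is ruling out alternative decompositions in the forward direction: a priori one might try to take $P_1$ to be a shorter suffix of some element of $\mathcal{X}_1$ (omitting the $v(i')$), which would decouple the matrix index $i$ from the letter $v(i)$ and break the correspondence. Using the subalphabet classes $\Sigma_2,\Sigma_3,\Sigma_4,\Sigma_5$ as ``landmarks'' that can appear in only one position each inside an element of each $\mathcal{X}_t$ rules out all such alternative alignments, reducing the verification to the direct case analysis sketched above.
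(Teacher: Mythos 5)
Your proof is correct and is exactly the reasoning the paper has in mind: the paper states this fact without proof, simply saying it follows ``by the above construction,'' and the implicit justification is precisely your landmark argument that the disjoint subalphabets $\Sigma_1,\dots,\Sigma_5$ pin the boundaries of any factorization $P(i,x,y)=P_1P_2P_3$ (with $P_1$ a suffix from $\mathcal{X}_1$, $P_2\in\mathcal{X}_2$, $P_3$ a prefix from $\mathcal{X}_3$) to unique positions. You have also correctly handled the degenerate decompositions the paper glosses over (e.g.\ $P_1=\varepsilon$ or $P_3=\varepsilon$ are impossible because elements of $\mathcal{X}_2$ neither begin nor end with a $\Sigma_5$ letter), and a similar landmark argument appears later in the paper's proof of Lemma~\ref{lem:match}, where the $\$\$$ block from $\Sigma_2$ plays the analogous role at the level of the full pattern.
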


Solving the TD problem thus reduces to taking the disjunction of all such conditions.
Let us write down all strings $P(i,x,y)$ in some arbitrary but fixed order to obtain \(P=P_1 P_2 \ldots P_z\) with $z=\N s^2$, where every \( P_t = P(i,x,y)\), for some $i,x,y$. We aim to construct a small number of sets of strings that, when considered as an ED text, match any
prefix $P_1 P_2 \dots P_t$ of the pattern, $1\le t \le z-1$; a similar construction can be carried on to obtain sets of strings that match any suffix \(P_k \dots P_{z-1}P_z\), $2\le k\le z$. These sets will then be added to the left and to the right of $\mathcal{X}$, respectively, to obtain the ED text $\tilde{T}$.

\noindent\textbf{ED Prefix.} 
We construct $\log z$ sets of strings as follows. The first one contains the empty string $\varepsilon$ and $P_1P_2 \dots P_{z/2}$. The second one contains $\varepsilon$, $P_1P_2 \dots P_{z/4}$ and $P_{z/2+1} \dots P_{z/2+z/4}$. The third one contains $\varepsilon$, \( P_1P_2 \dots P_{z/8}\), \(P_{z/4+1} \dots P_{z/4+z/8}\), \(P_{z/2+1}\dots P_{z/2+z/8}\) and $P_{z/2+z/4+1} \dots P_{z/2+z/4+z/8}$. 
\\  Formally,
for every $i=1,2,\dots,\log z$, the $i$-th of such sets is:
\[\tilde{T}^p_i=\varepsilon \cup \{P_{j\frac{z}{2^{i-1}}+1}\dots P_{j\frac{z}{2^{i-1}}+\frac{z}{2^i}} \mid j=0,1,\dots,2^{i-1}-1 \} .\]
\textbf{ED Suffix.}
We similarly construct $\log z$ sets to be appended to $\mathcal{X}$:
\[\tilde{T}^s_{i}=\varepsilon \cup \{P_{z-j\frac{z}{2^{i-1}}-\frac{z}{2^i}+1}\dots P_{z-j\frac{z}{2^{i-1}}} \mid j=0,1,\dots,2^{i-1}-1 \} . \]
The total length of all the ED prefix and ED suffix strings is $\cO(\log z\cdot \N^2\cdot s) = \cO(\N^2\cdot s \cdot \log \N)$. 
The whole ED text $\tilde{T}$ is thus:
$\tilde{T}=\tilde{T}^p_1\cdot\dots \cdot\tilde{T}^p_{\log z}\cdot\mathcal{X}\cdot\tilde{T}^s_{\log z}\cdot\dots\cdot\tilde{T}^s_{1}$.
We next show how a solution of such instance of EDSM corresponds to the solution of TD.

\begin{lemma}\label{lem:match}
The pattern $P$ occurs in the ED text $\tilde{T}$ if and only if there exist $i,j,k$ such that $A[i,j]=B[j,k]=C[k,i]=1$.
\end{lemma}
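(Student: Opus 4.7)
My plan is to prove both directions of the equivalence separately, with the forward direction (triangle implies match) reducing to a straightforward construction that uses Fact~\ref{fact:centralPart}, and the reverse direction (match implies triangle) requiring a length-and-alignment analysis.

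For $(\Leftarrow)$, I would first decompose $j$ and $k$ into $(x,j')$ and $(y,k')$ with $x,y\in[s]$ and $j',k'\in[\N/s]$, so that the three strings $v(i)xa^{j'}\in\mathcal{X}_1$, $a^{\N/s-j'}x\$\$ya^{\N/s-k'}\in\mathcal{X}_2$, and $a^{k'}yv(i)\in\mathcal{X}_3$ exist by construction of $\mathcal{X}$ and concatenate exactly to $P(i,x,y)$. Letting $t$ be the unique index with $P_t=P(i,x,y)$, I would then match the ED prefix to $P_1\dots P_{t-1}$ using the binary representation $t-1=\sum_i b_i\,z/2^i$: from each $\tilde{T}^p_i$ pick $\varepsilon$ when $b_i=0$, and otherwise pick the block of length $z/2^i$ that starts at the running offset $\sum_{h<i} b_h\,z/2^h$. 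By the definition of $\tilde{T}^p_i$ such a block belongs to the set, and the concatenation gives exactly $P_1\dots P_{t-1}$; the ED suffix handles $P_{t+1}\dots P_z$ in a completely analogous way.

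For $(\Rightarrow)$, my plan is a two-step argument. First, a length count shows that the match must fully include all three middle segments: any path through the ED prefix alone has length at most $\sum_{i=1}^{\log z}(z/2^i)|P_t|=(z-1)|P_t|<|P|$, and by a short case analysis combined with the fact that every string in $\mathcal{X}_1\cup\mathcal{X}_2\cup\mathcal{X}_3$ is strictly shorter than $|P_t|$, the matching cannot stop partway through $\mathcal{X}$. Second, I would invoke the length identity
\[|\mathcal{X}_1\text{-string}|+|\mathcal{X}_2\text{-string}|+|\mathcal{X}_3\text{-string}|=(2+j)+(2\N/s+4-j-k)+(k+2)=2\N/s+8=|P_t|,\]
which holds regardless of the particular $j,k$, together with the observation that each non-empty string in $\tilde{T}^p_i$ (and $\tilde{T}^s_i$) has length $(z/2^i)|P_t|$, a multiple of $|P_t|$. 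These two facts should force the match to split cleanly at $P_t$-block boundaries, so that $\mathcal{X}_1\cdot\mathcal{X}_2\cdot\mathcal{X}_3$ matches some $P_t=P(i,x,y)$ in full, and Fact~\ref{fact:centralPart} then yields the triangle.

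The step I expect to be the main obstacle is ruling out that the two boundary pieces of the matching (a suffix of the string chosen at $\tilde{T}[\alpha]$ and a prefix of the one chosen at $\tilde{T}[\beta]$) shift the alignment to land \emph{inside} a $P_t$-block. My plan here is to exploit the five pairwise disjoint subalphabets $\Sigma_1,\dots,\Sigma_5$ together with the observation that each $v(i)\in\Sigma_5$ occurs in a block $P_t$ only at its very first and its very last character: any suffix or prefix truncating a block internally would place a letter from the wrong subalphabet at a boundary position and hence fail to coincide with $P$. Combined with the length identity above, this should pin down the block-boundary alignment and complete the argument.
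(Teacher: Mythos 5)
Your proposal is correct and follows essentially the same strategy as the paper: Fact~\ref{fact:centralPart} plus the recursive structure of the ED prefix/suffix for the forward direction, and block alignment forced by the disjoint subalphabets together with the uniform block length $|P_t|=2\N/s+8$ for the converse. The only cosmetic difference is that you anchor the alignment on the letters $v(i)\in\Sigma_5$, while the paper anchors on the $\$$ letters from $\Sigma_2$ (and explicitly appeals to the distinctness of the $P_t$'s); both work for the same reason, and your explicit binary-expansion argument for the ED prefix/suffix match simply fills in the step the paper leaves as ``can be proved by induction.''
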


\begin{proof}
By Fact~\ref{fact:centralPart}, if such $i,j,k$ exist then $P_{t}$ matches ${\cal X}$, for some $t\in\{1,\dots,z\}$. 
Then, by construction of the sets $\tilde{T}^{p}_{i}$ and $\tilde{T}^{s}_{i}$, the prefix $P_{1}\dots P_{t-1}$ matches the ED prefix
(this can be proved by induction), and similarly the suffix $P_{t+1}\dots P_{z}$ matches the ED suffix,
so the whole $P$ matches $\tilde{T}$, and so $P$ occurs therein.
Because of the letters \$ appearing only in the center of $P_{i}$s and strings from ${\cal X}_{2}$,
every $P_{i}$s and a concatenation of $X_{1}\in {\cal X}_{1}$, $X_{2}\in {\cal X}_{2}$, $X_{3}\in {\cal X}_{3}$ having
the same length, and the $P_{i}$s being distinct, there is
an occurrence of the pattern $P$ in $\tilde{T}$ if and only if $X_{1}X_{2}X_{3}=P_{t}$ for some $t$ and
$X_{1}\in {\cal X}_{1}$, $X_{2}\in {\cal X}_{2}$, $X_{3}\in {\cal X}_{3}$. But then, by Fact~\ref{fact:centralPart}
there exists a triangle.
\end{proof}
Note that for the EDSM problem we have $m\!=\!\N^2\cdot s$, $n=\!1\!+\!2\!\log z$ and $N=||{\cal X}||+\cO(\N^2\!\cdot\!s\!\cdot\!\log \N)$. Thus if we had a solution running in $\cO(\log z\cdot m^{1.5-\epsilon}+||{\cal X}||\!+\!\N^2\cdot s\cdot \log \N)\!=\!
\cO(\log\N\cdot(\N^2\cdot s)^{1.5-\epsilon}+\N^3/s)$ time, for some $\epsilon>0$, 
by choosing a sufficiently small $\alpha > 0$ and setting $s=\N^{\alpha}$ we would obtain, for some $\delta > 0 $, an \(\cO(\N^{3-\delta})\)-time algorithm
for TD.
\end{proof}

\section{An $\ctO(nm^{\omega-1}+N)$-time Algorithm for EDSM}
\label{sec:algo}

Our goal is to design a non-combinatorial 
$\ctO(nm^{\omega-1}+N)$-time algorithm for EDSM, which in turn can be achieved  with a non-combinatorial $\ctO(m^{\omega-1}+N)$-time algorithm for the AP problem, that is the bottleneck of EDSM (cf.~\cite{DBLP:conf/cpm/GrossiILPPRRVV17}).

We reduce AP to a logarithmic number of restricted instances of the same problem, based on the length of the strings in ${\cal S}$. We start by giving a lemma that we will use to process na\"ively the strings of length up to a constant $c$, to be determined later, in $\cO(m+N)$ time.

\begin{lemma}\label{lem:naive}
For any integer $t$, all strings in ${\cal S}$ of length at most $t$ can be processed in $\cO(m\log m+mt+N)$ time.
\end{lemma}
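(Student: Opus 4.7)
The plan is to process all short strings by building a trie $\mathcal{T}$ of $\mathcal{S}_{\le t}:=\{S\in\mathcal{S}:|S|\le t\}$ and then walking down $\mathcal{T}$ from every active position of $P$. Concretely, $\mathcal{T}$ has $\cO(N)$ nodes and edges, and the node at depth $|S|$ on the root-to-$S$ path is marked as accepting for every $S\in\mathcal{S}_{\le t}$. For each $i\in[0,m-1]$ with $U[i]=1$, I start at the root of $\mathcal{T}$ and read $P[i+1],P[i+2],\ldots$, stopping as soon as the walk falls off $\mathcal{T}$ or $t$ letters have been consumed; whenever the walk enters an accepting node at depth $\ell$, I set $V[i+\ell]:=1$. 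Correctness is immediate from the definition of the AP problem: $V[j]$ is set exactly when there exist $i$ with $U[i]=1$ and $S\in\mathcal{S}_{\le t}$ such that $P[i+1\dd i+|S|]=S$ and $j=i+|S|$.

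The delicate step is making each trie transition cost $\cO(1)$ worst-case, so that the at most $m$ walks of length at most $t$ take a total of $\cO(mt)$ time. The key observation is that although $\Sigma$ can be huge, only the at most $m$ distinct letters of $P$ are ever read during any walk, and edges of $\mathcal{T}$ labeled by letters not appearing in $P$ are useless. I would therefore sort the distinct letters of $P$ in $\cO(m\log m)$ time, assigning each a rank in $[1,m]$, then scan the edges of $\mathcal{T}$ once and relabel each edge by its rank (or drop it if its label does not occur in $P$). After this reduction, all $\cO(N)$ edges live over an integer alphabet of size at most $m$, and per-node dictionaries of children supporting $\cO(1)$-time lookups can be constructed in $\cO(N)$ total time, for instance by globally bucket-sorting the edges by parent and label.

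Summing the costs, the alphabet reduction plus trie-with-dictionaries construction takes $\cO(m\log m+N)$, the walks contribute $\cO(mt)$, and scanning $U$ and writing $V$ take $\cO(m)$, giving the claimed $\cO(m\log m+mt+N)$ bound. The main obstacle is arranging the worst-case $\cO(1)$-time deterministic transition cleanly; with randomized universal hashing it would be immediate in expected $\cO(N)$ preprocessing and would even let one drop the $\cO(m\log m)$ term, so the single sort of the alphabet of $P$ is essentially the price paid for derandomizing the dictionary lookups.
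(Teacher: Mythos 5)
Your strategy --- build a trie of the short strings of ${\cal S}$ and walk forward from each active position of $P$ --- is a genuinely different decomposition from the paper's, which instead builds the suffix tree of $P$, marks the node reached by each $S\in{\cal S}$, and then, for every length $t'\le t$, slides a length-$t'$ window over $P$ while tracking its suffix-tree node. Both routes are natural, and both hinge on the same delicate point: achieving deterministic worst-case $\cO(1)$-time letter transitions over a potentially large ordered alphabet, so that the $N$ term stays linear. That is where your argument has a gap. Sorting the distinct letters of $P$ yields only a sorted array, so mapping an arbitrary edge label $c\in\Sigma$ to its rank among $P$'s letters requires a binary search at $\cO(\log m)$ per lookup; relabeling the $\cO(N)$ trie edges therefore costs $\cO(N\log m)$, not $\cO(N)$, and the total becomes $\cO(m\log m+mt+N\log m)$, which exceeds the claimed bound whenever $N$ dominates. (The same issue is silently present in the cost of building the trie of ${\cal S}_{\le t}$ over an arbitrary ordered alphabet in the first place, which you do not account for.) Your closing claim that ``the single sort of the alphabet of $P$ is essentially the price paid for derandomizing the dictionary lookups'' is thus not quite right: a sorted array is not an $\cO(1)$-query dictionary. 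The repair is to construct a \emph{deterministic $\cO(1)$-query static dictionary} on the distinct letters of $P$ in $\cO(m\log m)$ time --- this is precisely the citation the paper relies on here and again in Section~\ref{sec:type1}, where it is stressed that this avoids any assumption on $|\Sigma|$ --- then use it to discard in $\cO(N)$ time any $S\in{\cal S}_{\le t}$ containing a letter outside $P$, rewrite the surviving strings over the integer alphabet $[1,m]$, and only then build the trie and its per-node bucket tables in $\cO(N+m)$ time by radix-sorting edges. With that substitution your walk-based argument does give $\cO(m\log m+mt+N)$ as claimed.
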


\begin{proof}
We first construct the suffix tree $ST$ of $P$ and store, for every node, the first letters on its outgoing edges in a static
dictionary with constant access time. This can be done in $\cO(m\log m)$ time~\cite{ruzic08constructing}.
For every $S\in{\cal S}$, find and mark its corresponding (implicit or explicit) node of $ST$. This takes $\cO(N)$ time overall. For every possible length $t'\!\leq\! t$, scan $P$ with a window of length $t'$ while maintaining
its corresponding node of $ST$. This takes $\cO(m)$ time overall. If the current window $P[i\dd (i+t'-1)]$ corresponds to a marked node of $ST$ and additionally $U[i-1]=1$, we set $V[i+t'-1]=1$.
\end{proof}

We build the rest of the restricted instances of the AP problem by restricting on strings in ${\cal S}_k \subseteq {\cal S}$ of length in $[(19/18)^k,(19/18)^{k+1})$ for each integer $k$ ranging from $\left\lceil\frac{\log c}{\log (19/18)}\right\rceil$ to $\left\lfloor\frac{\log m}{\log (19/18)}\right\rfloor$. These intervals are a partition of the set of all strings in ${\cal S}$ of length up to $m$; longer strings are not addressed in EDSM by solving AP.

For each integer $k$ from $\left\lceil\frac{\log c}{\log (19/18)}\right\rceil$ to $\left\lfloor\frac{\log m}{\log (19/18)}\right\rfloor$, let $\ell$ be an integer such that the length of every string in ${\cal S}_k$
belongs to $[9/8 \cdot \ell,5/4\cdot\ell)$. 
Note that such an integer always exists for an appropriate choice of the integer constant $c$. In fact, it must hold that
\[\frac{9}{8}\cdot\ell~\le~\left(\frac{19}{18}\right)^k<~\left(\frac{19}{18}\right)^{k+1}\le~\frac{5}{4}\cdot\ell \iff \frac{4}{5}\cdot\left(\frac{19}{18}\right)^{k+1}\le~\ell~\le~\frac{8}{9}\cdot\left(\frac{19}{18}\right)^{k}.\]
To ensure that there exists an \emph{integer} $\ell$ satisfying such conditions, it must actually hold that
\[\frac{4}{5}\cdot\left(\frac{19}{18}\right)^{k+1}+1~\le~\frac{8}{9}\cdot\left(\frac{19}{18}\right)^{k}\iff \frac{45}{2}~\le~\left(\frac{19}{18}\right)^{k}.\]
The last equation holds for $k\ge 58$, implying that we will process na\"ively the strings of length up to $c=23$, and each ${\cal S}_k$, for $k$ ranging from $58$ to $\left\lfloor\frac{\log m}{\log (19/18)}\right\rfloor$, will be processed separately as described in the next paragraph.

Denoting by $N_k$ the total size of strings in ${\cal S}_k$, we have that, if 
we solve every such instance of AP in $\cO($$N_k$$+f(m))$ time, then we can solve the original instance of AP in $\cO(N+f(m)\log m)$ time by taking the results disjunction. 
Switching to $\ctO$ notation that disregards polylog factors, it thus suffices to solve each such instance of the AP problem in $\ctO(N+m^{\omega-1})$ time.

We further partition the strings in ${\cal S}_k$ into three types, compute the corresponding bit vector $V$ for each type separately 
and, finally, take the disjunction of the resulting bit vectors $V$ to obtain the answer for each restricted instance.

\medskip
\noindent \textbf{Partitioning ${\cal S}_k$.} 
Keeping in mind that from now on (until Section~\ref{ssec:wrap}) we address the AP problem assuming that ${\cal S}$ only contains strings of length in $[9/8 \cdot \ell,5/4\cdot \ell)$, and thus is in fact ${\cal S}_k$, to lighten the notation we now switch back to denote it simply with ${\cal S}$.
The three types of strings are as follows:
\begin{description}
\item[Type 1:] Strings $S\in {\cal S}$ such that every length-$\ell$ substring of $S$ is not strongly periodic.
\item[Type 2:] Strings $S\in {\cal S}$ containing at least one length-$\ell$ substring that is not strongly periodic and at least one length-$\ell$ substring that is strongly periodic.
\item[Type 3:] Strings $S\in {\cal S}$ such that every length-$\ell$ substring of $S$ is strongly periodic (in Lemma~\ref{lem:type3} we show that in this case $\per(S)\leq \ell/4$).
\end{description}
These three types are evidently a partition of ${\cal S}$. We start with showing that, in fact, strings of type 3 are exactly strings with period at most $\ell / 4$.

\begin{lemma}\label{lem:type3}
Let $S$ be a string. If $\per(S[j\dd j+\ell-1]) \leq \ell/4$ for every $j$ then $\per(S) \leq \ell/4$.
\end{lemma}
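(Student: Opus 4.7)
The plan is to show that all length-$\ell$ windows $W_j := S[j\dd j+\ell-1]$ of $S$ share a common smallest period $p \le \ell/4$, and then to conclude that $p$ is a period of $S$ itself because any two positions of $S$ at distance $p$ are simultaneously covered by some window.

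First, set $p_j := \per(W_j) \le \ell/4$ for every valid starting position $j$. I plan to prove that $p_j = p_{j+1}$ for every two consecutive windows, and then chain this equality along $j$ to obtain a common value $p$. Fix such a $j$ and look at the overlap $O := S[j+1\dd j+\ell-1]$ of length $\ell-1$. Both $p_j$ and $p_{j+1}$ are periods of $O$ because $O$ is a factor of $W_j$ and of $W_{j+1}$; moreover $p_j + p_{j+1} \le \ell/2 \le (\ell-1)+1$, so Lemma~\ref{lem:periodicity} (Fine--Wilf) yields that $d := \gcd(p_j, p_{j+1})$ is a period of $O$.

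The key step is to lift $d$ from a period of $O$ to a period of the full $W_j$ (and, symmetrically, of $W_{j+1}$). For $W_j$, all period-$d$ equations except the boundary one $S[j] = S[j+d]$ are already inherited from $O$. Writing $p_j = kd$ and chaining the period-$d$ equality inside $O$ gives $S[j+d] = S[j+2d] = \cdots = S[j+kd] = S[j+p_j]$, and $S[j+p_j] = S[j]$ by periodicity of $W_j$. Hence $d$ is a period of both $W_j$ and $W_{j+1}$; by minimality of $p_j$ and $p_{j+1}$ together with $d \mid p_j$ and $d \mid p_{j+1}$, I conclude $p_j = d = p_{j+1}$.

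Iterating this equality along $j$ yields a common value $p \le \ell/4$ with $p_j = p$ for every window. To finish, for any $i \in \{1,\ldots,|S|-p\}$ I choose $j := \max\{1,\, i+p-\ell+1\}$; since $p \le \ell - 1$, the window $W_j$ is well defined and contains both positions $i$ and $i+p$, so $S[i] = S[i+p]$ follows from $W_j$ having period $p$. I expect the lifting step to be the main obstacle: Fine--Wilf only certifies a period of the overlap, and \emph{a priori} the gcd need not survive when a boundary letter is restored; what makes the step go through is the divisibility $d \mid p_j$, which allows chaining period-$d$ equalities inside $O$ to reach a position congruent to $j$ modulo $p_j$, combined with the hypothesis $p_j \le \ell/4$, which both shrinks $p_j + p_{j+1}$ enough to invoke Fine--Wilf and leaves enough room in $O$ for the chain.
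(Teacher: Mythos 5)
Your proof is correct and takes essentially the same route as the paper's: apply the periodicity lemma to the overlap of two consecutive length-$\ell$ windows, lift the gcd $d$ back to a period of the full window by chaining period-$d$ equalities through multiples of $d$ and closing the loop with the window's own period $p_j$, conclude all windows share a common period $p$, and then cover every pair $(i,i+p)$ by some window. The paper phrases the lifting step as a contradiction argument about an abstract pair $aW$, $Wb$ whereas you argue directly and constructively about $W_j$, $W_{j+1}$, but the underlying idea (the divisibility $d\mid p_j$ is what rescues the boundary letter) is identical.
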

\begin{proof}
We first show that, for any string $W$ and letters $a,b$, 
if $\per(aW) \leq |aW|/4$ and $\per(Wb) \leq |Wb|/4$ then $\per(aW)=\per(Wb)$. 
This follows from Lemma~\ref{lem:periodicity}: since $\per(aW)$ and $\per(Wb)$ are both periods of $W$ and $(1+|W|)/4 \leq |W|/2$, then we have that $d=\gcd(\per(aW),\per(Wb))$ is a period of $W$. Assuming by contradiction that $\per(aW)\neq \per(Wb)$, then it must be that either $d<\per(aW)$ or $d<\per(Wb)$; by symmetry it is enough to consider the former possibility, and we claim that then $d$ is a period of $aW$. Indeed, $a=W[\per(aW)-1]$ (observe that $\per(aW)-1\leq |W|$) and $W[i]=W[i+d]$ for any $i=1,2,\ldots,|W|-d$, so by $\per(aW)$ being a multiple of $d$ we obtain that $a=W[\per(aW)-1]=W[d-1]$, which is a contradiction because 
by definition of $\per(aW)$ we have that 
$d<\per(aW)$ cannot be a period of $aW$.

If $\per(S[j\dd j+\ell-1]) \leq \ell/4$ for every $j$ then by the above reasoning the periods of all substrings $S[j\dd j+\ell-1]$ is the same and in fact equal to $p$. But then $S[i]=S[i+p]$ for every $i$, so $\per(S) \leq \ell/4$.
\end{proof}

Before proceeding with the algorithm, we show that, for each string $S\in {\cal S}$,  we can determine its type in $\cO(|S|)$ time.

\begin{restatable}{lemma}{lemdettype}\label{lem:dettype}
Given a string $S$ we can determine its type in $\cO(|S|)$ time.
\end{restatable}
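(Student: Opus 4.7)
The plan is to use the \emph{runs} (maximal periodic factors) of $S$, which can be computed in $\cO(|S|)$ time and whose total number is $\cO(|S|)$. A length-$\ell$ window $S[j\dd j+\ell-1]$ is strongly periodic, i.e., has minimal period $p\le \ell/4$, if and only if it is contained in some run $(a,b,p)$ of $S$ with $p\le \ell/4$: the forward direction is immediate because the period of the run is a period of any substring it contains; the reverse direction holds because the assumption $2p\le \ell/2 \le \ell$ forces the window's minimal period to coincide with the period of the unique run that extends it.

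Given the runs of $S$, I would mark, for each run $(a,b,p)$ with $p\le \ell/4$ and $b-a+1\ge \ell$, the interval $[a,\,b-\ell+1]$ of starting positions of length-$\ell$ windows entirely covered by that run. This is implemented by incrementing a difference array $D$ of size $|S|+1$ at $a$ and decrementing it at $b-\ell+2$. A single left-to-right prefix-sum pass over $D$ then tells me, for every $j\in[1,|S|-\ell+1]$, whether $S[j\dd j+\ell-1]$ is strongly periodic. A final scan decides the type: if every starting position is marked, $S$ is of type~3 (consistent with Lemma~\ref{lem:type3}, which equates this case with $\per(S)\le \ell/4$); if no starting position is marked, $S$ is of type~1; otherwise, $S$ is of type~2. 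Computing the runs, performing the $\cO(1)$ updates per qualifying run, and running the final scan are all $\cO(|S|)$.

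The main obstacle is avoiding the na\"ive $\cO(|S|\cdot \ell)$ cost of independently computing or checking the period of each length-$\ell$ substring of $S$. The runs structure sidesteps this because each run contributes only $\cO(1)$ work to the marking step, and the linear bound on the number of runs keeps the whole procedure linear in $|S|$. A subtle point I would need to justify carefully is the equivalence stated in the first paragraph, since a window could in principle be covered by several runs of different periods; but as argued there, having minimal period $p \le \ell/4$ pins down a unique run of period $p$ extending the window, so detecting coverage by \emph{any} run of period $\le \ell/4$ is exactly what is needed.
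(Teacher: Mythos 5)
Your approach is correct, and it takes a genuinely different route from the paper's proof. The paper avoids the machinery of maximal repetitions altogether: it partitions $S$ into blocks $T_\alpha$ of length $\lfloor\ell/2\rfloor$, computes $\per(T_\alpha)$ for each block via the KMP failure function, and observes that every length-$\ell$ window contains a full block; if every block has period $>\ell/4$ the string is type 1, and otherwise, for each block $T_\alpha$ with $\per(T_\alpha)\le\ell/4$, it extends that period left and right within $T_{\alpha-1}T_\alpha T_{\alpha+1}$ to decide whether some length-$\ell$ window covering $T_\alpha$ is strongly periodic. This is elementary, self-contained (only the failure function and the periodicity lemma), and works over any ordered alphabet with constant-time letter comparison. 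Your approach instead invokes the full runs structure (Kolpakov--Kucherov) plus a difference-array sweep, which is conceptually cleaner but relies on heavier machinery; note in particular that the classical linear-time runs algorithm assumes a linearly-sortable alphabet, whereas the paper elsewhere deliberately avoids such assumptions (e.g., paying $\cO(m\log m)$ for suffix-tree construction), so a fully alphabet-agnostic version of your argument would need the more recent linear-time runs algorithms for general ordered alphabets. One small presentational issue: your justifications for the ``forward'' and ``reverse'' directions of the equivalence are swapped---the statement ``the period of the run is a period of any substring it contains'' proves that containment in a run with $p\le\ell/4$ implies strong periodicity, not the converse---but both directions are in fact true (for the nontrivial direction, extend the window's period maximally and use the periodicity lemma to argue the resulting run has minimal period equal to the window's), so the argument goes through.
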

\begin{proof}
It is well-known that $\per(T)$ can be computed in $\cO(|T|)$ time for any string $T$ (cf.~\cite{DBLP:books/daglib/0020103}).
We partition $S$ into blocks $T_{\alpha}=S[\alpha \lfloor \ell/2 \rfloor \dd  (\alpha+1)\lfloor \ell/2 \rfloor-1]$ of size $\lfloor\ell/2\rfloor$,
and compute $\per(T_{\alpha})$
for every $\alpha$ in $\cO(|S|)$ total time. Observe that every substring $S[i\dd i+\ell-1]$ contains at least one whole block inside. 

If $\per(T_{\alpha}) > \ell/4$ then the period of any substring $S[i\dd i+\ell-1]$ that contains $T_{\alpha}$ is also larger than $\ell/4$. Consequently, if $\per(T_{\alpha}) > \ell/4$ for every $\alpha$, then we declare $S$ to be of type 1.

Consider any $\alpha$ such that $p=\per(T_{\alpha}) \leq \ell/4$. If the period $p'$ of a substring
$S'=S[i\dd i+\ell-1]$ that contains $T_{\alpha}$ is at most $\ell/4$, then in fact it must be equal to $p$,
because $p' \geq p$ and so, by Lemma~\ref{lem:periodicity} applied on $T_{\alpha}$, 
$p'$ must be a multiple of $p$ and, by repeatedly applying $S'[j]=S'[j+p']$ and $T_{\alpha}[j]=T_{\alpha}[j+p]$ and using the fact that $T_{\alpha}$
occurs inside $S'$, we conclude that in fact $S'[j]=S'[j+p]$ for any $j$, and thus $p'=p$. 
This allows us to check whether there exists a substring $S'=S[i\dd i+\ell-1]$ that contains $T_{\alpha}$ such that $\per(S')\leq \ell/4$ by computing, in $\cO(\ell)$ time, how far the period $p$ extends to the left and to the right of $T_{\alpha}$ in $T_{\alpha-1}T_{\alpha}T_{\alpha+1}$ (if either $T_{\alpha-1}$ or $T_{\alpha+1}$ do not exist, then we do not extend the period in the corresponding direction). There exists such a substring $S'$ if and only if the length of the extended substring with period $p$ is at least $\ell$. Therefore, for every $\alpha$ we can check in $\cO(\ell)$ time if there exists a length-$\ell$ substring $S'$ containing $T_{\alpha}$ with $\per(S') \leq \ell/4$. By repeating this procedure for every $\alpha$, we can distinguish between $S$ of type 2 and $S$ of type 3 in $\cO(|S|)$ total time.
\end{proof}

Since we have shown how to efficiently partition the strings of {\cal S} into the three types, in what follows we present our solution of the AP problem for each type of strings  separately.

\begin{remark}
\label{rem:numlengthell}
The length of every string in ${\cal S}$ belonging to $[9/8 \cdot \ell,5/4\cdot\ell)$
 implies that every string in ${\cal S}$ contains at most $\ell/4$ length-$\ell$ substrings (and at least $1+\ell/8$ of them).
\end{remark}

\subsection{Type 1 Strings}\label{sec:type1}

In this section we show how to solve a restricted instance of the AP problem where every string $S\in{\cal S}$ is of type 1, that is, each of its length-$\ell$ substrings is not strongly periodic, and furthermore $|S|\in [9/8\cdot \ell,5/4\cdot\ell)$ for some $\ell \leq m$. Observe that all (and hence at most $\ell/4$ by Remark~\ref{rem:numlengthell}) length-$\ell$ substrings of any $S\in{\cal S}$ must be distinct, as otherwise we would be able to find two occurrences of a length-$\ell$ substring at distance at most $\ell/4$ in $S$, making the period of the substring at most $\ell/4$ and contradicting the assumption that $S$ is of type 1.

We start with constructing the suffix tree $ST$ of $P$ (our pattern in the EDSM problem) in $\cO(m\log m)$ time~\cite{DBLP:conf/focs/Weiner73}. Let us
remark that we are spending $\cO(m\log m)$ time and not just $\cO(m)$ so as to avoid any assumptions on the size of the alphabet. For every explicit node $u\in ST$, we construct a perfect hash function mapping the first letter on every edge outgoing from $u$ to the corresponding edge. This takes $\cO(m\log m)$ time~\cite{ruzic08constructing}
and allows us to navigate in $ST$ in constant time per letter. Then, for every $S\in{\cal S}$, we check  in $\cO(|S|)$ time using $ST$ if it occurs in $P$ and, if not, we disregard it from
further consideration. 
Therefore, from now on we assume that all strings $S$, and thus all their length-$\ell$ substrings, occur in $P$.
We will select a set of length-$\ell$ substrings of $P$, called the \emph{anchors}, each represented by one of its occurrences in $P$, such that:
\begin{enumerate}
\item The total number of occurrences of all anchors in $P$ is $\cO(m/\ell\cdot \log m)$.
\item For every $S\in {\cal S}$, at least one of its length-$\ell$ substrings is an anchor. 
\item The total number of occurrences of all anchors in strings $S\in{\cal S}$ is $\cO(|{\cal S}|\cdot \log m)$.
\end{enumerate}
We formalize this using the following auxiliary problem, which is a strengthening of a well-known {\em Hitting Set} problem, which given a collection of $m$ sets over $[n]$, each of size at least $k$, asks to choose a subset of $[n]$ of size
$\cO(n/k\cdot \log m)$ that nontrivially intersects every set.

{\defproblem{\textsc{Node Selection} (NS)}{A bipartite graph $G=(U,V,E)$ with $\deg(u)\in(d,2d]$ for every $u\in U$ and weight $w(v)$ for every $v\in V$, where $W=\sum_{v\in V}w(v)$.}
{A set $V'\subseteq V$ of total weight $\cO(W/d\cdot \log |U|)$ such that $N[u]\cap V' \neq \emptyset$ for every node $u\in U$, and $\sum_{u\in U} |N[u] \cap V'| = \cO(|U|\log |U|)$.}

We reduce the problem of finding anchors to an instance of the NS problem, by building} a bipartite graph $G$
in which the nodes in $U$ correspond to strings $S\in{\cal S}$, the nodes in $V$
correspond to distinct length-$\ell$ substrings of $P$, and there is an edge $(u,v)$ if the length-$\ell$ string corresponding to $v$ occurs in the string $S$ corresponding to $u$.
Using suffix links, we can find the node of the suffix tree corresponding to every length-$\ell$ substring of $S$ in $\cO(|S|)$
total time, so the whole construction takes $\cO(m\log m+\sum_{S\in{\cal S}}|S|)=\cO(m\log m+N)$ time.
The size of $G$ is $\cO(m+N)$, and the degree of every node in $U$ belongs to $(\ell/8,\ell/4]$.
We set the weight of a node $v\in V$ to be its number of occurrences in $P$,
and solve the obtained instance of the NS problem to obtain the set of anchors.
It is not immediately clear that an instance of the NS problem always has a solution. We show that indeed it does, and that it
can be found in linear time.

\begin{lemma}
\label{lem:nodeselectrandom}
A solution to an instance of the NS problem always exists and can be found in linear time in the size of $G$.

\end{lemma}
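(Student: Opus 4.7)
The plan is to establish existence via the probabilistic method with independent sampling, and then derandomize in linear time using the method of conditional expectations with incrementally maintained pessimistic estimators.

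For existence, I would fix a constant $C>1$ and set $p = C\log|U|/d$, forming $V'$ by including each $v\in V$ independently with probability $p$. Three conditions must hold simultaneously: (i) the weight bound $w(V')=\cO(W\log|U|/d)$, which since $\E[w(V')]=pW$ holds by Markov's inequality except with probability at most $1/4$; (ii) every $u\in U$ is covered, where for fixed $u$ the non-coverage probability is $(1-p)^{\deg(u)}\leq(1-p)^{d}\leq e^{-pd}= |U|^{-C}$, so the union bound gives total failure probability at most $|U|^{1-C}$; and (iii) the incidence bound $\sum_u|N(u)\cap V'|=\cO(|U|\log|U|)$, whose expectation is $p|E(G)|\leq 2pd|U|$ using $\deg(u)\leq 2d$, so Markov again gives failure probability at most $1/4$. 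For $C$ large enough the three failure probabilities sum to less than one, so the desired $V'$ exists.

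For the linear-time construction I would process the vertices of $V$ in arbitrary order and at each step fix $X_v\in\{0,1\}$ to the value that does not increase a pessimistic estimator $\Phi_1+\Phi_2+\Phi_3$ for the total failure probability. The weight estimator $\Phi_1$ and the incidence estimator $\Phi_3$ are linear in the $X_v$'s and update in $\cO(1)$ time per decision. The coverage estimator $\Phi_2$ is the union-bound expression $\sum_u[u\text{ not yet hit}]\cdot(1-p)^{|N(u)\cap V_{\text{undecided}}|}$; maintaining with each $u$ the current value of its term, a decision on $X_v$ touches only the contributions indexed by $u\in N(v)$ and therefore costs $\cO(\deg(v))$.

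The main obstacle is keeping each update of $\Phi_2$ local to $N(v)$, which is handled by the running-product representation above: when $X_v=0$ each term for $u\in N(v)$ gets multiplied by $1/(1-p)$, and when $X_v=1$ all such $u$'s are marked hit and their terms become zero. Correctness follows because the initial value of $\Phi_1+\Phi_2+\Phi_3$ is strictly less than one by the probabilistic analysis, so by the standard inductive argument the greedy choice keeps it below one throughout; at the end all variables are decided, the estimator becomes an indicator, and being less than one forces it to be zero, so all three bounds are satisfied simultaneously. Summing $\cO(\deg(v))$ over all $v\in V$ yields total running time $\cO(|V|+|E|)$ by the handshake lemma, matching the claimed linear bound in the size of $G$.
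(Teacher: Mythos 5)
Your proposal is correct and follows the paper's route closely: probabilistic existence via union bound over the three events, then derandomization by the method of conditional expectations with a pessimistic estimator whose coverage term is maintained locally per node so that each decision $X_v$ costs $\cO(\deg(v))$, summing to linear total time. The paper additionally spells out a numerical-precision argument (precomputing a table $E[i]=(1-p)^i$ indexed by a per-node counter $c[u]$ rather than a running product, and showing $\Theta(\log|V|)$ bits of precision suffice so the estimator ends strictly below $1$ after at most a factor $e^{1/4}$ of accumulated drift), a detail your sketch glosses over but which does not change the structure of the argument.
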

\begin{proof}
We first show a solution that uses the probabilistic method and leads us to an efficient Las Vegas algorithm; we will then derandomize the solution using the method of conditional expectations.

We independently choose each node of $V$ with probability $p$ to obtain the set $V'$ of selected nodes.
The expected total weight of $V'$ is $\sum_{v\in V} p\cdot w(v)=p\cdot W$, so by Markov's inequality
it exceeds $4p\cdot W$ with probability at most $1/4$.
For every node $u\in U$, the probability that $N[u]$ does not intersect $ V'$ is at most $(1-p)^{d}\leq e^{-pd}$.
Finally, $\mathbb{E}[\sum_{u\in U} |N[u] \cap V'|] \leq |U|\cdot 2pd$, so by Markov's inequality 
$\sum_{u\in U} |N[u] \cap V'|$ exceeds $|U|\cdot 8pd$ with probability at most $1/4$.
We set $p=\ln(4|U|)/d$ (observe that if $p>1$ then we can select all nodes in $V$).
By union bound, the probability that $V'$ is not a valid solution is at most $3/4$, so indeed
a valid solution exists.
Furthermore, this reasoning gives us an efficient Las Vegas algorithm that chooses $V'$ randomly
as described above and then verifies if it constitutes a valid solution. Each iteration takes linear time in the size
of $G$, and the expected number of required iterations is constant.

To derandomize the above procedure we apply the method of conditional expectations. Let $X_{1},X_{2},\ldots$
be the binary random variables corresponding to the nodes of $V$.
Recall that in the above proof we set
$X_{i}=1$ with probability $p$. Now we will choose the values of $X_{1},X_{2},\ldots$ one-by-one.
Define a function $f(X_{1},X_{2},\ldots)$ that bounds the probability that $X_{1},X_{2},\ldots$
corresponds to a valid solution as follows:
\[
f(X_{1},X_{2},\ldots) = \frac{\sum_{v}X_{v}\cdot w(v)}{4W/d\cdot\ln(4|U|)} + \sum_{u\in U}\prod_{v\in N[u]} (1-X_{v}) + \frac{\sum_{u\in U}\sum_{v\in N[u]}X_{v}}{8|U|\ln(4|U|)}.
\]
As explained above, we have $\mathbb{E}[f(X_{1},X_{2},\ldots)]=3/4$. Assume that we have already fixed
the values $X_{1}=x_{1},\ldots,X_{i}=x_{i}$. Then there must be a choice of $X_{i+1}=x_{i+1}$
that does not increase the expected value of $f(X_{1},X_{2},\ldots)$ conditioned on the already chosen values.
We want to compare the following two quantities:
\begin{align*}
\mathbb{E}[f(X_{1},X_{2},\ldots)\,|\,X_{1}=x_{1},\ldots,X_{i}=x_{i},X_{i+1}=0]\\
\mathbb{E}[f(X_{1},X_{2},\ldots)\,|\,X_{1}=x_{1},\ldots,X_{i}=x_{i},X_{i+1}=1]
\end{align*}
and choose $x_{i+1}$ corresponding to the smaller one. Cancelling out the shared terms,
we need to compare the expected values of:
\begin{alignat*}{3}
&~~~~~~~~~~0 &&+~\sum_{u\in N[i+1]}\prod_{v\in N[u]} (1-X_{v})~ &&+~~~~~~~~0~~~~~~~~~~~~\text{and} \\
&\frac{w(i+1)}{4W/d\cdot\ln(4|U|)}~ &&+ ~~~~~~~~~~~~~~~~0 &&+~ \frac{\deg(i+1)}{8|U|\ln(4|U|)} .
\end{alignat*}
The second quantity can be computed in constant time. We claim that (ignoring the issue of numerical precision)
the first quantity can be computed in time $\cO(\deg(i+1))$ after a linear-time preprocessing as follows.
In the preprocessing we compute and store $E[i] =\mathbb{E}[\prod_{j=1}^{i}(1-Y_{j})]$, where 
the $Y_{j}$'s are independent indicator variables with $\Pr[Y_{j}=1]=p$, for every $i=0,1,\ldots,|V|$.
It is straightforward to compute $E[i+1]$ from $E[i]$ in constant time.
Then, during the computation we maintain, for every $u\in U$, the number $c[u]$ of $v\in N[u]$ for which
we still need to choose the value $X_{e}$, and a single bit $b[u]$ denoting whether for some $v\in N[u]\cap\{1,\ldots,i\}$
we already have $x_v=1$. This information can be updated in $\cO(\deg(i+1))$ time after selecting $x_{i+1}$.
Now to compute the first quantity, we iterate over $u\in N[i+1]$ and, if $b[u]=0$ then
we add $E[c[u]]$ to the result.
Finally, we claim that it is enough to implement all calculations with precision $\Theta(\log |V|)$ bits.
This is because such precision allows us to calculate both quantities with relative accuracy $1/(8|V|)$, so
the expected value of $f(X_{1},X_{2},\ldots)$ might increase by a factor of $(1+1/(4|V|))$ in every step,
which is at most $(1+1/(4|V|))^{|V|} \leq e^{1/4}$ overall. This still guarantees that the final value is
at most $3/4\cdot e^{1/4}<1$, so we obtain a valid solution.
\end{proof}

In the rest of this section we explain how to compute the bit vector $V$ from the bit vector $U$, and thus solve the AP problem, after having obtained a set ${\cal A}$ of anchors.
For any $S\! \in \!$ ${\cal S}$, since $S$ contains an occurrence of at least one anchor $H\!\! \in \!\!{\cal A}$, say $S[j\dd \!(j\!+\!|H|\!-\!1)]\!\!=\!\!H$, 
so any occurrence of $S$ in $P$ can be generated by choosing some occurrence of $H$ in $P$, say $P[i\dd (i+|H|-1)]\!=\!H$, and then checking that 
$S[1\dd (j-1)]=P[(i-j+1)\dd (i-1)]$ 
and $S[(j+|H|)\dd |S|]=P[(i+|H|)\dd (i+|S|-j)]$. 
In other words, $S[1\dd (j-1)]$ should be a suffix of $P[1\dd (i-1)]$ and $S[(j+|H|)\dd |S|]$
should be a prefix of $P[(i+|H|)\dd |P|]$. In such case, we say that the occurrence of $S$ in $P$ is generated by $H$. By the properties of ${\cal A}$, any occurrence of $S\in$ ${\cal S}$ is generated by $occ_{S} \geq 1$ occurrences of anchors, where $\sum_{S\in{\cal S}} occ_{S}=\cO(|{\cal S}|\log m)$.
For every $H\!\in \!{\cal A}$ we create a separate data structure $D(H)$ responsible for setting $V[i+|S|\!-\!1]\!=\!1$, when 
$U[i-1]\!=\!1$ and $P[i\dd \!(i\!+\!|S|\!-\!1)]\!=\!S$ is generated by $H$. We now first describe what information is used to initialize each $D(H)$, and how this is later processed to update $V$.

\medskip
\noindent\textbf{Initialization. } $D(H)$ consists of two compact tries $T(H)$ and $T^{r}(H)$. 
For every occurrence of $H$ in $P$, denoted by $P[i\dd (i+|H|-1)]=H$, $T(H)$ should contain a leaf corresponding to $P[(i+|H|)\dd |P|]\$$ and $T^{r}(H)$
should contain a leaf corresponding to $(P[1\dd (i-1)])^{r}\$$, both decorated with position $i$.
Additionally, $D(H)$ stores a list $L(H)$ of pairs of nodes $(u,v)$, where $u\in T^{r}(H)$ and $v\in T(H)$. Each such pair corresponds to an occurrence of $H$ in a string $S\in$ ${\cal S}_{h}$, $S[j\dd (j+|H|-1)]=H$, where $u$ is the node of $T^{r}(H)$ corresponding to $(S[1\dd (j-1)])^{r}\$$ and $v$ is the node of $T(H)$ corresponding to $S[(j+|H|+1)\dd |S|]\$$. We claim that $D(H)$, for all $H$, can be constructed in $\cO(m\log m+N)$ total time. 

We first construct the suffix tree $ST$ of $P\$$ and the suffix tree $ST^{r}$ of $P^{r}\$$ (again in $\cO(m\log{m})$ time not to make assumptions on the alphabet).
We augment both trees with data for answering both {\em weighted ancestor} (WA) and
{\em lowest common ancestor} (LCA) queries,  that are defined as follows.
For a rooted tree $T$ on $n$ nodes with an integer weight $\mathcal{D}(v)$ assigned to every node $u$, such that the weight of the root is zero and
$\mathcal{D}(u) < \mathcal{D}(v)$ if $u$ is the parent of $v$, we say that a node $v$ is a weighted ancestor of a node $v$ at depth $\ell$,
denoted by $\text{WA}_T(u,\ell)$, if $v$ is the highest ancestor of $u$ with weight at least $\ell$.
Such queries can be answered in 
$\cO(\log n)$ time after an $\cO(n)$ preprocessing~\cite{DBLP:conf/cpm/FarachM96}.
For a rooted tree $T$, $\text{LCA}_T(u,v)$ is the lowest node that is an ancestor of both $u$ and $v$.
Such queries can be answered in $\cO(1)$ time after an $\cO(n)$ preprocessing~\cite{DBLP:conf/latin/BenderF00}.
Recall that every anchor $H$ is represented by one of its occurrences in $P$. 
Using WA queries, we can access in $\cO(\log m)$ time the nodes corresponding to $H$ and $H^{r}$, respectively, and extract a lexicographically sorted list of suffixes following an occurrence of $H$ in $P\$$ and a lexicographically sorted list of reversed prefixes preceding an occurrence of $H$ in $P^{r}\$$ in time proportional to the number of such occurrences.
Then, by iterating over the lexicographically sorted list of suffixes and using LCA queries on $ST$ we can build $T(H)$ in time proportional to the length of the list, and similarly we can build $T^{r}(H)$. To construct $L(H)$ we start by computing, for every $S\in$ ${\cal S}$ and $j=1,\ldots,|S|$, the node of $ST^{r}$ corresponding to $(S[1\dd j])^{r}$ 
and the node of $ST$ corresponding to $S([(j+1)\dd |S|]$ (the nodes might possibly be implicit). 
This takes only $\cO(|S|)$ time, by using suffix links. 
We also find, for every length-$\ell$ substring $S[j\dd (j+\ell-1)]$ of $S$, an anchor $H\in{\cal A}$ such that $S[j\dd (j+\ell-1)]=H$, if any exists. This can be done by finding the nodes (implicit or explicit) of $ST$ that correspond to the anchors, and then scanning over all length-$\ell$ substrings while maintaining the node of $ST$ corresponding to the current substring using suffix links in $\cO(|S|)$ total time. After having determined that $S[j\dd (j+\ell-1)]=H$ we add $(u,v)$ to $L(H)$, where $u$ and $v$ are the previously found nodes of $ST^{r}$ and $ST$ corresponding to $(S[1\dd (j-1)])^{r}$ and $S[(j+\ell)\dd |S|]$, respectively. By construction, we have the following property, also illustrated in Figure~\ref{fig:occurrence}.

\begin{fact}
\label{fac:occurrence}
A string $S\in$ ${\cal S}$ starts at position $i\!-\!j\!+\!1$ in $P$ if and only if, for some anchor $H\in {\cal A}$, $L(H)$ contains a pair $(u,v)$ corresponding to $S[j\dd (j\!+\!|H|\!-\!1)]\!=\!H$,
such that the subtree of $T^{r}(H)$ rooted at $u$ and that of $T(H)$ rooted at $v$ contain a leaf decorated with $i$.
\end{fact}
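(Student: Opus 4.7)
The plan is a direct verification that unfolds the construction of the data structures $T(H)$, $T^{r}(H)$, and $L(H)$ alongside the definition of an occurrence of $S$ in $P$. The guiding observation is that a leaf of $T(H)$ (resp.~$T^{r}(H)$) decorated with $i$ represents the right (resp.~reversed left) context of the occurrence of $H$ at position $i$ in $P$, whereas the node $v$ (resp.~$u$) of a pair $(u,v)\in L(H)$ represents the right (resp.~reversed left) context of the occurrence of $H$ at position $j$ in $S$. In a compact trie, a leaf lies in the subtree rooted at a (possibly implicit) ancestor node exactly when the string of the ancestor is a prefix of the string of the leaf. Consequently, the two subtree conditions in the statement translate to: $S[(j+|H|)\dd|S|]$ is a prefix of $P[(i+|H|)\dd|P|]$, and $(S[1\dd(j-1)])^{r}$ is a prefix of $(P[1\dd(i-1)])^{r}$. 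Together with $S[j\dd(j+|H|-1)]=H=P[i\dd(i+|H|-1)]$, this is precisely the condition that $S$ occurs starting at position $i-j+1$ in $P$.

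For the forward direction, I would start from an occurrence $P[(i-j+1)\dd(i-j+|S|)]=S$ and use Property~2 of the anchor set to pick an anchor $H\in{\cal A}$ and an index $j$ such that $S[j\dd(j+|H|-1)]=H$. Aligning with the occurrence of $S$, $H$ appears at position $i$ in $P$, so the leaves decorated with $i$ in $T(H)$ and $T^{r}(H)$ exist by construction. Moreover, $L(H)$ contains the pair $(u,v)$ built from the occurrence of $H$ in $S$ at position $j$. The equalities $S[1\dd(j-1)]=P[(i-j+1)\dd(i-1)]$ and $S[(j+|H|)\dd|S|]=P[(i+|H|)\dd(i-j+|S|)]$, which follow from $P[(i-j+1)\dd(i-j+|S|)]=S$, then translate to the required subtree containments via the prefix characterisation recalled above.

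The backward direction just reverses these implications: assuming $H$, $(u,v)\in L(H)$, and leaves decorated with $i$ in both subtrees exist, the prefix characterisation of subtree membership yields $S[1\dd(j-1)]=P[(i-j+1)\dd(i-1)]$ and $S[(j+|H|)\dd|S|]=P[(i+|H|)\dd(i-j+|S|)]$, while the construction of $(u,v)$ and of the decorated leaves gives $S[j\dd(j+|H|-1)]=H=P[i\dd(i+|H|-1)]$; concatenating these three equalities yields $S=P[(i-j+1)\dd(i-j+|S|)]$, as required.

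I do not expect any real obstacle: the statement is essentially a repackaging of how the data structures are defined. The only mild points of care are that the nodes $u,v$ may be implicit in the compact tries (but subtree containment is well-defined for them, so this does not affect the argument) and that the sentinel $\$$ plays no role because we only compare the underlying strings and never a leaf to itself.
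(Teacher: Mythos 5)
Your proposal is correct and is exactly the ``by construction'' argument the paper invokes: it translates subtree containment in $T(H)$ and $T^{r}(H)$ into the prefix/reversed-prefix conditions on the left and right extensions of $H$, uses Property~2 of the anchor set for the forward direction, and reads off the three equalities in the backward direction. Your remarks about implicit nodes and the sentinel $\$$ are correct and address the only points that could cause confusion.
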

\begin{figure}[!t]
\begin{center}
\includegraphics[width=0.8\textwidth]{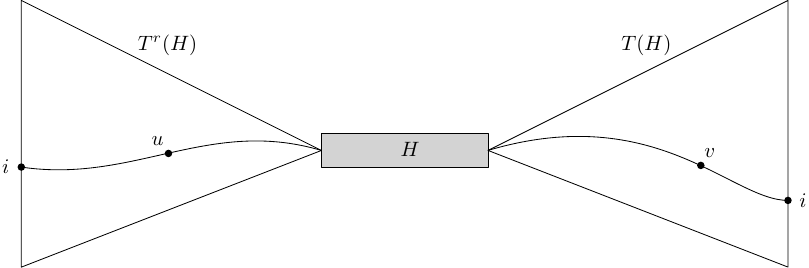}
\end{center}
\caption{An occurrence of $S$ starting at position $i$ in $P$ is generated by $H$: $(u,v)$ corresponds to $S[j\dd (j+|H|-1)]=H$ and $i$ appears in the subtree of $T^r(H)$ rooted at $u$, as well as in the subtree  of $T(H)$ rooted at $v$.}
\label{fig:occurrence}
\end{figure}

Note that the overall size of all lists $L(H)$, when summed up over all $H\in{\cal A}$, is $\sum_{S\in{\cal S}}occ_{S}=\cO(|{\cal S}|\log m)$,
and since each $S$ is of length at least $\ell$ this is $\cO(N/\ell \cdot \log m)$.

\medskip
\noindent\textbf{Processing.} The goal of processing $D(H)$ is to efficiently process all occurrences generated by $H$. As a preliminary step, we decompose $T^{r}(H)$ and $T(H)$ into heavy paths. Then, for every pair of leaves $u\in T^{r}(H)$ and $v\in T(H)$ decorated by the same $i$, 
we consider all heavy paths above $u$ and $v$.
Let $p=u_{1}-u_{2}-\ldots$ be a heavy path above $u$ in $T^{r}(H)$ and $q=v_{1}-v_{2}-\ldots$ be a heavy path
above $v$ in $T(H)$, where $u_{1}$ is the head of $p$ and $v_{1}$ is the head of $q$, respectively.
Further, choose the largest $x$ such that $u$ is in the subtree rooted at $u_{x}$, and the largest $y$ such that $v$ is in the subtree rooted at $v_{y}$ (this is well-defined by the choice of $p$ and $q$, as $u$ is in the subtree rooted at $u_{1}$ and $v$ is in the subtree rooted at $v_{1}$). We add $(i,|\nodestring(u_{x})|,|\nodestring(v_{y})|)$ to an auxiliary list associated with the pair of heavy paths $(p,q)$. In the rest of the processing we work with each such lists separately. Notice that the overall size of all auxiliary lists, when summed up over all $H\in{\cal A}$, is $\cO(m/\ell\cdot \log^{3}m)$, because there are at most $\log^{2} m$ pairs of heavy paths above $u$ and $v$
decorated by the same $i$, and the total number of leaves in all trees $T^{r}(H)$ and $T(H)$ is bounded by the total number of occurrences of all anchors in $P$, which is $\cO(m/\ell\cdot \log m)$. By Fact~\ref{fac:occurrence}, there is an occurrence of a string ${\cal S}$ generated by $H$ and starting at position $i-j+1$ in $P$ if and only if $L(H)$ contains a pair $(u,v)$ corresponding to $S[j\dd (j+|H|-1)]=H$ such that, denoting by $p$
the heavy path containing $u$ in $T^{r}(H)$ and by $q$ the heavy path containing $v$ in $T(H)$,
the auxiliary list associated with $(p,q)$ contains a triple $(i,x,y)$ such that $x\geq |\nodestring(u)|$ and $y\geq |\nodestring(v)|$. This is illustrated in Figure~\ref{fig:occurrence_auxiliary}. Henceforth, we focus onthe problem of processing a single auxiliary list associated with $(p,q)$, together with a list of pairs $(u,v)$, such that $u$ belongs to $p$ and $v$ belongs to $q$.

\begin{figure}[!t]
\begin{center}
\includegraphics[width=0.8\textwidth]{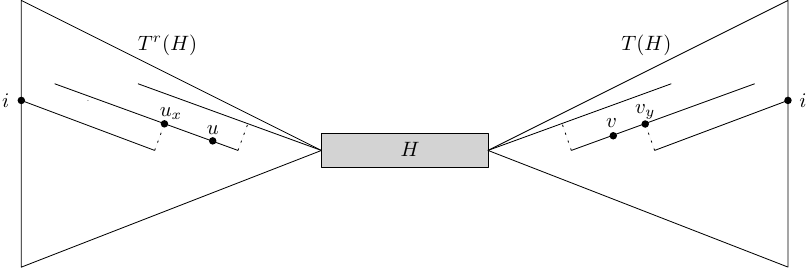}
\end{center}
\caption{An occurrence of $S$ starting at position $i$ in $P$ 
corresponds to a triple $(i,\nodestring(u_{x}),\nodestring(v_{y}))$ on some auxiliary list.}
\label{fig:occurrence_auxiliary}
\end{figure}

An auxiliary list can be interpreted geometrically as follows: for every $(i,x,y)$ we create a red point $(x,y)$, and for every $(u,v)$ we create a blue point $(|\nodestring(u)|,|\nodestring(v)|)$. Then, each occurrence of $S\in{\cal S}$ generated by $H$ corresponds to a pair of points $(p_1,p_2)$ such that $p_1$ is red, $p_2$ is blue, and
$p_1$ dominates $p_2$. We further reduce this to a collection of simpler instances in which all red
points already dominate all blue points. This can be done with a divide-and-conquer procedure
which is essentially equivalent to constructing a 2D range tree~\cite{Bentley:1975:MBS:361002.361007}.
The total number of points in all obtained instances increases by a factor of $\cO(\log^{2}m)$,
making the total number of red points in all instances $\cO(m/\ell\cdot \log^{5} m)$,
while the total number of blue points is $\cO(N/\ell \cdot \log^{3}m)$.
There is an occurrence of a string $S\in$ ${\cal S}$ generated by $H$ and
starting at position $i-j+1$ in $P$ if and only if some simpler instance contains
a red point created for some $(i,x,y)$ and a blue point created for some $(u,v)$ corresponding
to $S[j\dd (j+|H|-1)]=H$. In the following we focus on processing a single simpler instance.

To process a simpler instance we need to check if $U[i-j]=1$, for a red point created for some $(i,x,y)$
and a blue point created for some $(u,v)$ corresponding to $S[j\dd (j+|H|-1)]=H$,
and if so set $V[i-j+|S|]=1$. This has a natural interpretation as an instance of
BMM: we create a 
$\lceil 5/4\cdot \ell\rceil \times \lceil 5/4\cdot \ell \rceil$ matrix $M$ such that
$M[|S|-j,\lceil 5/4\cdot \ell \rceil +1 - j]=1$ if and only if there is a blue point created for some $(u,v)$ corresponding to $S[j\dd (j+|H|-1)]=H$; then for every red point created for some $(i,x,y)$ we construct a bit vector
$U_{i} = U[(i-\lceil 5/4\cdot \ell\rceil)\dd (i-1)]$ (if $i<\lceil5/4\cdot \ell\rceil$, 
we pad $U_{i}$ with $0$s to make its length always equal to $\lceil 5/4\cdot \ell\rceil$);
calculate $V_{i}=M \times U_{i}$; and finally set $V[i+j]=1$ whenever $V_{i}[j]=1$ (and $i+j\leq m$).

\begin{restatable}{lemma}{lemmatrixdef}\label{lem:matrixdef}
$V_{i}[k]=1$ if and only if there is a blue point created for some $(u,v)$ corresponding to $S[j\dd (j+|H|-1)]=H$ such that $U[i-j]=1$ and $k=|S|-j$.
\end{restatable}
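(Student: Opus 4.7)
Let $L = \lceil 5/4\cdot\ell\rceil$ to lighten notation. The plan is to unfold the definition of the Boolean matrix-vector product $V_i = M\times U_i$ and observe that the encoding of $M$ and of the window $U_i$ was chosen precisely so that the terms in the disjunction correspond bijectively to candidate choices of $j$. By definition,
\[
V_i[k] \;=\; \bigvee_{k'}\bigl(M[k,k']\wedge U_i[k']\bigr).
\]
The construction of $M$ sets $M[k,k']=1$ exactly when there is some blue point produced by a pair $(u,v)$ associated with an occurrence $S[j\dd(j+|H|-1)]=H$ satisfying $k=|S|-j$ and $k'=L+1-j$. Thus a $1$-entry of $M$ encodes one specific choice of $j$ together with a witnessing blue point, and as $k'$ ranges over $[1,L]$ the corresponding $j=L+1-k'$ ranges over the valid start positions of $H$ inside $S$; since $|S|<L$ and $j\in[1,|S|-\ell+1]$, all these indices lie inside the matrix dimensions, so no valid $(u,v)$ is missed and no invalid one is introduced.

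It remains to reconcile the $U_i[k']$ factor with the desired bit $U[i-j]$. By definition $U_i = U[(i-L)\dd(i-1)]$, so $U_i[k']=U[i-L+k'-1]$; substituting $k'=L+1-j$ gives $U_i[k']=U[i-j]$, which is exactly the bit required by the lemma (with the usual convention that out-of-range positions of $U$ are treated as $0$, which matches how $U_i$ is padded when $i<L$). Combining the two observations, $M[k,k']\wedge U_i[k']=1$ holds iff there is a blue point for some $(u,v)$ corresponding to $S[j\dd(j+|H|-1)]=H$ with $k=|S|-j$ and $U[i-j]=1$. Taking the disjunction over $k'$, which via $k'=L+1-j$ is equivalent to existentially quantifying over $j$, yields precisely the stated equivalence. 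No real obstacle is expected here: the lemma is a direct reading of the definitions once one tracks the two linear reparametrisations $k=|S|-j$ and $k'=L+1-j$ and confirms that the window shift in $U_i$ is aligned with the second of them.
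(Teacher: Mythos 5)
Your proposal is essentially identical to the paper's proof: both unfold the definition of $V_i=M\times U_i$, use $U_i[k']=U[i-\lceil 5/4\cdot\ell\rceil+k'-1]$, and perform the reparametrisation $k'=\lceil 5/4\cdot\ell\rceil+1-j$ to match the definition of $M$. The extra remark about the index range is a harmless addition; otherwise the argument and its structure coincide with the paper's.
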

\begin{proof}
By definition of $V_{i}=M \times U_{i}$, we have that $V_{i}[k]=1$ if and only if $M[k,t]=1$ for some $t$ such that $U_{i}[t]=1$.
By definition of $U_{i}$, we have that $U_{i}[t]=1$ if and only if
$U[i-\lceil 5/4\cdot \ell \rceil+t-1]=1$, and hence the previous condition
can be rewritten as $M[k,t]=1$ and $U[i-\lceil5/4\cdot \ell\rceil+t-1]=1$, or
equivalently, by substituting $j=\lceil5/4\cdot\ell\rceil+1-t$, $M[k,\lceil5/4\cdot \ell\rceil+1-j]=1$ and $U[i-j]=1$.
By definition of $M$, we have that $M[k,\lceil5/4\cdot \ell\rceil+1-j]=1$ if and only if there is a blue point created for some $(u,v)$ corresponding to $S[j\dd (j+|H|-1)]=H$ with $k=|S|-j$, which proves the lemma.
\end{proof}

The total length of all vectors $U_{i}$ and $V_{i}$ is $\cO(m\log^{5}m)$, so we can afford to extract the appropriate
fragment of $U$ and then update the corresponding fragment of $V$. The bottleneck is computing the matrix-vector product $V_{i}=M \times U_{i}$. Since the total number of $1$s in all matrices $M$ is bounded by the total number of blue points, a na\"ive method would take $\cO(N/ \ell \cdot \log^{3}m)$ time; we
 overcome this by processing together all multiplications concerning the same matrix $M$, thus amortizing the costs. Let $U_{i_{1}},U_{i_{2}},\ldots,U_{i_{s}}$ be all bit vectors
that need to be multiplied with $M$, and let $z$ a parameter to be determined later. We distinguish between two cases:
$(i)$ if \textbf{$s < z$}, then  we compute the products na\"ively by iterating over all $1$s in $M$, and the total computation time, when summed up over all such matrices $M$, is $\cO(N/\ell \cdot \log^{3}m \cdot z)$; 
$(ii)$ if \textbf{$s\geq z$}, then  we partition the bit vectors into $\lceil s/z\rceil \leq s/z+1$ groups of $z$ (padding the last group with bit vectors containing all $0$s)  and, for every group, we create a single matrix whose columns contain 
all the bit vectors 
belonging to the group. Thus, we reduce the problem of computing all matrix-vector products $M \times U_{i}$ to that of computing $\cO(s/z)$ matrix-matrix products of the form $M \times M'$, where $M'$ is an $\lceil5/4\cdot \ell\rceil \times z$ matrix. Even if $M'$ is not necessarily a square matrix, we can still apply the fast matrix multiplication algorithm to compute $M\times M'$ using the standard trick of decomposing the matrices into square blocks.

\begin{lemma}\label{lem:rectangle}
If two $\N \times \N$ matrices can be multiplied in $\cO(\N^{\omega})$ time, then, for any $\N \geq \N'$, an $\N \times \N$ and an
$\N \times \N'$ matrix can be multiplied in $\cO((\N/\N')^{2} \N'^{\omega})$ time.
\end{lemma}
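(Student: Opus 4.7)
The plan is to reduce the rectangular product to a collection of square products of size $\N' \times \N'$ via a standard block decomposition, and then count how many such products are required. Without loss of generality I may assume that $\N'$ divides $\N$, since otherwise I pad both matrices with at most $\N'$ additional rows/columns of zeros, which at most doubles $\N$ and changes the final bound only by a constant factor.

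Let $A$ be the $\N \times \N$ matrix and $B$ the $\N \times \N'$ matrix, and write $C = A \cdot B$, which is of size $\N \times \N'$. First I partition $A$ into $(\N/\N')^2$ blocks $A_{i,k}$, each of size $\N' \times \N'$, indexed by $i,k \in \{1,\dots,\N/\N'\}$. Similarly I partition $B$ column-wise into a single stack of $\N/\N'$ blocks $B_{k}$ of size $\N' \times \N'$, and $C$ analogously into $\N/\N'$ blocks $C_{i}$ of size $\N' \times \N'$. Then the identity $C_i = \sum_{k=1}^{\N/\N'} A_{i,k} \cdot B_k$ expresses the whole computation as a sum of $\N' \times \N'$ products.

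Next I count the cost. For each fixed $i$, computing $C_i$ requires $\N/\N'$ square multiplications of size $\N' \times \N'$ and the same number of $\N' \times \N'$ additions, and there are $\N/\N'$ choices of $i$. Hence the total number of square products is $(\N/\N')^{2}$, each costing $\cO(\N'^{\omega})$ by assumption, and the additions contribute only $\cO((\N/\N')^{2}\N'^{2})$, which is subsumed since $\omega \ge 2$. The total running time is therefore $\cO((\N/\N')^{2}\,\N'^{\omega})$, as claimed.

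There is no real obstacle here: the only point worth double-checking is that the block grid indeed has $(\N/\N')^{2}$ square-square products (and not, say, $(\N/\N')^{3}$), which follows because $B$ has only one block-column, so the outer index $j$ of $C_{i,j}$ takes just a single value. The padding step is routine and the additions are dominated by the multiplications via $\omega \ge 2$.
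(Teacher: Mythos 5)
Your proof is correct and follows essentially the same block decomposition used in the paper: partition the $\N \times \N$ matrix into $(\N/\N')^2$ blocks and the $\N \times \N'$ matrix into $\N/\N'$ blocks, then perform $(\N/\N')^2$ square $\N' \times \N'$ products. You additionally spell out the padding step for divisibility and observe that the additions are dominated via $\omega \ge 2$, which the paper leaves implicit, but the argument is the same.
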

\begin{proof}
We partition both matrices into blocks of size $\N' \times \N'$. There are $(\N / \N')^{2}$ such blocks in the first matrix and $\N / \N'$ in the second matrix. Then, to compute the product we multiply each block from the first matrix by the appropriate block in the second matrix in $\cO(\N'^{\omega})$ time, resulting in the claimed complexity.
\end{proof}

By applying Lemma~\ref{lem:rectangle}, we can compute $M\times M'$ in $\cO(\ell^{2} z^{\omega-2})$ time
(as long as we later verify that $5/4\cdot \ell \geq z$), so
all products $M\times U_{i}$ can be computed in $\cO(\ell^{2}z^{\omega-2}\cdot(s/z+1))$ time. Note that this case can occur only $\cO(m/(\ell\cdot z)\cdot \log^{5}m)$ times, because all values of $s$ sum up to
$\cO(m/\ell\cdot \log^{5}m)$. This makes the total computation time, when summed up over all such matrices $M$, $\cO(\ell^{2}z^{\omega-2} \cdot m/(\ell\cdot z)\cdot \log^{5}m)=\cO(\ell z^{\omega-3}\cdot m\log^{5}m)$.
We can now prove our final result for strings of type 1.

\begin{theorem}\label{thm:type1}

An instance of the AP problem where all strings are of type 1 can be solved in $\ctO(m^{\omega-1}+N)$ time.
\end{theorem}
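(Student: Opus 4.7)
The plan is to assemble the ingredients already developed in this section and close the proof by choosing the threshold parameter $z$. First I would aggregate the preprocessing costs: building the suffix trees of $P$ and $P^{r}$, their WA/LCA augmentations, and the anchor set $\mathcal{A}$ via Lemma~\ref{lem:nodeselectrandom} takes $\cO(m\log m)$; assembling every $D(H)=(T(H),T^{r}(H),L(H))$ takes $\cO(m\log m+N)$ in total, and the NS guarantees yield $\cO((m/\ell)\log m)$ anchor occurrences in $P$ and $\cO((N/\ell)\log m)$ anchor occurrences across strings of $\mathcal{S}$. Piping these through the heavy-path and range-tree decompositions, Fact~\ref{fac:occurrence} and Lemma~\ref{lem:matrixdef} reduce the problem to computing $M\times U_{i}$ for an $\lceil 5/4\cdot\ell\rceil\times\lceil 5/4\cdot\ell\rceil$ matrix $M$ with $\cO((N/\ell)\log^{3}m)$ ones summed over all simpler instances and $\cO((m/\ell)\log^{5}m)$ vectors $U_{i}$ summed over all simpler instances.

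Next, I would apply the two-case strategy already laid out: with threshold $z$, the naive phase ($s<z$ per matrix $M$) contributes $\cO((Nz/\ell)\log^{3}m)$ in total, while the batched phase ($s\geq z$), invoking Lemma~\ref{lem:rectangle} to multiply an $\lceil 5/4\cdot\ell\rceil\times\lceil 5/4\cdot\ell\rceil$ matrix by an $\lceil 5/4\cdot\ell\rceil\times z$ matrix, contributes $\cO(\ell z^{\omega-3}m\log^{5}m)$ in total. I would then set $z=\ell$, which satisfies $\lceil 5/4\cdot\ell\rceil\geq z$ as required by Lemma~\ref{lem:rectangle}. This makes the naive cost $\cO(N\log^{3}m)=\ctO(N)$ and, since $\ell\leq m$, the batched cost $\cO(\ell^{\omega-2}m\log^{5}m)\leq\cO(m^{\omega-1}\log^{5}m)=\ctO(m^{\omega-1})$; adding the $\cO(m\log m + N)$ setup gives the claimed total $\ctO(m^{\omega-1}+N)$.

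The main obstacle is the bookkeeping: one must check that each of the decompositions — NS, heavy paths, range trees, and the matrix-multiplication amortization — contributes only polylogarithmic blowups so that the $\ctO(\cdot)$ notation absorbs them. Once this accounting is verified, no clever balancing is needed; the identification $z=\ell$ is forced by the requirement to make the naive phase $\ctO(N)$ while remaining in the regime $z\leq 5/4\cdot\ell$ required by Lemma~\ref{lem:rectangle}, and the target bound $\ctO(m^{\omega-1}+N)$ then follows directly from the trivial inequality $\ell^{\omega-2}\leq m^{\omega-2}$.
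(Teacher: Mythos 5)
Your outline follows the paper's proof almost step for step—same preprocessing, same reduction to matrix–vector products via heavy paths and range trees, same two-case (naive vs.\ batched) amortization over matrices $M$—but the final parameter choice $z=\ell$ is not the one the paper uses, and this is not a cosmetic difference. With $z=\ell$ the naive phase costs $\cO\bigl((N/\ell)\cdot\log^{3}m\cdot z\bigr)=\cO(N\log^{3}m)$, which is a polylogarithmic blow-up on the $N$ term. Throughout this paper the $\ctO(\cdot)$ bound is meant to keep the dependence on $N$ strictly linear (polylog factors are supposed to sit only on the $m^{\omega-1}$ side); this is exactly what lets the authors later state Corollary~$\cO(nm^{1.373}+N)$ and remark that ``the polylog factors are shaved \dots{} by using the fact that the inequality $\omega<2.373$ is strict''—a trick that only absorbs polylogs multiplying a power of $m$, not polylogs multiplying $N$. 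So $\cO(N\log^{3}m)$ would break that downstream step.

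The paper instead sets $z=\ell/\log^{3}m$, which exactly cancels the $\log^{3}m$ in the naive phase and yields $\cO(N)$ there, while the batched phase becomes $\cO(\ell^{\omega-2}m\log^{5+3(3-\omega)}m)$ — a larger but still polylogarithmic factor on the $m^{\omega-1}$ side, which is harmless. This choice requires $\ell\geq\log^{3}m$ (so that $z\geq 1$ and the Lemma~\ref{lem:rectangle} hypothesis $5/4\cdot\ell\geq z$ holds), and the paper arranges this by first invoking Lemma~\ref{lem:naive} with $t=\log^{3}m$ to dispose of all strings of length below $\log^{3}m$ in $\cO(m\log^{6}m+N)$ time, only then running the type-1 machinery for the remaining $\ell$. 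Your write-up omits both the $z=\ell/\log^{3}m$ balancing and this short-string preprocessing, so you should replace $z=\ell$ with $z=\ell/\log^{3}m$ and add the Lemma~\ref{lem:naive} step to ensure $z$ is a positive integer.
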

\begin{proof}
The total time complexity is first $\cO(m+N)$ to construct the graph $G$,
then $\cO(m\log m+N)$ to solve its corresponding instances of the \textsc{NodeSelection} problem
and obtain the set of anchors $H$.
The time to initialize all structures $D(H)$ is $\cO(m\log m+N)$. For every $D(H)$, we obtain in $\cO(m/\ell \cdot \log^{5}m+N/\ell\cdot \log^{3}m)$ time a number of simpler instances, and then construct the corresponding Boolean matrices $M$ and bit vectors $U_{i}$ in additional $\cO(m\log^{5}m)$ time. Note that some $M$ might be sparse, so we need to represent them as a list of $1$s. Then, summing up over all matrices $M$ and both cases, we spend $\cO(N / \ell \cdot \log^{3}m \cdot z +\ell z^{\omega-3}\cdot m\log^{5}m )$ time. We would like to assume that $\ell \geq \log^{3}m$ so that we can set $z=\ell / \log^{3}m$. This is indeed possible, because for any $t$ we can switch to a more na\"ive approach to process all strings of length at most $t$ in $\cO(mt^{2}+N)$ time as described in~\ref{lem:naive}.
After applying it with $t=\log^{3}m$ in $\cO(m\log^{6}m+N)$ time, we can set $z=\ell / \log^{3}m$ (so that indeed $5/4\cdot \ell \geq z$ as required in case $s\geq z$) and the overall time complexity for all matrices $M$ and both cases becomes $\cO(N+\ell^{\omega-2}\cdot m\log^{5+3(3-\omega)}m)$. 
Summing up over all values of $\ell$ and taking the initialization into account we obtain $\cO(m\log^{7}m+m^{\omega-1}\log^{5+3(3-\omega)}m+N)=\ctO(m^{\omega-1}+N)$ total time.
\end{proof}
\subsection{Type 2 Strings}\label{app:type2}

In this section we show how to solve a restricted instance of the AP problem where every string $S\in{\cal S}$ is of type 2, that is, $S$ contains a length-$\ell$ substring that is not strongly periodic as well as a length-$\ell$ substring that is strongly periodic, and furthermore $|S|\in [9/8\cdot \ell,5/4\cdot\ell)$ for some $\ell \leq m$.

Similarly as in Section~\ref{sec:type1}, we select a set of anchors. In this case, instead of the \textsc{NodeSelection} problem we need to exploit periodicity. We call a string $T$ \emph{$\ell$-periodic} if $|T|\geq \ell$ and $\per(T) \leq \ell/4$. We consider all maximal $\ell$-periodic substrings of $S$, that is, $\ell$-periodic substrings $S[i\dd j]$ such that either $i=1$ or $\per(S[(i-1)\dd j]) > \ell/4$, and $j=|S|$ or $\per(S[i\dd (j+1)]) > \ell/4$.
We know that $S$ contains at least one such substring (because there exists a length-$\ell$ substring
that is strongly periodic), and that the whole $S$ is not such a substring (because otherwise $S$ would be of type 3). Further, two maximal $\ell$-periodic substrings cannot overlap too much, as formalized in the following lemma.

\begin{lemma}
\label{lem:periodicoverlap}
Any two distinct maximal $\ell$-periodic substrings of the same string $S$ overlap by less than $\ell/2$ letters.
\end{lemma}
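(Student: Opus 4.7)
The plan is a proof by contradiction. Suppose two distinct maximal $\ell$-periodic substrings $X_1 = S[i_1 \dd j_1]$ and $X_2 = S[i_2 \dd j_2]$ of $S$, with $i_1 < i_2$, overlap in $Y = S[i_2 \dd j_1]$ of length $|Y| \geq \ell/2$. I would first observe that maximality forces $j_1 < j_2$: otherwise $X_2 \subsetneq X_1$, but then $S[(i_2-1) \dd j_2]$ has length $\geq \ell$ and inherits period $\leq \per(X_1) \leq \ell/4$ from $X_1$, contradicting the maximality of $X_2$. Setting $p_1 = \per(X_1)$ and $p_2 = \per(X_2)$, both are at most $\ell/4$, so $p_1 + p_2 \leq \ell/2 \leq |Y|$. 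Since $p_1$ and $p_2$ are both periods of $Y$, Lemma~\ref{lem:periodicity} yields that $d = \gcd(p_1, p_2)$ is also a period of $Y$.

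The key step is to promote $d$ to a period of all of $X_1$ (and symmetrically of $X_2$). For any position $a \in [i_1, j_1 - d]$, I would choose the smallest $m \geq 0$ with $a' := a + m p_1 \geq i_2$; minimality gives $a' < i_2 + p_1$, and since $|Y| \geq 2 p_1$, both $a'$ and $a' + d$ lie inside $Y$. Chaining $S[a] = S[a']$ and $S[a + d] = S[a' + d]$ (from the $p_1$-periodicity of $X_1$) with $S[a'] = S[a' + d]$ (from the $d$-periodicity of $Y$) gives $S[a] = S[a + d]$. Hence $d$ is a period of $X_1$, and by the minimality of $p_1$ together with $d \mid p_1$ we conclude $d = p_1$. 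By the symmetric argument $d = p_2$, so $X_1$ and $X_2$ share a common period $p := d \leq \ell/4$.

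Finally, I would glue the period $p$ across $X_1$ and $X_2$ to conclude that $S[i_1 \dd j_2]$ itself has period $p$: the only case of $S[a] = S[a+p]$ not already covered inside $X_1$ or inside $X_2$ is $a < i_2$ and $a + p > j_1$, which would force $p > j_1 - a \geq |Y| \geq 2p$, impossible. Hence $S[i_1 \dd j_2]$ is an $\ell$-periodic substring that strictly extends $X_1$ to the right (since $j_2 > j_1$), contradicting the maximality of $X_1$.

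The delicate step I anticipate is the period-extension argument of the middle paragraph: one must pick a ``jumping point'' inside $Y$ that is close enough to its left end that translating by $d$ still lands inside $Y$, and the bound $|Y| \geq 2 p_1$ is exactly what makes this possible. This is also where the constant $1/2$ in the statement of the lemma enters: weakening the overlap bound would break the Fine-Wilf application in the first paragraph as well as the containment $a' + d \leq j_1$ here.
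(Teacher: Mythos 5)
Your proof is correct and follows the same high-level plan as the paper: argue by contradiction, apply Lemma~\ref{lem:periodicity} to the overlap to conclude that $\gcd(p_1,p_2)$ is a period there, deduce $p_1=p_2$, and then glue the common period across both substrings to contradict maximality. The one place you diverge is the middle step. The paper concludes $p=p'$ by noting that the overlap contains (cyclic shifts of) the length-$p$ and length-$p'$ period blocks, so if $\gcd(p,p')$ were strictly smaller than one of them, that block would be a proper power, contradicting the minimality of the corresponding period. You instead promote $d=\gcd(p_1,p_2)$ to a period of all of $X_1$ by the explicit ``jumping'' argument (shift any position into the overlap by a multiple of $p_1$, apply $d$-periodicity there, shift back), and then invoke $d\mid p_1$ together with minimality to get $d=p_1$; symmetrically $d=p_2$. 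Your version is a bit more mechanical but makes completely explicit where the overlap bound $|Y|\geq\ell/2\geq 2p_1\geq p_1+d$ is used, whereas the paper's ``power of a shorter string'' phrasing is slicker but leaves that bookkeeping implicit. You also spell out why the nested case $X_2\subsetneq X_1$ cannot occur, which the paper silently absorbs into the assumption $i<i'\leq j<j'$; that is a small but genuine tightening of the write-up.
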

\begin{proof}
Assume (by contradiction) the opposite; then we have two distinct $\ell$-periodic substrings $S[i\dd j]$ and $S[i'\dd j']$ such that 
$i < i' \leq j < j'$ and $j-i'+1 \geq \ell/2$.
Then, both $p=\per(S[i\dd j])$ and $p'=\per(S[i'\dd j'])$ are periods of $S[i'\dd j]$, and hence by Lemma~\ref{lem:periodicity} we have that 
$\gcd(p,p')$ is a period of $S[i'\dd j]$. If $p\neq p'$ then, because $S[i'\dd j]$ contains an occurrence of both $S[i\dd (i+p-1)]$
and $S[i'\dd (i'+p'-1)]$, we obtain that one of these two substrings is a power of a shorter string, thus contradicting
the definition of $p$ or $p'$.
So $p=p'$, but then $p\leq \ell/4$ is actually a period of the whole $S[i\dd j']$, meaning that $S[i\dd j]$ and $S[i'\dd j']$ are not
maximal, a contradiction.
\end{proof}

By Lemma~\ref{lem:periodicoverlap}, every $S\in{\cal S}$ contains exactly one maximal $\ell$-periodic substring, and 
by the same argument $P$ contains $\cO(m/\ell)$ such substrings. 
The set of anchors will be generated by considering the unique maximal $\ell$-periodic substring of every $S\in{\cal S}$,
so we first need to show how to efficiently generate such substrings.

\begin{lemma}
\label{lem:ellperiodic}
Given a string $S$ of length at most $5/4\cdot \ell$, we can generate its (unique) maximal $\ell$-periodic substring in $\cO(|S|)$ time.
\end{lemma}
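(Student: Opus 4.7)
The plan is to reuse the blocking idea already used in the proof of Lemma~\ref{lem:dettype}. First I would partition $S$ into consecutive blocks $T_{\alpha}$ of length $\lfloor \ell/2\rfloor$, noting that since $|S|\leq 5/4\cdot\ell$ there are only $\cO(1)$ such blocks and, crucially, every length-$\ell$ substring of $S$ fully contains at least one block. Using the standard linear-time algorithm for computing the period of a string, I would then compute $\per(T_{\alpha})$ for every block in $\cO(|S|)$ total time. If no block has period at most $\ell/4$ then by the characterization in Lemma~\ref{lem:dettype} no length-$\ell$ substring of $S$ is strongly periodic and so the required maximal $\ell$-periodic substring does not exist; otherwise, pick any block $T_{\alpha}$ with $p=\per(T_{\alpha})\leq \ell/4$.

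The next step is to turn this block into the candidate maximal $\ell$-periodic substring. As already observed in the proof of Lemma~\ref{lem:dettype}, any length-$\ell$ substring of $S$ that contains $T_{\alpha}$ and has period at most $\ell/4$ must have period exactly $p$. So I would greedily extend the periodicity of period $p$ from $T_{\alpha}$ to the left and to the right inside $S$, moving one position at a time for as long as the condition $S[i]=S[i+p]$ continues to hold, and stop at the first mismatch (or at the boundary of $S$). This produces in $\cO(|S|)$ time the longest substring $E_{\alpha}$ of $S$ containing $T_{\alpha}$ with period $p$. If $|E_{\alpha}|\geq \ell$, output $E_{\alpha}$; otherwise, report that no maximal $\ell$-periodic substring exists.

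The main thing to verify is that $E_{\alpha}$ is the maximal $\ell$-periodic substring as defined in the text. By construction $E_{\alpha}$ has period $p\leq \ell/4$ and cannot be extended while preserving this period; combined with the claim that any $\ell$-periodic extension of $T_{\alpha}$ inside $S$ must have period exactly $p$, this shows that $E_{\alpha}$ cannot be strictly contained in another $\ell$-periodic substring of $S$, establishing maximality. Uniqueness follows from Lemma~\ref{lem:periodicoverlap}: if $S$ had two distinct maximal $\ell$-periodic substrings they would overlap by less than $\ell/2$ letters, each would have length at least $\ell$, and hence $|S|\geq 2\ell-\ell/2+1 > 5/4\cdot\ell$, contradicting the hypothesis. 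Thus any block $T_{\alpha}$ with period at most $\ell/4$ leads to the same substring $E_{\alpha}$, so the algorithm is correct and runs in $\cO(|S|)$ time overall. The only subtle point I expect is to handle carefully the corner case where a block $T_{\alpha}$ has a small period $p\leq \ell/4$ but $E_{\alpha}$ has length strictly less than $\ell$, in which case we must not mistakenly report $E_{\alpha}$ even though it is locally maximal with period at most $\ell/4$.
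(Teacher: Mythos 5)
The step ``any block $T_{\alpha}$ with period at most $\ell/4$ leads to the same substring $E_{\alpha}$'' is a genuine gap, and it makes the algorithm as stated incorrect. Your uniqueness argument via Lemma~\ref{lem:periodicoverlap} only governs substrings of length at least $\ell$; it says nothing about the extension $E_{\alpha}$ when $|E_{\alpha}|<\ell$. In particular, two different blocks can both have period at most $\ell/4$ yet have extensions of very different lengths: one block's extension may fall short of $\ell$ while another block's extension is the true maximal $\ell$-periodic substring. Since you ``pick any block'' with small period and output ``no such substring'' whenever its extension is shorter than $\ell$, you can miss the answer.

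A concrete instance: take $\ell=100$ and $|S|=120$, set $S[21\dd 120]=Q^5$ where $Q$ is a primitive string of length $20$ whose first ten letters are all $\texttt{a}$ and whose last ten letters are distinct non-$\texttt{a}$ symbols, and choose $S[1\dd 20]$ so that $T_1=S[1\dd 50]$ has period exactly $21$ (one checks this forces $Q[1]=\cdots=Q[10]$, which holds). Then $\per(T_1)=21\le \ell/4=25$, but the period-$21$ extension of $T_1$ is just $T_1$ itself, of length $50<\ell$; meanwhile $T_2=S[51\dd 100]$ has period $20$ and its extension $S[21\dd 120]$ has length $100\ge\ell$ and is the unique maximal $\ell$-periodic substring. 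If your algorithm happens to pick $T_1$, it wrongly reports that none exists. The paper sidesteps this entirely: since $|S|\le 5/4\cdot\ell$, the single window $S[\lfloor\ell/2\rfloor+1\dd \ell]$ is contained in \emph{every} length-$\ell$ substring of $S$, hence in any $\ell$-periodic substring of $S$, so computing its period and extending it always yields the correct answer. Your version can be repaired by running the extension step from \emph{every} block with small period (there are $\cO(1)$ of them), rather than from an arbitrary one.
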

\begin{proof}
We start with observing that any length-$\ell$ substring of $S$ must contain
$S[(\lfloor \ell/2 \rfloor +1) \dd \ell]$ inside.
Consequently, we can proceed similarly as in the proof 
 of Lemma~\ref{lem:dettype}.
We compute $p=\per(S[(\lfloor \ell/2 \rfloor+1)\dd \ell])$ in $\cO(|S|)$ time.
If $p > \ell/4$ then $S$ does not contain any $\ell$-periodic substrings.
Otherwise, we compute in $\cO(|S|)$ time how far the period $p$ extends to the left and to the right;
that is, we compute the smallest $i \leq \lfloor\ell/2\rfloor+1$ such that $S[k]=S[k+p]$ for every
$k=i,i+1,\ldots,\lfloor \ell/2 \rfloor$ and the largest $j \geq \ell$ such that $S[k]=S[k-p]$ for every
$k=\ell+1,\ell+2,\ldots,j$. If $j-i+1 \geq \ell$ then $S[i\dd j]$ is a maximal $\ell$-periodic substring of $S$,
and, as shown earlier by Lemma~\ref{lem:periodicoverlap}, $S$ cannot contain any other
maximal $\ell$-periodic substrings. We return $S[i\dd j]$ as the (unique) maximal $\ell$-periodic substring of $S$.
\end{proof}

For every $S\in{\cal S}$, we apply Lemma~\ref{lem:ellperiodic} on $S$ to find its (unique) maximal $\ell$-periodic substring $S[i\dd j]$ in $\cO(|S|)$ time. If $i>1$ then we designate $S[(i-1)\dd (i-1+\ell)]$ as an anchor, and similarly if $j<|S|$ we designate $S[(j+1-\ell)\dd (j+1)]$ as an anchor.
Observe that because $S$ is of type 2 (and not of type 3) either $i>1$ or $j<|S|$, so for every $S\in{\cal S}$ we designate at least one
if its length-$(\ell+1)$ substrings as an anchor.
As in Section~\ref{sec:type1}, we represent each anchor by one of its occurrences in $P$, and so need to find its corresponding node
in the suffix tree of $P$ (if any). This can be done in $\cO(|S|)$ time, so $\cO(N)$ overall.
During this process we might designate the same string as an anchor multiple times, but we can easily remove the possible
duplicates to obtain the set ${\cal A}$ of anchors in the end.
Then, we generate the occurrences of all anchors in $P$ by accessing their corresponding nodes in the suffix tree of $P$
and iterating over all leaves in their subtrees.
We claim that the total number of all these occurrences is only $\cO(m/\ell)$. This follows from the following characterization.

\begin{lemma}
\label{lem:anchorsoccurrences}
If $P[x\dd (x+\ell)]$ is an occurrence of an anchor then either $P[(x+1)\dd y]$ is a maximal $\ell$-periodic substring of $P$, for some $y \geq x+\ell$,
or $P[x'\dd (x+\ell-1)]$ is a maximal $\ell$-periodic substring of $P$, for some $x' \leq x$.
\end{lemma}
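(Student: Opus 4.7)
The plan is to split into the two cases according to which rule designated the anchor, and then use the Periodicity Lemma to transfer a ``period breakpoint'' witnessed by the maximality of the source $\ell$-periodic substring $S[i\dd j]$ to the corresponding breakpoint in $P$. In case (a), the anchor equals $S[(i-1)\dd i-1+\ell]$ for some $S\in{\cal S}$ with maximal $\ell$-periodic substring $S[i\dd j]$ and $i>1$; I will argue that we obtain the first alternative $P[x+1\dd y]$ with $y\geq x+\ell$. Case (b), where the anchor equals $S[(j+1-\ell)\dd j+1]$ and $j<|S|$, is entirely symmetric and yields the second alternative.

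Fix case (a) and set $p:=\per(S[i\dd j])\leq \ell/4$, so that $p$ is also a period of the length-$\ell$ string $P[x+1\dd x+\ell]=S[i\dd i+\ell-1]$. The pivotal claim is $P[x]\neq P[x+p]$, i.e.,~$S[i-1]\neq S[i-1+p]$. If this failed, then $p$ would also be a period of $S[(i-1)\dd i-1+\ell]$; gluing with $p$ being a period of $S[i\dd j]$ would make $p$ a period of $S[(i-1)\dd j]$, contradicting the maximality $\per(S[(i-1)\dd j])>\ell/4$.

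Next, let $P[x'\dd y]$ be the unique maximal $\ell$-periodic substring of $P$ containing $P[x+1\dd x+\ell]$ (uniqueness follows from Lemma~\ref{lem:periodicoverlap}); by construction $x'\leq x+1$ and $y\geq x+\ell$. To finish case (a) I will show $x'=x+1$. Suppose instead $x'\leq x$ and write $\tilde p:=\per(P[x'\dd y])\leq \ell/4$. Both $p$ and $\tilde p$ are periods of $P[x+1\dd x+\ell]$, and since $p+\tilde p\leq \ell/2\leq \ell+1$, Lemma~\ref{lem:periodicity} combined with minimality of $p$ forces $p\mid\tilde p$, say $\tilde p=kp$. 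A short telescoping---$P[x]=P[x+\tilde p]$ by the $\tilde p$-period, then applying the $p$-period $k-1$ times within $[x+1,x+\ell]$---yields $P[x]=P[x+p]$, contradicting the pivotal claim. Hence $x'=x+1$. Case (b) is the mirror image: the analogous pivotal claim $P[x+\ell]\neq P[x+\ell-p]$ (using $j<|S|$ and maximality on the right) forces the maximal $\ell$-periodic substring containing $P[x\dd x+\ell-1]$ to end exactly at $x+\ell-1$.

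The main obstacle is the bookkeeping in the telescoping step: one has to verify that all intermediate positions stay within $[x',y]$ (where the $\tilde p$-period applies) and within $[x+1,x+\ell]$ (where the $p$-period applies), which is precisely where the choice of $\ell/4$ as the periodicity threshold is used. Everything else is a clean case split followed by standard applications of the Periodicity Lemma.
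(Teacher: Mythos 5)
Your proof departs from the paper's and the high-level plan is sound, but there is a gap at the step ``Lemma~\ref{lem:periodicity} combined with minimality of $p$ forces $p\mid\tilde p$.'' You set $p:=\per(S[i\dd j])$, so $p$ is minimal \emph{for $S[i\dd j]$}; the periodicity lemma, however, is applied to $P[x+1\dd x+\ell]=S[i\dd(i+\ell-1)]$, and yields only that $\gcd(p,\tilde p)$ is a period of that length-$\ell$ \emph{prefix} of $S[i\dd j]$. To get $\gcd(p,\tilde p)=p$ you need $p$ to be the minimal period of $S[i\dd(i+\ell-1)]$, and that is not automatic: in general a prefix of a periodic string can have a strictly smaller minimal period. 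The identity $\per(S[i\dd(i+\ell-1)])=\per(S[i\dd j])$ does hold here because $S[i\dd j]$ is $\ell$-periodic (any smaller period $q$ of the prefix combines with $p$ via Lemma~\ref{lem:periodicity} to give a period $\gcd(p,q)<p$ of the prefix, and since $\ell\geq p+\gcd(p,q)$ this period propagates along $S[i\dd j]$ via the $p$-period, contradicting minimality of $p$), but this propagation argument is precisely what your write-up omits. Either supply it, or redefine $p:=\per(P[x+1\dd x+\ell])$ so that minimality is by fiat, and rework the pivotal claim (which then needs a propagation step of its own to reach the maximality of $S[i\dd j]$).

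For comparison, the paper's proof avoids the pivotal claim, the telescoping, and Lemma~\ref{lem:periodicity} altogether. Assuming (in your notation) $x'\le x$, the whole occurrence $P[x\dd(x+\ell)]$ of the anchor sits inside the maximal $\ell$-periodic substring $P[x'\dd y]$, hence the anchor itself has period at most $\ell/4$; therefore $S[(i-1)\dd(i-1+\ell)]$ is $\ell$-periodic in $S$ and extends to a maximal $\ell$-periodic substring $S[i'\dd j']$ with $i'\le i-1$ and $j'\ge i-1+\ell$. This overlaps $S[i\dd j]$ in at least $\ell$ letters, so Lemma~\ref{lem:periodicoverlap} forces $S[i'\dd j']=S[i\dd j]$, contradicting $i'<i$. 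Working at the level of maximal $\ell$-periodic substrings (via Lemma~\ref{lem:periodicoverlap}) rather than at the level of individual periods and positions is what makes the paper's argument shorter and removes the need for any period-propagation claim.
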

\begin{proof}
By symmetry, it is enough to consider an anchor $H$ created because of a maximal $\ell$-periodic substring $S[i\dd j]$ such that $i>1$,
when we add $S[(i-1)\dd (i-1+\ell)]$ to ${\cal A}$. Thus, $\per(H[2\dd |H|])\leq \ell/4$ and if $P[x\dd (x+\ell)]=H$ then
$\per(P[(x+1)\dd (x+\ell)]) \leq \ell/4$, making $P[(x+1)\dd (x+\ell)]$ a substring of some maximal $\ell$-periodic substring
of $P[(x'+1)\dd y]$, where $x' \leq x$ and $y\geq x+\ell$.
If $x' < x$ then $\per(H)\leq \ell/4$. But then $H=S[(i-1)\dd (i-1+\ell)]$ can be extended to some maximal $\ell$-periodic substring
$S[i'\dd j']$ such that $i'\leq i-1$ and $j' \geq i-1+\ell$. The overlap between $S[i\dd j]$ and $S[i'\dd j']$ is at least $\ell$,
so by Lemma~\ref{lem:periodicoverlap} $i=i'$ and $j=j'$, which is a contradiction. Consequently, $x' = x$ and we obtain the lemma.
\end{proof}

By Lemma~\ref{lem:anchorsoccurrences}, the number of occurrences of all anchors in $P$ is at most two per each
maximal $\ell$-periodic substrings, so $\cO(m/\ell)$ in total. We thus obtain a set of length-$(\ell+1)$
anchors with the following properties:
\begin{enumerate}
\item The total number of occurrences of all anchors in $P$ is $\cO(m/\ell)$.
\item For every $S\in{\cal S}$, at least one of its length-$(\ell+1)$ substrings is an anchor.
\item For every $S\in{\cal S}$, at most two of its length-$(\ell+1)$ substrings are anchors.
\end{enumerate}
These properties are even stronger than what we had used in Section~\ref{sec:type1} (except that now we are working with length-$(\ell+1)$
substrings, which is irrelevant), we can now prove our final result also for strings of type 2.

\begin{restatable}{theorem}{thmtypetwo}\label{thm:type2}
An instance of the AP problem where all strings are of type 2 can be solved in $\ctO(m^{\omega-1}+N)$ time.
\end{restatable}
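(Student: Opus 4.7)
The plan is to follow the same algorithmic framework developed for Type 1 strings in Section~\ref{sec:type1}, using the set $\mathcal{A}$ of length-$(\ell+1)$ anchors constructed above. The anchor-selection step already supplies the three properties required by that framework, and in fact in a strictly stronger form: the total number of anchor occurrences in $P$ is $\cO(m/\ell)$ with no $\log m$ factor, every $S\in\mathcal{S}$ contains at least one anchor as a substring, and the total number of anchor occurrences in strings of $\mathcal{S}$ is $\cO(|\mathcal{S}|) = \cO(N/\ell)$ because each $S$ contributes at most two. So the only difference from Section~\ref{sec:type1} is that we save the logarithmic factors coming from the \textsc{NodeSelection} solution, which is clearly in our favour.

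First, I would build for every anchor $H\in\mathcal{A}$ the data structure $D(H)$ consisting of the two compact tries $T(H)$ and $T^{r}(H)$ over the suffixes $P[(i+|H|)\dd|P|]\$$ and reversed prefixes $(P[1\dd i-1])^{r}\$$ at each occurrence $P[i\dd i+|H|-1]=H$, together with the list $L(H)$ of pairs $(u,v)$ corresponding to occurrences of $H$ inside strings $S\in\mathcal{S}$. Exactly as in Section~\ref{sec:type1}, the suffix trees of $P\$$ and $P^{r}\$$, weighted-ancestor and LCA queries, and suffix links allow all $D(H)$ to be built in $\cO(m\log m + N)$ total time. Fact~\ref{fac:occurrence} continues to characterize occurrences of $S\in\mathcal{S}$ generated by $H$ as pairs $(u,v)\in L(H)$ whose subtrees in $T^{r}(H)$ and $T(H)$ share a leaf decoration.

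Next, I would apply the heavy-path decomposition of $T^{r}(H)$ and $T(H)$ and route each pair in $L(H)$ to the auxiliary lists associated with pairs of heavy paths above $u$ and $v$. Each such list becomes a 2D dominance instance between red points (from the leaf decorations in $P$) and blue points (from $L(H)$); a $\cO(\log^{2}m)$-blow-up range-tree-style recursion reduces to simpler instances in which red points already dominate blue points. By Lemma~\ref{lem:matrixdef} each simpler instance is equivalent to matrix-vector products $V_{i}=M\times U_{i}$ with $M$ of size $\lceil 5/4\cdot\ell\rceil\times\lceil 5/4\cdot\ell\rceil$ and $U_{i}$ an extracted fragment of $U$. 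The same two-regime dispatch is used: when fewer than $z$ vectors share a matrix $M$ we compute naively, charging each product to the $1$'s in $M$; otherwise we group $z$ vectors into a rectangular matrix and invoke Lemma~\ref{lem:rectangle}.

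Finally, I would apply Lemma~\ref{lem:naive} with $t=\log^{3}m$ as a preprocessing pass so we may assume $\ell\geq\log^{3}m$, and then set $z=\ell/\log^{3}m$. Summing the two regimes over all matrices $M$ and all values of $\ell$, and using the (tighter) anchor bounds above, the analysis of Theorem~\ref{thm:type1} carries through without change and gives a total running time of $\cO(m\log^{7}m + m^{\omega-1}\log^{5+3(3-\omega)}m + N) = \ctO(m^{\omega-1} + N)$. There is no real obstacle here: the only thing to verify is that working with length-$(\ell+1)$ anchors rather than length-$\ell$ anchors affects the analysis only by constants, which it does since the matrix size remains $\Theta(\ell)$ and every relevant counting bound is tighter than in the Type 1 case.
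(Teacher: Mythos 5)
Your proposal is correct and follows essentially the same route as the paper, which for Theorem~\ref{thm:type2} simply invokes the Type 1 machinery with the three (stronger, log-factor-free) anchor properties established in Section~\ref{app:type2}; you have merely spelled out the steps that the paper leaves implicit, including the irrelevance of the $\ell$ versus $\ell+1$ anchor length and the same choice $z=\ell/\log^{3}m$ after a na\"{i}ve pass for short strings.
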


\subsection{Type 3 Strings}\label{app:type3}

In this section we show how to solve a restricted instance of the AP problem where every string $S\in{\cal S}$ is of type 3, that is, 
$\per(S)\leq \ell/4$. Furthermore $|S|\in [9/8\cdot \ell,5/4\cdot\ell)$ for some $\ell \leq m$.
Recall that strings $S\in {\cal S}$ are such that every length-$\ell$ substring of $S$ is strongly periodic and, by Lemma~\ref{lem:type3}, in this case, $\per(S)\leq \ell/4$.
An occurrence of such $S$ in $P$ must be contained in a maximal $\ell$-periodic substring of $P$.  Recall that a string $T$ is called $\ell$-periodic if $|T|\geq \ell$ and $\per(T) \leq \ell/4$.
For an $\ell$-periodic string $T$, let its \emph{root}, denoted by $\wordroot(T)$, be the lexicographically smallest cyclic shift of $T[1\dd \per(T)]$.
Because $\per(T)\leq \ell/4$ and $|T|\geq\ell$ by definition, there are at least four repetitions of the period in $T$, so we can write $T=R[i\dd |R|] R^{\alpha} R[1\dd j]$, where $R=\wordroot(T)$, for some $i,j\in [1,|R|]$ and $\alpha\geq 2$. It is well known that $\wordroot(T)$ can be computed in $\cO(|T|)$ time~\cite{DBLP:journals/jal/Duval83}.

\begin{example}
Let $T=\textnormal{\texttt{babababab}}$ and $\ell=8$. We have $|T|=9\geq \ell=8$ and $\per(T) = 2 \leq \ell/4 = 2$, so $T$ is $\ell$-periodic. We have $\wordroot(T)=R=\textnormal{\texttt{ab}}$, and $T$ can be written as $T=\textnormal{\texttt{b}}\cdot(\textnormal{\texttt{ab}})^3\cdot\textnormal{\texttt{ab}}$, for $i=2$ and $j=2$.
\end{example}

We will now make a partition of type 3 strings based on their root.
We start with extracting all maximal $\ell$-periodic substrings of $P$ using Lemma~\ref{lem:ellperiodic} and
compute the root of every such substring. This can be done  in $\cO(m)$ total time because two maximal $\ell$-periodic substrings cannot overlap by more than
$\ell/2$ letters, and hence their total length is at most $3/2\cdot \ell$. We also extract the root of every $S\in{\cal S}$ in $\cO(N)$ total time. We then partition maximal $\ell$-periodic substrings of $P$ and strings $S\in{\cal S}$ into groups that have the same root. In the remaining part we describe how to process each such group corresponding to root $R$ in which all maximal $\ell$-periodic substrings of $P$ have total length $m'$, and the strings $S\in{\cal S}$ have total length $N'$.

Recall that the bit vector $U$ stores the active prefixes input to the AP problem, and the bit vector $V$ encodes the new active prefixes we aim to compute. For every maximal $\ell$-periodic substring of $P$ with root $R$ we extract the corresponding fragment of the bit vector $U$ and need to update the corresponding fragment of the bit vector $V$. To make the description less cluttered, we assume that each such substring of $P$ is a power of $R$, that is, $R^{\alpha}$ for some $\alpha \geq 4$. This can be assumed without loss of generality as it can be ensured by appropriately padding the extracted fragment of $U$ and then truncating the results, while increasing the total length of all considered substrings of $P$ by at most half of their length. In the description below, for simplicity of presentation, $U$ and $V$ denote these padded fragments of the original $U$ and $V$. When computing $V$ from $U$ we use two different methods for processing the elements $S=R[i\dd |R|]R^{\beta}R[1\dd j]$ of $\mathcal{S}$ depending on their length: either $\beta > \alpha / |R|$ (large $\beta$) or $\beta \le \alpha / |R|$ (small $\beta$).

\medskip

\noindent\textbf{Large $\beta$.} We proceed in phases corresponding to $\beta=\alpha / |R|+1,\ldots,\alpha$. In each single phase,
we consider all strings $S\in{\cal S}$ with $S=R[i\dd |R|]R^{\beta}R[1\dd j]$, for some $i$ and $j$. Let $C(\beta)$ be the set of the corresponding pairs $(i,j)$, and observe that $\sum_{\beta}|C(\beta)| \cdot |R^{\beta}| \leq N'$.
This is because the length of $R^{\beta}$ is not greater than that of $S=R[i\dd |R|]R^{\beta}R[1\dd j]$, there are $|C(\beta)|$ distinct strings of the latter form, and the total length of all $S\in\cal{S}$ is $N'$. 
The total number of occurrences of a string $S=R[i\dd |R|]R^{\beta}R[1\dd j]$ in $R^{\alpha}$ is bounded by $\cO(\alpha)$, and all such occurrences can be generated in time proportional to their number. Thus, for every $(i,j)\in C(\beta)$,
we can generate all occurrences of the corresponding string and appropriately update $V$ in $\cO(\alpha\cdot |C(\beta)|)$ total time.
\medskip

\noindent\textbf{Small $\beta$.}
We start by giving a technical lemma on the complexity of multiplying two $r\times r$ matrices whose cells are polynomials of degree up to $d$.

\begin{lemma}\label{lem:polymatrix}
If two $r\times r$ matrices over $\mathbb{Z}$ can be multiplied in $\cO(r^{\omega})$ time, then two $r\times r$ matrices over $\mathbb{Z}[X]$ with degrees up to $d$ can be multiplied in $\tilde{\cO}(dr^{\omega})$ time. 
\end{lemma}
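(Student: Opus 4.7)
The plan is to apply the classical evaluation-interpolation paradigm, lifting it from scalar polynomial multiplication to polynomial matrix multiplication. Let $A(X), B(X) \in \mathbb{Z}[X]^{r \times r}$ be the two input matrices and let $C(X) = A(X) \cdot B(X)$. Each entry of $C(X)$ is a sum of $r$ products of polynomials of degree at most $d$, so its degree is at most $2d$. Consequently, knowing $C(\omega_k)$ at any $N = 2d+1$ distinct points $\omega_1,\ldots,\omega_N$ is enough to recover $C(X)$ via interpolation.

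First I would round $N$ up to a power of two and take the $\omega_k$ to be the $N$-th roots of unity in a ring supporting FFT (over $\mathbb{C}$ with sufficient precision, or modulo one or more primes admitting a primitive $N$-th root of unity). For each of the $r^2$ entries of $A$ and of $B$, I run an FFT in $\cO(d \log d)$ time to obtain the entry's values at all $N$ points, for a preprocessing cost of $\cO(r^2 d \log d)$ per matrix. This yields, for each evaluation point $\omega_k$, two scalar $r \times r$ matrices $A(\omega_k)$ and $B(\omega_k)$.

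Next, at each evaluation point I multiply these two scalar matrices using the assumed $\cO(r^\omega)$-time algorithm, obtaining $C(\omega_k)$. Summed over the $N = \Theta(d)$ points, this dominates and takes $\cO(d \cdot r^\omega)$ time. Finally, an inverse FFT applied entrywise recovers each of the $r^2$ polynomials in $C(X)$ from its $N$ values in $\cO(d \log d)$ time per entry, for another $\cO(r^2 d \log d)$. Summing the three stages gives $\cO(r^2 d \log d + d\, r^\omega) = \tilde{\cO}(d\, r^\omega)$, where the simplification uses $\omega \geq 2$ so that $r^2 \leq r^\omega$.

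The only real subtlety is numerical: carrying out FFT over $\mathbb{C}$ requires $\Theta(\log(\,\cdot\,))$ bits of precision per word to safely recover the integer coefficients of $C(X)$. In the intended application the relevant coefficients are polynomially bounded (they arise from bit-vector convolutions), so one can work either in floating point with $\cO(\log m)$ bits or modulo a few primes and recombine by the Chinese Remainder Theorem; in either case the extra polylogarithmic overhead is absorbed by the $\tilde{\cO}$ notation. This is the main thing to verify carefully, but once the precision issue is handled the complexity bound follows immediately from the three-stage decomposition above.
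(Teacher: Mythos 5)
Your proof is correct and follows essentially the same evaluation–interpolation route as the paper: evaluate the polynomial entries at roots of unity via FFT, multiply the resulting scalar matrices with the $\cO(r^{\omega})$ algorithm, and interpolate back, giving $\tilde{\cO}(r^2 d + d r^{\omega}) = \tilde{\cO}(dr^{\omega})$. The only difference is cosmetic — you are a bit more careful to use $2d+1$ evaluation points (the paper says $2d$) and you explicitly flag the numerical-precision issue, both of which are reasonable refinements of the same argument.
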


\begin{proof}
Let $A$ and $B$ be two $r \times r$ matrices over $\mathbb{Z}[X]$ with degrees up to $d$.
We reduce the product $A\cdot B=C$ to $2d$ products of $r \times r$ matrices over $\mathbb{Z}$ as follows. 
We evaluate the polynomials of each matrix in the complex $(2d)$th roots of unity: let $A_i$ and $B_i$ be the matrices obtained by evaluating the polynomials of $A$ and $B$ in the $i$-th such root, respectively.
We then perform the $2d$ products $A_1\cdot B_1,\ldots,A_{2d}\cdot B_{2d}$ to obtain matrices $C_1,\ldots,C_{2d}$: the $2d$ values $C_1[i,j],\ldots,C_{2d}[i,j]$ are finally interpolated to obtain the coefficient representation of $C[i,j]$, for each $i,j=1,\ldots,r$, in $\cO(d\log d)$ time for each polynomial~\cite{FFT}.
Since we perform $2d$ products of matrices in $\mathbb{Z}^{r\times r}$, and we evaluate and interpolate $r^2$ polynomials of degree up to $2d$, the overall time complexity is $2d\cO(r^{\omega})+r^2\cO(d\log d)=
\tilde{\cO}(dr^{\omega})$.
\end{proof}

Unlike in the large $\beta$ case, we process $\beta=2,\ldots,\alpha/|R|$ simultaneously as follows. For each $\beta$ we construct an $|R|\times |R|$ matrix $M_{\beta}$, with $M_{\beta}[i,j]=1$ if and only if $(i,j)\in C(\beta)$ (and $M_{\beta}[i,j]=0$ otherwise), and collect them in a single 3D matrix $M\in\{0,1\}^{|R|\times|R|\times(\alpha/|R|-1)}$ with the third dimension corresponding to the value of $\beta$. We then create another
$\alpha \times |R|$ matrix, denoted by $M'$, by setting $M'[\gamma,i]=1$ if and only if $U[\gamma\cdot |R|+i-1]=1$ (and $M_{\beta}[i,j]=0$ otherwise).
Observe that $M'$ can be interpreted as a vector of length $|R|$ over $\mathbb{Z}[X]$ with degrees up to $\alpha$, and $M$ as an $|R|\times|R|$ matrix over $\mathbb{Z}[X]$ with degrees up to $\alpha/|R|$: in this way, $x^{\gamma}$ appears with non-zero coefficient in the polynomial at $M'[i]$ if and only if $U[\gamma\cdot |R|+i-1]=1$, and $x^{\beta}$ appears with non-zero coefficient in the polynomial at $M[i,j]$ if and only if $(i,j)\in C(\beta)$.
$M$ can be constructed in total $\cO(N')$ time by first iterating over all $S\in{\cal S}$ and adding $x^{\beta}$ to the polynomial at $M[i,j]$, where $S=R[i\dd |R|]R^{\beta}R[1\dd j]$, and then extracting a prefix of each polynomial consisting of monomials of degree less than $\alpha/|R|$.

The product $M'\cdot M = M''$ allows us to recover the updates to $V$ by observing that $V[(q+1)\cdot |R|+j]=1$ if and only if $x^{q}$ appears with non-zero coefficient in the polynomial at $M''[j]$. We aim at reducing this product to a matrix-matrix product over $\mathbb{Z}[X]$ with degrees up to $\alpha/|R|$, so as to compute it efficiently by applying Lemma~\ref{lem:polymatrix}.

The idea now is to decompose the columns of $M'$ into $|R|$ chunks of size $\alpha/|R|$ in order to transform it into another 3D matrix. To this end, we transform $M'$ into an $|R|\times|R|\times (\alpha/|R|)$ matrix $A$ by setting $$A[k,i,\gamma]=1\Leftrightarrow M'[(k-1)\alpha/|R|+\gamma,i]=1\Leftrightarrow U[(k-1)\alpha/|R|+\gamma+i-1]=1.$$
By interpreting $A$ as an $|R|\times|R|$ matrix over $\mathbb{Z}[X]$ with degrees up to $\alpha/|R|$, and interpreting $M'$ as a vector of length $|R|$ over $\mathbb{Z}[X]$ with degrees up to $\alpha$, we have that the first row of $A$ consists of the coefficients of 
$x^{1},\ldots, x^{\alpha/|R|}$ of each of the $|R|$ polynomials of $M'$, the second row consists of the coefficients of $x^{\alpha/|R|+1},\ldots, x^{2\alpha/|R|}$ of each of the $|R|$ polynomials of $M'$, and so on. In general, $A[k,i]$ consists of the coefficients of $x^{(k-1)\alpha/|R|+1},\ldots, x^{k\alpha/|R|}$ of polynomial $M'[i]$.

The product $A\cdot M=C$ still allows us to recover the updates of $V$, by observing that $V[((k-1)\alpha/|R|+1+q+1)|R|+j]=1$ if $x^{q}$ appears with non-zero coefficient in the polynomial at $C[k,j]$. 
This is because at row $A[k,\cdot]$ there are the coefficients that correspond to $U[\gamma\cdot |R|+i-1]$ for $\gamma=(k-1)\alpha/|R|+1,\ldots,k\alpha/|R|$ and $i=1,\ldots,|R|$, and hence a $x^{q}$ appearing at $C[k,j]$ is equivalent to a $x^{q+(k-1)\alpha/|R|+1}$ at $M''[j]$.

\medskip

We are now in a position to prove the following result for type 3 strings.

\begin{restatable}{theorem}{thmtypethree}\label{thm:type3}
An instance of the AP problem where all strings are of type 3 can be solved in $\ctO(m^{\omega-1}+N)$ time.
\end{restatable}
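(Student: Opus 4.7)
The plan is to process each root group independently, combining a direct enumeration for large $\beta$ with a single polynomial matrix multiplication for small $\beta$. After extracting all maximal $\ell$-periodic substrings of $P$ via Lemma~\ref{lem:ellperiodic} and bucketing them together with the strings $S\in{\cal S}$ by their common root, I would, for each root group, pad every maximal $\ell$-periodic substring of $P$ to a full power $R^{\alpha}$ (at most doubling its length and hence the total $\cO(m)$ budget per level), and then split the strings $S=R[i\dd|R|]R^{\beta}R[1\dd j]$ of the group by $\beta>\alpha/|R|$ (large) versus $\beta\leq\alpha/|R|$ (small), exactly as set up in the excerpt.

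For the large-$\beta$ case, the bound $\sum_{\beta}|C(\beta)|\cdot|R^{\beta}|\le N'$ noted in the excerpt implies $\sum_{\beta>\alpha/|R|}|C(\beta)|\le N'/\alpha$, because $\beta|R|\ge\alpha$ for every such $\beta$. Since each string of length $\beta|R|$ has at most $\alpha-\beta+1=\cO(\alpha)$ occurrences in $R^{\alpha}$, which can be enumerated in time proportional to their number, and each such occurrence triggers an $\cO(1)$ update to $V$, the total cost for this case in a single group is $\cO(\alpha\cdot N'/\alpha)=\cO(N')$, which sums to $\cO(N)$ across all root groups at the current level~$\ell$.

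For the small-$\beta$ case, both $A$ and $M$ constructed in the excerpt are $|R|\times|R|$ matrices over $\mathbb{Z}[X]$ with coefficients of degree at most $\alpha/|R|$. Applying Lemma~\ref{lem:polymatrix} computes $C=A\cdot M$ in $\tilde{\cO}((\alpha/|R|)\cdot|R|^{\omega})=\tilde{\cO}(\alpha\cdot|R|^{\omega-1})$ time, and the decoding rule from the excerpt (namely $V[((k-1)\alpha/|R|+1+q+1)|R|+j]=1$ whenever $x^{q}$ has a nonzero coefficient in the polynomial at $C[k,j]$) then takes time proportional to the number of coefficients of $C$, which is absorbed by the multiplication cost.

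To finish, I would sum over root groups at the fixed level $\ell$: the small-$\beta$ cost becomes $\tilde{\cO}\!\left(\sum_R m'_R\cdot|R|^{\omega-2}\right)$, where $m'_R=\alpha|R|$ is the padded length of the group's $P$-portion. Using $|R|\le\ell/4\le m$ and the bound $\sum_R m'_R=\cO(m)$—which holds because distinct maximal $\ell$-periodic substrings of $P$ overlap by less than $\ell/2$ by Lemma~\ref{lem:periodicoverlap}, so padding to powers of $R$ keeps their total length $\cO(m)$—this gives $\tilde{\cO}(m\cdot m^{\omega-2})=\tilde{\cO}(m^{\omega-1})$. Adding the $\cO(N)$ large-$\beta$ contribution and the $\cO(m+N)$ preprocessing for extracting substrings, computing roots, and partitioning by root yields the claimed $\tilde{\cO}(m^{\omega-1}+N)$ bound. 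The main obstacle I expect is the bookkeeping for the polynomial encoding: one must check that the chunking of $U$ into $|R|$ strips of length $\alpha/|R|$, the padding of the $R^{\alpha}$ substrings, and the conversion between the index $(k,q,j)$ of $C$ and the target position in $V$ are all aligned so that every triple $(i,j,\beta)$ contributing to a new active prefix is counted exactly once—this is index arithmetic rather than a combinatorial difficulty, but it is where a naive write-up is most likely to go astray.
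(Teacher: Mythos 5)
Your proposal is correct and follows essentially the same route as the paper's proof: same root-grouping, same large-vs-small $\beta$ split, same reduction of the small-$\beta$ case to an $|R|\times|R|$ polynomial matrix product via Lemma~\ref{lem:polymatrix}, and the same $\cO(N')$ accounting for the large-$\beta$ enumeration. The only cosmetic difference is in the final summation over root groups, where you bound $\sum_R m'_R|R|^{\omega-2}$ directly via $|R|\le m$ while the paper invokes convexity of $x^{\omega-1}$; both yield the same $\ctO(m^{\omega-1})$ bound.
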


\begin{proof}
Recall that we consider strings $S$ of type $3$ with root $R$ and substrings of $P$ with root $R$ together. We first analyze the time to process a single group containing a number of substrings of $P$ of total length $m'$ and a number of strings $S\in{\cal S}$ of total length $N'$. 
Let us denote by $R^{\alpha_{i}}$ the $i$-th considered substring of $P$ and further define $\alpha=\sum_{i}\alpha_{i}=m'/|R|$.

If $\beta > \alpha/|R|$ we use the first method and spend $\cO(\alpha_{i} \cdot |C(\beta)|)$ time, where $C(\beta)$ is the set of $(i,j)$ for this specific $\beta$. The overall time used for all applications of the first method is:
\begin{eqnarray*}
\sum_{i}\cO(\alpha_{i} \cdot \sum_{\beta > \alpha/|R|} |C(\beta)|) &=& \cO(\alpha/ |R^{\alpha/|R|}|\cdot \sum_{\beta > \alpha/|R|} |C(\beta)|\cdot |R^{\alpha/|R|}|) \\
&=& \cO(\sum_{\beta > \alpha/|R|} |C(\beta)| \cdot |R^{\beta}|) = \cO(N'),
\end{eqnarray*}
using the fact that $\sum_{\beta}|C(\beta)| \cdot |R^{\beta}| \leq N'$ and $\alpha/ |R^{\alpha/|R|}|=\alpha/(\alpha/|R|)|R|=\cO(1)$.

For each $\alpha_i$, we process together all $\beta \le \alpha_i/|R|$ using the second method, and we need to multiply two $|R|\times |R|$ matrices of polynomials of degree up to $\alpha_i/|R|$, that we can build in time $\cO(N')$ and multiply in time $\cO(|R|^{\omega}\cdot\alpha_i/|R|+|R|^2(\alpha_i/|R|)\log (\alpha_i/|R|))$ by Lemma~\ref{lem:polymatrix}.
The overall time used for all applications of the second method is:

\begin{eqnarray*}
\cO(N')+\sum_{i}\cO(|R|^{\omega}\cdot\alpha_i/|R|+|R|^2(\alpha_i/|R|)\log (\alpha_i/|R|))=\cO(|R|^{\omega-2}m'+m'\log m'+N'),
\end{eqnarray*}

using the fact that $\alpha=m'/|R|$. Since $|R|\le m'$, this is in fact $\cO((m')^{\omega-1}+m'\log m'+N')$.

Because all values of $N'$ sum up to $N$ and all values of $m'$ sum up to $\cO(m)$, by convexity of $x^{\omega-1}$ we obtain that the overall time complexity is $\ctO(m^{\omega-1}+N)$.
\end{proof}

\subsection{Wrapping Up}
\label{ssec:wrap}

In Sections~\ref{sec:type1},~\ref{app:type2} and~\ref{app:type3} we design three $\ctO(m^{\omega-1}+N)$-time algorithms for an instance of the AP problem where all strings are of type 1, 2 and 3, respectively. We thus obtain Theorem~\ref{thm:algo}, and using the fact that $\omega<2.373$~\cite{DBLP:conf/issac/Gall14a,DBLP:conf/stoc/Williams12} we obtain the following corollary.

\begin{corollary}
The EDSM problem can be solved on-line in $\cO(nm^{1.373} + N)$ time.
\end{corollary}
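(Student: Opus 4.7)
The plan is to derive this as an immediate numerical consequence of Theorem~\ref{thm:algo}, which already provides an on-line algorithm running in $\ctO(nm^{\omega-1}+N)$ time. All that remains is to convert the $\ctO$ notation to $\cO$ and to replace the symbolic exponent $\omega-1$ by the concrete value $1.373$, using a current upper bound on $\omega$ from the literature.

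First, I would invoke the results of~\cite{DBLP:conf/issac/Gall14a,DBLP:conf/stoc/Williams12}, which establish $\omega<2.373$; in particular there is a constant $\eta>0$ with $\omega-1\le 1.373-\eta$. Second, tracing back the proof of Theorem~\ref{thm:algo} through Theorems~\ref{thm:type1},~\ref{thm:type2}, and~\ref{thm:type3}, the polylog factors hidden by the $\ctO$ notation multiply only the $m^{\omega-1}$ summand (and asymptotically smaller terms), while the $N$ summand stays linear. Hence the running time is of the form $\cO(nm^{\omega-1}\log^{k}m+N)$ for some fixed integer $k$.

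Finally, I would absorb the polylog into the exponent via $\log^{k}m=\cO(m^{\eta})$ for the constant $\eta>0$ above, giving
\[
nm^{\omega-1}\log^{k}m \;=\; \cO(nm^{\omega-1+\eta}) \;=\; \cO(nm^{1.373}),
\]
which, combined with the untouched $N$ term, yields the claimed bound. The only point to verify is that the known numerical upper bound on $\omega$ is \emph{strictly} less than $2.373$, so that $\eta>0$ and the polylog can really be absorbed; this is granted by either of the two cited papers, so no further argument is required.
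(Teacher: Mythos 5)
Your proposal is correct and mirrors the paper's own (brief) argument: the paper obtains the corollary directly from Theorem~\ref{thm:algo} and explicitly remarks that the polylog factors are shaved precisely because the inequality $\omega < 2.373$ is strict, which is the same absorption of $\log^k m$ into $m^\eta$ that you spell out.
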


Note that the polylog factors are shaved from $\ctO(nm^{\omega-1}+N)$ by using the fact that the inequality of $\omega<2.373$ is strict.

\section{Final Remarks}
Our contribution in this paper is twofold. First, we designed an appropriate reduction showing that a combinatorial algorithm solving the EDSM problem in $\cO(nm^{1.5-\epsilon} + N)$ time, for any $\epsilon>0$, refutes the well-known BMM conjecture. Second, we designed a non-combinatorial $\ctO(nm^{\omega-1}+N)$ -time algorithm to attack the same problem. By using the fact that $\omega<2.373$, our algorithm runs in $\cO(nm^{1.373} + N)$ time thus breaking the combinatorial conditional lower bound for the EDSM problem.
Let us point out that if $\omega=2$ then our algorithm for the AP problem is time-optimal up to polylog factors.

\section*{Acknowledgments}

GR and NP are partially supported by MIUR-SIR project CMACBioSeq ``Combinatorial methods for analysis and compression of biological sequences’’ grant n.~RBSI146R5L.

\vspace{0.2cm}
\begin{minipage}{0.1\textwidth}
\includegraphics[height=1cm]{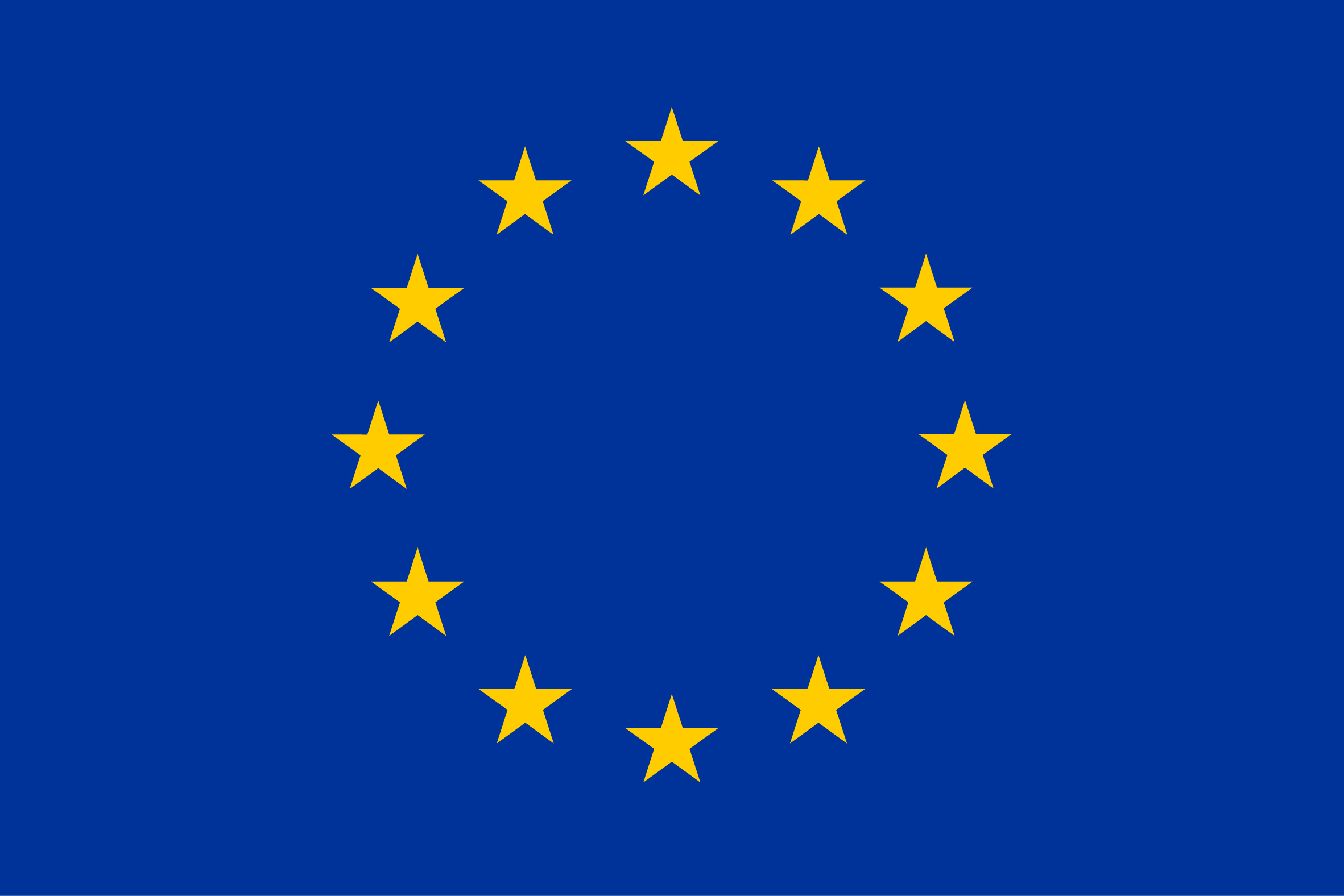}
\end{minipage}
\hfill
\begin{minipage}{0.85\textwidth}
This project has received funding from the European Union's
Horizon 2020 research and innovation programme under the
Marie Sk\l{}odowska-Curie grant agreement No 872539.
\end{minipage}

\bibliographystyle{plain}
\bibliography{references}

\end{document}